\def\d{\mbox{d}}
\def\half{\hbox{$1\over2$}}
\def\qrts{\hbox{$1\over4$}}
\title{On a Class of Objective Priors from Scoring Rules}
\author[1\authfn{1}]{Fabrizio Leisen}
\author[1\authfn{2}]{Cristiano Villa}
\author[2\authfn{2}]{Stephen Walker}
\affil[1]{School of Mathematics, Statistics and Actuarial Science, Univesity of Kent, Canterbury, UK.}
\affil[2]{Department of Mathematics, University of Texas at Austin.}
\runningauthor{Author One et al.}
\begin{document}
\raggedbottom
\maketitle

\begin{abstract}
Objective prior distributions represent an important tool that allows one to have the advantages of using the Bayesian framework even when information about the parameters of a model is not available. The usual objective approaches work off the chosen statistical model and in the majority of cases the resulting prior is improper, which can pose limitations to a practical implementation, even when the complexity of the model is moderate. In this paper we propose to take a novel look at the construction of objective prior distributions, where the connection with a chosen sampling distribution model is removed. We explore the notion of defining objective prior distributions which allow one to have some degree of flexibility, in particular in exhibiting some desirable features, such as being proper, or centered on specific values which would be of interest in nested model comparisons. The basic tool we use are proper scoring rules and the main result is a class of objective prior distributions that can be employed in scenarios where the usual model based priors fail, such as mixture models and model selection via Bayes factors. In addition, we show that the proposed class of priors is the result of minimising the information it contains, providing solid interpretation to the method.

\keywords{Calculus of variation, Differential entropy, Euler--Lagrange equation, Fisher information, Invariance, Objective Bayes, Proper scoring rules}
\end{abstract}

\section{Introduction}\label{sc_introduction}
With the ever increasing popularity of Bayesian methods, attributable largely to the advent of Markov chain Monte Carlo methods and other sampling techniques, the need for default, otherwise known as objective or noninformative, priors is also in demand. Model based objective priors, such as the reference prior \citep{BBS2009} and Jeffreys prior \citep{Jeff1961}, are commonly used when available. However, as models become larger and more complex, so it is that such priors are becoming more difficult to obtain, if not altogether unavailable. Indeed, it is our contention that model based objective priors have now reached their natural ceiling with little progress or advances in recent years. Some recent developments include a class of prior for hierarchical models, introduced by \cite{Simpson2017}, though these penalizing complexity priors are not considered objective in the usual sense. For a recent comprehensive review of objective Bayesian procedures we refer the reader to \cite{Consonni2018}.

Our observation is that limits to the progress in the research on objective priors is connected to their improperness. In fact, with very few exceptions, objective priors are improper. Although this may not represent a problem, as long as the posterior is proper, it causes severe limitations to the use of objective prior distributions, as we discuss in Section \ref{sc_improperness}. Indeed, the improperness of objective priors is the main motivation which brought us to investigate a novel approach to derive objective priors. 

Before proceeding further, it is important to provide some background. We argue that the Bayesian prior is given by the probability measure $\Pi$ on a suitable space of density functions and constructed from statistical model $f(\cdot|\theta)$, $\theta\in\Theta$, and probability density $p(\theta)$ on $\Theta$, via
$$\Pi(f \in A)=\int_{\{\theta:\,f(\cdot|\theta)\in A\}}p(\theta)\,d\theta,$$
for all measurable sets $A$. This is a more direct interpretation of the prior and avoids the inconvenient separation between Bayesian parametric and nonparametric methods. Indeed, it is useful to visualize Bayesian inference as the generation of a random density and associated functions; even if one is only considering random normal density functions. From this perspective, the two components that are traditionally known as the likelihood, $f(\cdot|\theta)$, and the prior, $p(\theta)$, are mere tools used for constructing $\Pi$ and, perhaps, the reasons for this nomenclature are more historical than accurate.  

Hence, the prior $\Pi$ can never be assigned completely on objective grounds, as the function $f(\cdot|\theta)$ is assigned as a result of a subjective choice, based on good reasons. However, the function $p(\theta)$ can be derived through some objective method, making the actual prior $\Pi$ a combination of subjective and objective choices. This said, it is puzzling to understand why $p(\theta)$ and $f(\cdot|\theta)$ need to feed off each other in any way; other than the $\theta$ in both needs to be sitting in the same parameter space $\Theta$.

Consequently, in this paper we investigate the possibility of defining objective prior distributions that are not model dependent and based on the sole knowledge of the parameter space $\Theta$. The $p(\theta)$ using only $\Theta$ loses the connection with the subjective component of $\Pi$ and could be argued as a consequence to be \textit{more objective}. Conversely, model based priors, such as the Jeffreys prior or the reference prior, necessarily include the subjective choice of the model. In fact, models are by and large misspecified and, consequently, model based priors are propagating this misspecification. So, while a model based prior reinforces the connection between the misspecified model and the prior itself, a prior that depends on the parameter space loses only the connection.

\subsection{Improper priors}\label{sc_improperness}
Most limitations to the use of model based objective priors lie in the fact that they are improper. In Section \ref{sc_motivexamples} we discuss some motivating examples where the lack of properness in the prior makes them unsuitable. Here, we would like to present more general issues widely discussed in the literature.

A thorough discussion of the problems that improper priors cause can be found in \cite{KassWass1996}, where they illustrate the following five issues:
\begin{enumerate}
\item Incoherence, strong inconsistencies and nonconglomerability;
\item Dominating effect of the prior;
\item Inadmissibility;
\item Marginalisation paradoxes;
\item Improper posteriors.
\end{enumerate}
Although points 1. to 4. are undoubtedly important and noteworthy, it is probably the last issue that requires careful consideration. The main concern is that, as of today, general results that allow one to assess if a given improper prior yields a proper posterior are yet to be found. Research has progressed on a case by case basis; for example, see \cite{IbraLaud1991} for the use of Jeffreys prior in generalised linear models, extended to overdispersed models of the same kind by \cite{Deyetal1993}, \cite{NataMcC1995}, \cite{BergerStraw1993}, and \cite{YC1995}. More recently, \cite{RubioSteel2018} describe general conditions to use improper priors for linear mixed models with longitudinal and survival data. However, the key point is that when one wishes to use (improper) objective priors, unless they have been used before for that specific case, extensive work is required to ensure that the posterior is proper. As one would expect, the task becomes more onerous the more complex the model is. But, even for simple models, the risk is high; as, for example, the discussion in \cite{ValleSteel2013} about the use of the Jeffreys rule prior for the Student-$t$ regression model derived in \cite{Fonseca2008} shows.

The above limitations imposed by improperness of model based objective priors lead us to the development of a method that allows objectivity of the prior and properness to coexist.

\subsection{Motivating examples}\label{sc_motivexamples}
The following four sections give an idea of the challenges that the use of improper priors may pose, even for simple problems, and highlight the need for objective prior distributions that have the flexibility, when needed, of being proper.
\subsubsection{Mixture models}\label{example_1}
A powerful tool in statistical analysis is represented by mixtures models. Due to their flexibility, mixtures of probability distributions allow  models suitable for complex data by building on simple components. As an example, consider a mixture of normal densities,
\begin{equation}\label{eq_mixtureintro}
f(x) =\sum_{j=1}^k \omega_j N(x|\mu_j,\sigma_j^2),
\end{equation}
where $k$ is a positive integer, including $\infty$, and the $(\omega_j,\mu_j,\sigma_j)$ are the collection of parameters of the mixture model. Even under the scenario when $k$ is known, the reference prior for model \eqref{eq_mixtureintro} has yet to be derived, and Jeffreys prior can only be obtained under specific conditions; see \cite{Grazian2015}. Furthermore, this type of model is subject to other issues related to non-identifiability and unbounded likelihoods, among others. The issues mainly rise from the fact that improper priors may not be appropriate as we might not observe outcomes from every component of the mixture \citep{Titt1985}. For example, \cite{Grazian2015} show that Jeffreys prior is suitable for mixtures of normal densities only in certain circumstances; that is, when the unknown parameters are the weights. If the unknown parameters are the means or the variances, then using Jeffreys prior may lead to improper posteriors. In particular, if the unknown parameters are the means only, proper posteriors exist only when the number of mixture components is at most two; while, if the unknown parameters are the variance, or the mean and the variances, then Jeffreys prior is not suitable for inference. The above issues can be generalised to apply to any type of mixture model.

\subsubsection{Bayes factors}\label{example_2}
Another simple case where objective priors are problematic is in model comparison (or selection) via Bayes factors. So, if we wish to compare model $M_1=\{f_1(x|\pmb\theta_1),p_1(\pmb\theta_1)\}$ to model $M_2=\{f_2(x|\pmb\theta_2),p_2(\pmb\theta_2)\}$, where both $\pmb\theta_1$ and $\pmb\theta_2$ are vector of parameters with some elements not in common, then the Bayes factor
$$B_{12} = \frac{\int f_1(x|\pmb\theta_1)p_1(\pmb\theta_1)\,d\pmb\theta_1}{\int f_2(x|\pmb\theta_2)p_2(\pmb\theta_2)\,d\pmb\theta_2},$$
is, in general, meaningful if the priors assigned to non-common parameters are proper. If not, then the arbitrary multiplicative constant up to which they are defined do not cancel and the Bayes factor depends on an arbitrary constant. Solutions to the issue have been proposed, see for example \cite{OHagan1995} and \cite{BergerPericchi1996}, however, the resulting procedures are still quite tedious to implement and are limited to simple models. By and large the above issue stays; however, \cite{Bergeretal1998} give an exception of the issue.

\subsubsection{Hierarchical models}\label{example_3}
Improper priors may not always be used with hierarchical models. Consider the following simple example from \cite{KassWass1996},
\begin{align*}
y_i|\mu_i,\sigma &\sim N(\mu_i,\sigma^2) \\
\mu_i|\tau &\sim N(\mu,\tau^2),
\end{align*}
for $i=1,\ldots,n$. Assuming $\sigma$ is known, one could adopt the objective prior $\pi(\mu,\tau)\propto\tau^{-1}$, which is improper. However, the (marginal) posteriors are improper. One way of overcoming this issue is to use proper priors that approximate the behaviour of the objective priors, which means proper and vague densities (or uniform distributions on a compact set); due to the arbitrariness of the choices involved, the above are obviously not viable solutions in an objective context.

\subsubsection{Other issues related to the use of improper priors}\label{example_4}
There are many issues, practical and foundational, deriving from the use of improper priors. For example, the marginalisation paradox \citep{StoneDavid72}, related to the use of improper priors for multi-dimensional parameter spaces, or the Stein's paradox \citep{BS1994}, related to the use of vague proper priors (or uniform priors on compact sets). Another issue goes under the name of \emph{strong inconsistency}, illustrated by the following example taken from \cite{Syver1998}. Consider observations from the normal density $N(\mu,\sigma^2)$, where the variance is known, and define the event $E = \{|\overline{x}|\geq\mu\}$. From the model, we have
$$P(E|\mu) = \frac{1}{2} + \Phi\left(-2|\mu|\sqrt{n}/\sigma\right)>\frac{1}{2}.$$
Since $P(E|\mu)>1/2$ for all the values of $\mu$, we conclude that $P(E)>1/2$, and the posterior for $E$, assuming $p(\mu)\propto1$, is given by
$$P(E|x) = \frac{1}{2} - \Phi\left(-2|\overline{x}|\sqrt{n}/\sigma\right)<\frac{1}{2}.$$
As $P(E|x)<1/2$ for all values of $x$, we conclude that $P(E)<1/2$, showing inconsistency between the sampling distribution and the posterior.

\subsection{The idea:  Overview}\label{sc_ideaintro}
In this section we give an overview of the idea we propose to derive a class of objective prior distributions that depends on the parameter space only. While a formal presentation will be discussed later on, we deem it appropriate to present, at least at an intuitive level, the ideas here.

The key to the idea is to consider a loss function $l(\theta,p(\theta))$ which penalizes for each $\theta\in\Theta$ a choice of a prior density $p(\theta)$. The objective criterion is then based on the idea of finding the class of $p$ which makes
$l(\theta,p(\theta))$ constant. For obvious reasons, the loss function should have the following property 
\begin{equation}\label{eq:propersocring}
\int l(\theta,p(\theta)) q(\theta)\, \d\theta \geq \int l(\theta,q(\theta))q(\theta)\, \d\theta,
\end{equation}
for all $q$'s representing a density for the $\theta$. In other words, if a ``true'' density for $\theta$ exists, the expected loss should be minimised when such a density is chosen. The condition in \eqref{eq:propersocring} identifies a particular class of loss functions, known as \emph{proper scoring rules}. One way of interpreting (proper) scoring rules is as loss functions that measure the quality of a quoted density $p$ for an uncertain quantity $\theta$; see for example \cite{Parry2012}. We indicate proper scoring rules as $S(\theta,p)$, and we ask it to be constant for all $\theta\in\Theta$. So we set
$$S(\theta,p) = \mbox{constant} \qquad \forall\theta\in\Theta,$$
and the densities satisfying the above equality identify a class of objective priors. We set the constant to 1 and show later that this choice is without loss of generality. The criterion defining this class of priors is clearly objective, for if the scoring rule were not constant, some parts of the space $\Theta$ would be given preference above others.

As discussed in \cite{Parry2012}, any proper scoring rule is equivalent to the \emph{log score}, $-\log p(\theta)$, also known as the self information loss function. The log score has the property of depending on $p$ only through its value at $\theta$, which is known as the \emph{local} property. However, the above scoring rule lacks the flexibility for assigning an objective prior. For if we set $-\log p(\theta)=\mbox{constant}$, we only achieve $p(\theta)\propto 1$.

To solve this, we consider additionally the Hyv\"{a}rinen scoring rule \citep{Hyva2005}, which makes use of the first two derivatives of $p$, written as $p'$ and $p''$. We then have a scoring rule $S(\theta,p)$ which has two components; the log score and the Hyv\"{a}rinen score. Finding solutions to $S(\theta,p)=1$ will now involve solving a second order differential equation and we obtain the class of prior through the two constants connected with the two derivatives.

While the prior is deemed objective through the setting of the scoring rule to be a constant, we show that the prior distribution which solves $S(\theta,p)=1$ has an alternative derivation using variational methods; that is $S(\theta,p)=1$ is a solution to the Euler--Lagrange equation for minimising
$\int_\Theta L(\theta,p,p')\,\d\theta$,
where $L$ represents information in $p$, indeed a combination of the differential entropy information, given by $\int p(\theta)\log p(\theta)\,\d\theta$, and the Fisher information, given by $\int p^\prime(\theta)^2/p(\theta)\,\d\theta$. There is then an elegant alternative interpretation of obtaining the class of objective prior, which involves information aspects of the prior distribution itself.

In higher dimensions, where the parameter space is $(\theta_1,\ldots,\theta_k)$, and there are no constraints between parameters, we assume prior independence among the components and construct the prior as
$$p(\theta_1,\ldots,\theta_k) = \prod_{j=1}^k p_j(\theta_j),$$
where each $p_j(\theta_j)$ is a prior derived with our proposed approach. 

While we could theoretically obtain the multivariate prior it is convenient to assume independence, as is commonly done with objective priors, for example as with the independent Jeffreys prior. Indeed, we argue that in the absence of any prior information about possible constraints between the parameter spaces, the assumption of independence is an appropriate representation of absence of information. Note that, although reference priors allow us to obtain priors for multidimensional parameter spaces, they have the downside, when it is possible to find them, in that the yield a prior which depends on the order in which the parameters are considered. Furthermore, it is not uncommon to see the use of independence Jeffreys prior for multidimensional cases, as in \cite{Fonseca2008} or \cite{RB2014}, given that it can still remain the best (or only) option in this situations. Nevertheless, in Section 2 we do describe the full multivariate solution.

\subsection{Invariance}\label{sc_invariance}
A fundamental point of discussion about prior distributions and, in particular, objective prior distributions, is invariance. Indeed, Jeffreys' rule to derive a prior distribution for the parameters of a given model is based on an invariance requirement, in particular on invariance under one-to-one reparameterizations. Also, other common objective priors, such as reference priors, have been shown to be invariant and the same apply, for example, to the priors in \cite{Simpson2017}.

Here we discuss invariance from two opposite perspectives: that it is not important, and that it is important.
Before discussing this apparent contradiction, we need to point out that we define the objective prior by setting the scoring rule equal to a constant, that is $S(\theta,p(\theta))=\mbox{constant},$
is invariant under location transformations.

The core of the discussion about invariance revolves around transformations that are not of the location type. In general, we have the choice of the model $f(\cdot|\theta)$ defining the parameter space $\Theta$ which, in turn, defines the prior $p_\theta(\theta)$. For a non location transformation, say $\phi=\phi(\theta)$, it is that $f(\cdot|\phi)$ defines $p_\phi(\phi)$ via $S(\phi,p_\phi(\phi))=\mbox{constant}$ and, in general
\begin{equation}\label{eq:invariance}
\int_{\{\theta:\,f(\cdot|\theta)\in A\}}p_\theta(\theta)\, \d\theta \ne \int_{\{\phi:\,f(\cdot|\phi)\in A\}}p_\phi(\phi)\, \d\phi,
\end{equation}
so the prior distributions under the two scenarios are different. As an example, consider the transformation from a probability (e.g. the parameter of a Bernoulli distribution) to negative log; i.e. $\phi=-\log\theta$, for which the parameter space changes from $(0,1)$ to $(0,\infty)$.

The question is whether this has any practical implications given that only one of the two priors in \eqref{eq:invariance} will be used. Current model based objective procedures are bound to throw away some coherence properties to achieve invariance, see \cite{KassWass1996}. However, our point is that there is no practical consequence of any relevance arising from the lack of invariance, given that, as mentioned above, a single parameterization will be used. For example, in the case the chosen model is the normal density, one either considers the precision parameter or the variance parameter, not both.

The above points of discussion are concerned with the perspective that invariance is not important. To consider the opposite point of view let us assume that there is a canonical parameterization for the model $f(\cdot|\theta)$. Certainly, for most models the set of parameters for which priors would be assigned is obvious. For example, the exponential family has
$$f(x|\theta)=h(x)\,\exp\left\{\sum_{j=1}^p \theta_j\,T_j(x)-A(\theta)\right\}$$
with $\theta=(\theta_1,\ldots,\theta_p)$ being the canonical parameterization. 
We can then define the canonical objective prior for statistical model $f(\cdot|\theta)$, $\theta\in\Theta$, as
$$p_{\Theta}(\theta)=\prod_{j=1}^p p_{\Theta_j}(\theta_j),$$
where
$\Theta=\otimes_{j=1}^p \Theta_j.$ Then, any transformed prior can be obtained in the usual way involving variable transformations; that is
$p(\phi)=|J|\,p_\Theta(\theta(\phi))$, where $J$ is the Jacobian matrix for the transformation.

\subsection{Objectivity and uniqueness}
Before proceeding further it is important to discuss uniqueness and flexibility associated with objective priors.
It is widely acknowledged that a prior representing total ignorance is elusive (and it might not even be possible to obtain in principle, see \cite{BS1994}).  As a consequence, any prior distribution, objective or not, must out of necessity provide some knowledge about something, and this ``something'' is not necessarily unique.  For example, given a particular problem the corresponding objective prior over a given parameter space could be proper or improper; differentiable everywhere or not; convex; log-concave; etc. In other words, a prior can be objective and exhibit desirable features of choice without impinging on subjective components relating to information.

So while we will be introducing a Bayesian objective prior criterion, it does not lead to a unique prior, rather to a class of priors, where some desirable features may or may not be included. We believe that this level of flexibility is a point of strength of the proposed approach, making it adaptable to different scenarios, including those where model based priors do not work. Indeed, model based priors may lead to uniqueness, yet they provide features, mostly improperness, which may not actually be desirable. For example, Jeffreys prior for the parameter of a Bernoulli distribution has the feature of having spikes at 0 and 1.  On the other hand, the class of objective priors that are defined by our method contains  both proper and improper priors so, for example, the objective prior on a location parameter can be the flat prior (i.e. the usual objective prior obtained, for example, by applying Jeffreys method) but can be proper as well (for example, to be used for the location parameters in mixture models).

The definitions of objective priors are vague and certainly encompass our choice.  In \cite{Berger2006} the following four definitions of \emph{objective Bayes} are listed in order of generality;
\begin{enumerate}
\item A major goal of statistics (indeed science) is to find a completely coherent objective Bayesian methodology for learning from data.
\item Objective Bayesian analysis is the best method for objectively synthesizing and communicating the uncertainties that arise in a specific scenario, but is not necessarily coherent in a more general sense.
\item Objective Bayesian analysis is a convention we should adopt in scenarios in which a subjective analysis is not tenable.
\item Objective Bayesian analysis is simply a collection of \textit{ad hoc} but useful methodologies for learning from data.
\end{enumerate}
The claim is that 1. is not attainable; 2. is achievable (though not always); while 3. should always hold and should be always implemented.

If we consider the above definitions given by \cite{Berger2006}, we note the possibility of some degree of flexibility in an objective approach. As a starting point of discussion, let us consider the properness of the prior and let us assume that we are able to obtain, through the same procedure, a proper as well as an improper prior for a given parameter space; say, the mean of a normal density when the variance is known. As mentioned above, Jeffreys (and the reference) prior propose the flat prior $p(\mu)\propto1$. This solution is in general sensible and it would lead to a proper posterior. However, as discussed in the Example \ref{example_1} above, the flat prior will not be applicable to a mixture model with more than two components. In this scenario, it would be desirable (and sensible) to have the flexibility of using the same objective criterion and obtain a proper prior. In fact, if we consider the possibility of having a proper prior by leveraging on the flexibility of an objective method we surely satisfy the points from \cite{Berger2006}. With respect to the above four descriptions of objective Bayes, we see that points 3. and 4. are quite obvious, as the methodology will surely be \textit{useful} and it would be employed in a scenario where prior information is not available (or cannot be used). The choice of properties in a prior, such as properness, is not strictly a subjective action; in particular if it is driven by common sense (i.e. there are no alternatives). Furthermore, we argue that flexibility fits into point 2. as it is objective within a specific scenario. If we need proper objective priors for the means on the components of a mixture model, then that particular choice is objective.


\subsection{Literature review}\label{sc_background}
The development of  objective priors has been prolific in recent years. The idea is that in a scenario where prior elicitation is not feasible, or not desirable, a prior distribution can be formed through structural or formal rules \citep{KassWass1996}. Although a thorough review of objective Bayesian methods is beyond the scope of this paper, we deem it appropriate to briefly list the most common proposals. The first approach is due to \cite{Lap1820} with the principle of insufficient reason which leads to uniform, or flat, prior distributions.
The most popular objective prior is Jeffreys prior \citep{Jeff1946,Jeff1961}, who proposed a prior distribution for continuous parameter spaces which is invariant for one--to--one transformations of the parameter space. For example, if we consider the sampling distribution $f(x|\theta)$, the corresponding Jeffreys prior is given by
$p(\theta) \propto \sqrt{I(\theta)},$
where $I(\theta)$ is the Fisher information. Although in scenarios where there is only one parameter of interest, Jeffreys prior yields sensible posterior distributions. However, in cases where the parameter space has a dimension of two or more, the prior is known to yield posteriors with poor performance (sometimes giving paradoxical results, such as the marginalisation paradox). In such cases the priors are taken to be independent. 

Although other more general invariance priors have been proposed, such as in \cite{Dawid1983}, \cite{Harti1964} and \cite{Jaynes1968}, the reference prior of Bernardo represents an alternative to Jeffreys prior.
Here the idea is to derive a prior distribution which carries as minimal information as possible. The prior is identified as the one which maximises the (expected) missing information between the prior and the posterior. The most up--to--date results on reference priors can be found in \cite{BBS2009} and, for an extension to discrete parameter spaces, in \cite{BBS2012}. A limitation of reference priors is sensitivity to the order of importance of parameters; this issue and possible solutions have been discussed in \cite{BBS2015}.

Other objective priors proposed include that of \cite{BT1973}, based on data--translated likelihoods, and maximum entropy priors, see for example \cite{Jaynes1957,Jaynes1968}. The first type aims to use uniform priors in models where the likelihood can be translated producing posteriors which, for different samples, have the same shape and differ in location only. As discussed in \cite{Kass1990}, these priors turn out to be very restrictive.

Another important class of objective priors are the probability matching priors, first proposed in \cite{WP1963}. The aim is to obtain a prior distribution under which the posterior probabilities of certain regions coincide with their coverage probabilities, either exactly or approximately. For example, if we consider the model $f(x|\theta)$, and $t(p,\alpha)$ is the $\alpha$--quantile of the posterior, and
$$P_f\{\theta\leq t(p,\alpha)|x\} = \int_{-\infty}^{t(p,\alpha)} p(\theta|x)\, \d\theta=\alpha,$$
then $p(\theta)$ is a probability matching prior. Recent developments of this method can be found in \cite{Sweetetal2006} and \cite{Swee2008}.

A different method, based on information theoretical concepts, has been proposed by \cite{ZellMin1993}, giving the so called maximal data information prior. Although the method gave rise to some interesting results, such as the derivation of the right-Haar measure for location-scale problems, applications remain limited.

Possibly, the most recent development in defining prior distributions, although not in a strictly objective sense, is discussed in \cite{Simpson2017}. The idea is to identify the parts in a complex model that require subjective input, while the remaining can be associated to non-informative priors. The comparison between a simple model, say $f_0(\cdot|\eta_0)$ and a richer and more flexible alternative, say $f(\cdot|\eta)$, is done by assigning a prior on $\eta$ that penalises for complexity (on the basis of the Kullback--Leibler divergence between the two models). The other parameters (may) have objective priors assigned upon. 

A final consideration is reserved for discrete parameter spaces, whose systematic discussion can be seen to be generated by the paper of \cite{Rissan1983}. The lack of general methods, due to the challenges that discreteness imposes, has been filled by \cite{BBS2012} first, and by \cite{VW2015} later.

As previously mentioned, for a recent and thorough review of the objective Bayesian approaches so far developed, we refer the reader to \cite{Consonni2018}.

\subsection{Organisation of the paper }\label{sc_organisation}
The paper is organised as follows. In Section \ref{sc_scoringrules} we introduce the foundations of the proposed prior on the basis of scoring rules and their properties.  An interesting aspect of the prior based on scoring rules is its interpretation in terms of the information content carried by the prior itself. This aspect is explored in Section \ref{sc_variational}. In Section \ref{sc_illustrations} we present the objective priors concentrating on $\Theta=(0,1), \,\,(0,\infty)$ and $(-\infty,+\infty)$. The implementation of the prior for some specific applications is presented in Section \ref{sc_application}.
Finally, Section \ref{sc_discussion} is dedicated to some final remarks.

\section{Priors from scoring rules}\label{sc_scoringrules}
Let us consider a quantity of interest, $\theta$, which can take values in the space $\Theta$. The fundamental argument behind objective prior distributions is that they should represent a state of actual or alleged prior ignorance about the true value of $\theta$. Several criteria have been proposed to select such a prior, all of which assume that a probabilistic model generating the data (given $\theta$) has been chosen. What we propose is to avoid this choice and derive a prior depending on $\Theta$ only. The idea is to measure the quality of the prior $p$ with a proper scoring function, say $S(\theta,p)$, and assume it to be constant, as discussed in the Introduction.
\begin{definition}\label{def_globalprior}
A density $p$ with respect to the Lebesque measure on $\Theta$, is \emph{objective} (in accordance with commonly accepted meaning of the expression) if
$S(\theta,p)=1$ for all  $\theta\in\Theta,$
where $S$ is a proper scoring rule.
\end{definition}

Before proceeding we provide a brief discussion on scoring rules. Scoring rules are \textit{proper} if 
$\int_\Theta S(\theta,p)\,q(\theta)\,\d\theta$ is minimized at $p=q$ and \textit{local} if it depends on $p$ only through the value $p(\theta)$. The unique proper local scoring rule is the  log score, defined as
\begin{equation}\label{eq_logscore}
S_L(\theta,p) = -\log p(\theta).
\end{equation}
\cite{Parry2012} extend the local property to $m$--local, in that now $S(\theta,p)$ depends also on the $l$-derivative $p^{(l)}(\theta)$, for $0\leq l\leq m$. In particular, for $m=2$, there is the  Hyv\"{a}rinen scoring rule, \citep{Hyva2005}, given by
\begin{equation}\label{eq_hyvascore}
S_H(\theta,p) = \frac{\partial^2}{\partial\theta^2}\log p(\theta) + \frac{1}{2}\left\{\frac{\partial}{\partial\theta}\log p(\theta)\right\}^2.
\end{equation}
Consequently, our choice of scoring rule is
\begin{eqnarray}\label{eq_globalscoreext}
S(\theta,p) &=& S_L(\theta,p)+S_H(\theta,p)\nonumber\\
&=& -\log p(\theta) + \frac{p''(\theta)}{p(\theta)} - \frac{1}{2}\left\{\frac{p'(\theta)}{p(\theta)}\right\}^2.
\end{eqnarray}
That this is a proper scoring rule is derived from the fact that it is a sum of two proper scoring rules. It is also clearly $2$--local. 
Previously, priors have been sought based solely on $\log p$; for example, the reference prior, and the math becomes unnatural as a consequence. On the other hand, including higher derivatives yields well defined solutions to optimization procedures. That we set this score to 1 for all $\theta$ is done without loss of generality, as we shall see later on. That we understand this to be an objective procedure is evident from the fact that no part of $\Theta$ is being given preference; the {\em loss} at $\theta$ for our choice of $p(\theta)$ is the same for all $\theta$. For, if $S(\theta,p)$ did depend on $\theta$ then we argue that this could only be driven by {\em information}; i.e. parts of $\Theta$ space are preferential to others.

It is to be noted that we have summed the two scores directly without introducing any weighting; for we could have used $S_w(\theta,p)=S_L(\theta,p)+w\,S_H(\theta,p)$.
The choice of $w=1$ is a calibration issue between the two scores; i.e. to put them on a comparable scale. The reason for $w=1$ is that for the benchmark standard normal density function, i.e. $p(\theta)\propto e^{-\frac{1}{2}\theta^2}$, the difference between the scores $S_L$ and $S_H$ is a constant (i.e. does not depend on $\theta$), and so one does not end up dominating the other, only for $w=1$.

Hence, we see that the objective prior $p(\theta)$ is obtained by solving the following differential equation:
\begin{equation}\label{eq_diff}
\frac{p''(\theta)}{p(\theta)}-\frac{1}{2}\left\{\frac{p'(\theta)}{p(\theta)}\right\}^2= 1+\log p(\theta).
\end{equation}
To derive the solution, we have the following result.
\begin{theorem}\label{lem_diffequationequivalency}
The solution to (\ref{eq_diff}) is given by $p(\theta)\propto e^{-u(\theta)}$ with $u$ solving
$$u'(\theta) = \pm\sqrt{ce^{u(\theta)}-2\{1+u(\theta)\}},$$
for some suitable constant $c$.
\end{theorem}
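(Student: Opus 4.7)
My plan is to use the substitution $p(\theta) = e^{-u(\theta)}$ together with a classical reduction of order for autonomous second-order ODEs. After computing $p'/p = -u'$, $p''/p = (u')^2 - u''$, and $\log p = -u$, equation (\ref{eq_diff}) should collapse, after cancellation of the $\tfrac12(u')^2$ terms, to the second-order ODE
\[
u'' = \tfrac{1}{2}(u')^2 + u - 1,
\]
in which $\theta$ does not appear explicitly. This autonomy is precisely what makes the next step tractable.

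I would then carry out the standard reduction: treat $v = u'$ as a function of $u$, so that $u'' = v\,dv/du$, and set $w = v^2$ to linearize. The resulting equation is
\[
\frac{dw}{du} - w = 2(u-1),
\]
a linear first-order ODE which I would solve with the integrating factor $e^{-u}$. A single integration by parts on the right-hand side yields $w = c_1 e^u - 2u$, hence $(u')^2 = c_1 e^u - 2u$.

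The final step is bookkeeping. Because the statement only asks for $p \propto e^{-u}$, the function $u$ is determined only up to an additive constant; a shift $u \mapsto u + 1$ (equivalently, rescaling the proportionality constant in $p$) combined with absorbing a factor of $e$ into $c_1$ rewrites the identity in the advertised form $(u')^2 = c\,e^u - 2(1+u)$. Taking $\pm$ square roots then produces the first-order ODE in the conclusion; both signs correspond to genuine solution branches, reflecting the freedom in the direction of monotonicity of $u$. The main (mild) obstacle is keeping careful track of the two free constants---the constant of integration and the proportionality in $p$---and checking that the specific $-2(1+u)$ shape advertised in the theorem is achievable by these normalizations; the calculations themselves are essentially mechanical once the autonomous reduction is in place.
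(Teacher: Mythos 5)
Your proposal is correct and follows essentially the same route as the paper: write $p \propto e^{-u}$, observe that the resulting equation $u'' - \tfrac{1}{2}(u')^2 = u - 1$ is autonomous, reduce order by treating $v = u'$ as a function of $u$, linearize with $w = v^2$, and integrate with the factor $e^{-u}$. The only structural difference is the order of the two normalizations: the paper shifts $u$ by a unit constant \emph{first} (absorbing the $-1$, which is how it justifies the ``without loss of generality'' in setting the score equal to 1), so that it solves $u'' - \tfrac{1}{2}(u')^2 = u$ and lands directly on $(u')^2 = c e^u - 2(1+u)$; you solve the strict equation first, obtaining $(u')^2 = c_1 e^u - 2u$, and shift afterwards. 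Both orders work, and your integration is right. One bookkeeping caution: the shift that produces the advertised form is $\tilde u = u - 1$ (equivalently, substitute $u = \tilde u + 1$ into your identity), which gives $(\tilde u')^2 = (c_1 e)\, e^{\tilde u} - 2(1+\tilde u)$; the map $\tilde u = u + 1$ as you literally wrote it yields $(c_1/e)\, e^{\tilde u} - 2\tilde u + 2$, i.e.\ the wrong sign on the constant term. Since a unit shift in $u$ only rescales $p$ by $e^{\pm 1}$, this is cosmetic and fully covered by your own remark that $u$ is defined up to an additive constant --- and, amusingly, the paper's proof contains the mirror image of the same slip, asserting that $u+1$ (rather than $u-1$) solves the shifted equation.
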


\begin{proof}
Solving the differential equation \eqref{eq_diff} is equivalent to solving the following differential equation;
\begin{equation}\label{eq_diffequationequivalent}
u(\theta)''-\frac{1}{2}\{u(\theta)'\}^2 = u(\theta),
\end{equation}
where we have $u(\theta) = -\log p(\theta)$.
Strictly we have $u''-\half (u')^2=u-1$ but if $u$ solves this then $u+1$ solves (\ref{eq_diffequationequivalent}). The 1 here is the same one which appears in the $S(\theta,p)=1$ and hence confirming the ``without loss of generality'' in the choice of 1 as a constant.
By now letting $v=u'$ we have
$$v\frac{dv}{du} = u + \frac{1}{2}v^2,$$
which has the solution
\begin{equation}\label{eq_solutiondiffeq}
v(\theta) = u'(\theta) = \pm\sqrt{ce^{u(\theta)}-2\{1+u(\theta)\}},
\end{equation}
for a suitable $c$.
\end{proof}

The missing pieces in equation \eqref{eq_solutiondiffeq} are $c$ and say $u(0)$, two constants of integration. We will see how to complete these when we look at illustrations in Section 4. In general, as the solution depends on the above two arbitrary constants, our method provides a class of solutions, where some are proper and some are improper and, more general, where the priors will have some assigned properties via specification of $(c,u(0))$.

We also note here that we do not need the normalizing constant for $p$ and neither do we need to find an explicit solution for $u$, and $p$, beyond (\ref{eq_solutiondiffeq}). The reason for this is that we can find an accurate solution via numerical methods; i.e. if we have $u(\theta)$ at a particular $\theta$ value, then  we can evaluate
$$u(\theta+\varepsilon) = u(\theta) + \varepsilon u'(\theta) + \frac{1}{2}\varepsilon^2u''(\theta)+o(\varepsilon^2)$$
for small $\varepsilon$, and the $u'$ and $u''$ are available explicitly, with $u'$ as in equation \eqref{eq_solutiondiffeq}, $u''=\half c e^u-1$, and with the ease of obtaining higher derivatives if needed. From here we can evaluate $p(\theta)$. 

Finally, in this section, we describe the multidimensional solution. So suppose that $\theta=(\theta_1,\ldots,\theta_k)$. Without needing to replicate the mathematics, we can easily derive the solution as
$p(\theta)\propto \exp(-u(\theta))$ with
$$\partial u/\partial\theta_j=\pm\,\sqrt{c_j\,e^{u(\theta)}-2(1+u(\theta))},\quad j=1,\ldots,k,$$
where the additional  free parameters are the $(c_j)$. 
We can again solve for $u$ using the multivariate version of the Taylor approximation.

\section{Variational problems and solutions}\label{sc_variational}
Here we provide an alternative derivation of (\ref{eq_diff}) using information theory, specifically entropy information and Fisher information.  We show that the $p$ solving \eqref{eq_diff} can also be regarded as a density carrying minimal local information.  This material then is to provide support for the solution to (\ref{eq_diff}) being an objective prior.

The entropy information (negative entropy) of a density function $p$ is given by
$$I_E(p) = \int p(\theta)\log p(\theta)\,\d\theta,$$
which is related to Shannon's entropy and is equal to negative the expected self--information loss. In addition to $I_E(p)$, we consider a measure of the information in the density $p$ known as Fisher information, given by
\begin{equation}\label{eq_I_F}
I_F(p)= \int \frac{p'(\theta)^2}{p(\theta)}\,\d\theta =\int p(\theta)\left\{\frac{\partial}{\partial\theta}\log p(\theta)\right\}^2\,\d\theta.
\end{equation}
See for example \cite{Bobkov2014}.

Now consider $I(p)=I_E(p)+\half I_F(p)$ and the aim is to find the $p$ which minimizes $I(p)$. Recalling variational methods \citep{Rustagi1976}, if we wish to minimise $\int_a^b L(\theta,p,p')\,\d\theta$,  a necessary condition for a local extremum of the integral of the Lagrangian $L(\theta,p,p')$ is that
\begin{equation}\label{eq_geuler}
\frac{\partial L}{\partial p}=\frac{\d}{\d\theta}\,\frac{\partial L}{\partial p'}.
\end{equation}
Minimising $\int_a^b L(\theta,p,p')\,\d\theta$ reduces to the classical calculus of variation problem where we want to extremize the integral of the function 
\begin{equation}\label{lagrange}
L(\theta,p,p')= \frac{1}{2}\frac{p'(\theta)^2}{p(\theta)}+p(\theta)\,\log p(\theta).
\end{equation}
The solution to the extremal problem, if it exists, is obtained from the Euler--Lagrange equations, given by (\ref{eq_geuler}).
According to page 44 of \cite{Rustagi1976}, if $L(p,p')$ is strictly convex on $(0,\infty)\times (-\infty,+\infty)$, and $p$ satisfies the Euler equation, then $p$ is a minimum of 
$\int_a^b L(\theta,p,p')\,\d\theta$.
Now $L(p,p')$ is strictly convex if the matrix
$$H=\left(\begin{array}{ll}
\partial^2 L/\partial p^2  &  \partial^2 L/\partial p\partial p' \\ \\
\partial^2 L/\partial p\partial p'  &  \partial^2 L/\partial (p')^2 \end{array}
\right)$$
is positive definite. 

\begin{theorem}\label{teo_variations}
A minimum satisfying the Euler--Lagrange equations is given by the $p$ solving the differential equation
$$p'=\pm p\,\sqrt{c/(e\,p)+2\log p},$$
for some suitable $c$.
\end{theorem}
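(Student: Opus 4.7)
My plan is to verify that the Euler--Lagrange equation for the Lagrangian in \eqref{lagrange} is equivalent to \eqref{eq_diff}, then exploit the fact that $L$ has no explicit dependence on $\theta$ to obtain a first integral and reduce the second-order equation to the first-order ODE in the statement. Finally, I will check strict convexity of $L$ in $(p,p')$ to conclude (by the quoted criterion from \cite{Rustagi1976}) that the critical $p$ is actually a minimum.

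First I would compute the partial derivatives
$$\frac{\partial L}{\partial p}=-\tfrac{1}{2}\frac{(p')^{2}}{p^{2}}+\log p+1,\qquad \frac{\partial L}{\partial p'}=\frac{p'}{p},$$
and substitute into \eqref{eq_geuler}. After differentiating $p'/p$ in $\theta$ and simplifying, the Euler--Lagrange equation becomes exactly
$$\frac{p''}{p}-\tfrac{1}{2}\left(\frac{p'}{p}\right)^{2}=1+\log p,$$
which is \eqref{eq_diff}. This already connects the variational problem to the scoring-rule characterisation.

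Next I would integrate once using the Beltrami identity: since $L=L(p,p')$ is $\theta$-independent, the quantity $H_{0}:=L-p'\,\partial L/\partial p'$ is conserved along extremals. A direct substitution gives
$$H_{0}=\tfrac{1}{2}\frac{(p')^{2}}{p}+p\log p-\frac{(p')^{2}}{p}=p\log p-\tfrac{1}{2}\frac{(p')^{2}}{p}.$$
Setting $H_{0}$ equal to a constant $-c/(2e)$ and solving algebraically for $p'$ yields
$$p'=\pm p\sqrt{\,c/(ep)+2\log p\,},$$
which is the stated ODE; this step is routine bookkeeping of the integration constant, which I would parameterize so that the $1/e$ factor appears.

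To finish I would verify that $L$ is strictly convex on $(0,\infty)\times\R$ by forming the Hessian $H$ whose diagonal entries are $(p')^{2}/p^{3}+1/p$ and $1/p$ and off-diagonal entry $-p'/p^{2}$. The top-left entry is positive, and a short computation shows $\det H = 1/p^{2}>0$, so $H$ is positive definite. Invoking the criterion on p.~44 of \cite{Rustagi1976} cited in the excerpt, any $p$ satisfying the Euler--Lagrange equation is a minimum of $\int L(\theta,p,p')\,\ddr\theta$, which completes the proof. The only mildly delicate point is choosing the right sign convention and relabeling of the integration constant so that the final ODE matches the precise form stated; everything else is mechanical.
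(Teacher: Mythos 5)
Your proof is correct, and it departs from the paper's own argument in one substantive way. The convexity verification (the Hessian with entries $(p')^2/p^3+1/p$, $-p'/p^2$, $1/p$, determinant $1/p^2>0$) and the derivation of \eqref{eq_diff} from \eqref{eq_geuler} coincide exactly with what the paper does. Where you differ is the integration step: the paper does not integrate the Euler--Lagrange equation afresh, but simply observes that \eqref{eq_diff} is the equation already solved in Theorem~\ref{lem_diffequationequivalency}, where the reduction goes through $u=-\log p$, the substitution $v=u'$, and the solution of $v\,dv/du=u+\tfrac{1}{2}v^2$ (a linear ODE in $v^2$), with the factor $1/e$ in the final formula emerging from the shift $u\mapsto u+1$ that absorbs the constant $1$ in $S(\theta,p)=1$. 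You instead exploit the fact that $L$ in \eqref{lagrange} is autonomous and invoke the Beltrami identity, so that $p\log p-\tfrac{1}{2}(p')^2/p$ is conserved along extremals; parameterizing the conserved value as $-c/(2e)$ gives the stated ODE directly in terms of $p$, with no detour through $u$. Your route is more self-contained (the theorem's proof no longer leans on Theorem~\ref{lem_diffequationequivalency}) and explains the first-order reduction structurally as a conservation law, at the small cost of the cosmetic choice $-c/(2e)$ for the constant, which in the paper's treatment has a genuine origin in the shift by $1$. One point worth making explicit in your write-up: the Beltrami identity gives that every Euler--Lagrange solution satisfies the first-order ODE for some constant, which is the direction the theorem asserts; the converse (that nonconstant solutions of the first-order ODE solve Euler--Lagrange) holds wherever $p'\neq 0$, a caveat the paper glosses over as well.
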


\begin{proof}
Calculations give
$$H=\frac{1}{p}\left(\begin{array}{ll}
\kappa^2+1  &  -\kappa \\ \\
-\kappa  &  1 \end{array}\right)
$$
where $\kappa=p'/p$. This is easily seen to be a positive definite matrix; the eigenvalues are given by
$$\frac{1}{p}\left[\half (2+\kappa^2)\pm\sqrt{\qrts(2+\kappa^2)^2-1}\right],$$
which are positive.

Then equation \eqref{eq_geuler}, after some elementary algebra and differentiation, leads to the following differential equation,
$$
\frac{p''(\theta)}{p(\theta)}-\frac{1}{2}\left\{\frac{p'(\theta)}{p(\theta)}\right\}^2= 1+\log p(\theta),
$$
which is the same as (\ref{eq_diff}).
\end{proof}
\noindent
This differential equation has the solution derived in the previous section. It is interesting that the Euler--Lagrange equations are solved by precisely the same $p$ solving equation (\ref{eq_diff}).
 
\section{Illustrations}\label{sc_illustrations}
To illustrate the proposed method we consider three common parameter spaces. In particular, we consider the space for a parameter representing a probability, that is $\Theta=(0,1)$, the space $\Theta=(0,\infty)$, usually representing the support of scale parameters and, finally, the support for (location) parameters $\Theta=(-\infty,+\infty)$. The aim here is to solve \eqref{eq_geuler} for particular motivated choices of $(c,u(0))$, equivalently, $(c,p(0))$ or $(p'(0),p(0))$. In fact, the solutions to the Euler--Lagrange equations are many, and the choice of the two constants $(c,u(0))$ will then determine a unique solution.

Now there is the flat solution for all $\Theta$ in \eqref{eq_diff} given by $p(\theta)\propto 1$. This is achieved by setting $c=2$ and $u(0)=0$.  However, in each of the settings of $\Theta$ considered we can find alternate priors with particular features. So, e.g. for $\Theta=(0,1)$ we ask that $p(0)=p(1)=0$ and for $\Theta=(0,\infty)$ we ask that $p$ is convex and decreasing.


\paragraph{Case $\bm{\Theta=(0,1)}$.} Here we consider the $u$ function, recall $p\propto e^{-u}$, and so for $p(0)=p(1)=0$ we require $u(0)=u(1)=\infty$. For additional symmetry, we can take $u\left(\half\right)=w>0$ and taking $c=2$ as the extremal value, we have
$$
\begin{array}{ll}
u'=\sqrt{2}\sqrt{e^u-1-u} & \theta>\half \\
u'=-\sqrt{2}\sqrt{e^u-1-u} & \theta<\half.
\end{array}
$$
Note there is a discontinuity in the derivative of $u$ at $\theta=\half$. As $w$ increases it is that $u(0)$ and $u(1)$ got to $\infty$. A plot of $u$ is given in Fig.~\ref{fone} for $w=1.1$ and in Fig.~\ref{ftwo} for $w=1.14$. For these figures we used a grid of 1000 either side of $\theta=\half$ to obtain the numerical solutions. In the latter case, the corresponding density for $p$ is presented in Fig.~\ref{fthree}.

\begin{figure}[H]
\begin{center}
\includegraphics[scale=0.4]{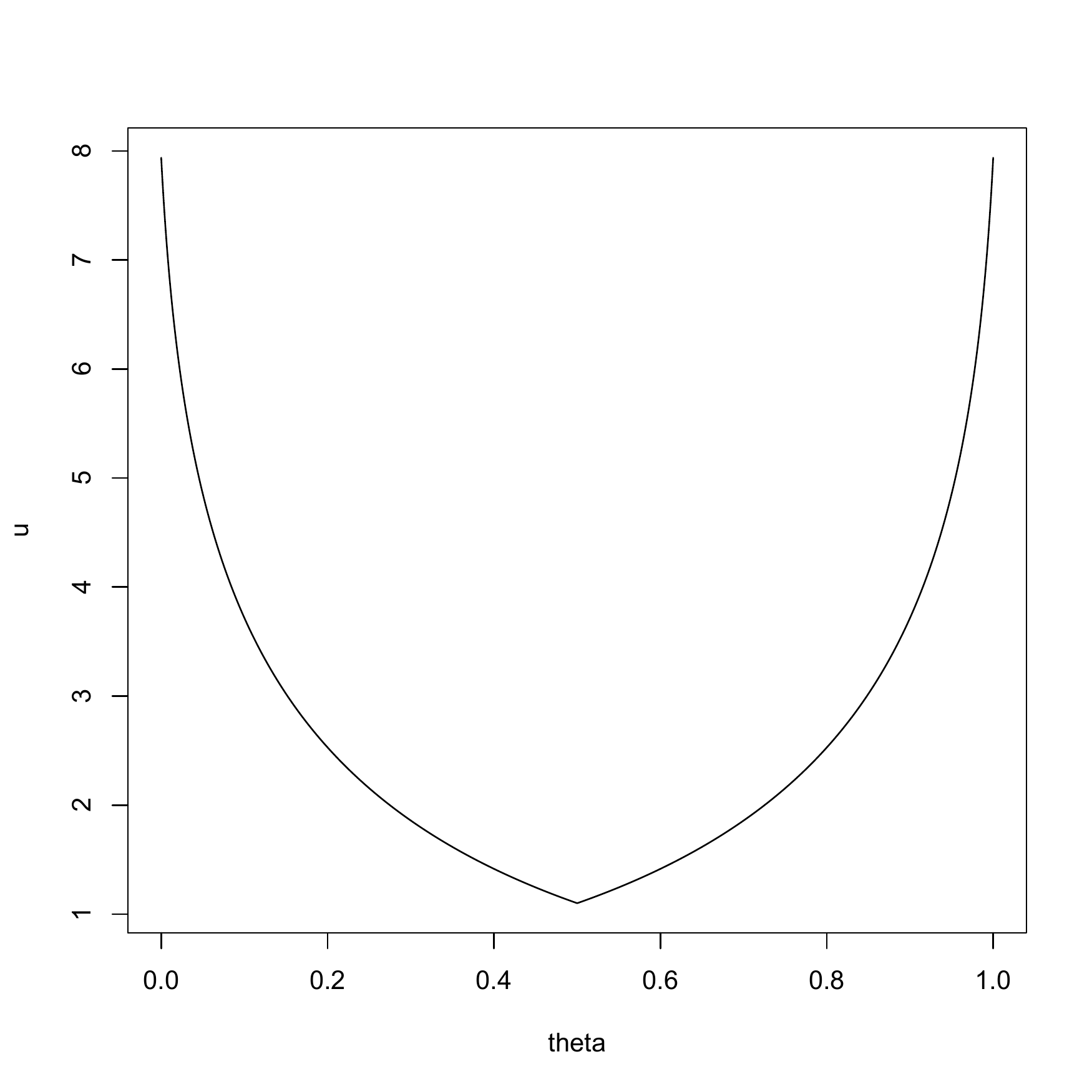}
\caption{Numerical solution for $u$ with $w=1.1$}
\label{fone}
\end{center}
\end{figure}

\begin{figure}[H]
\begin{center}
\includegraphics[scale=0.4]{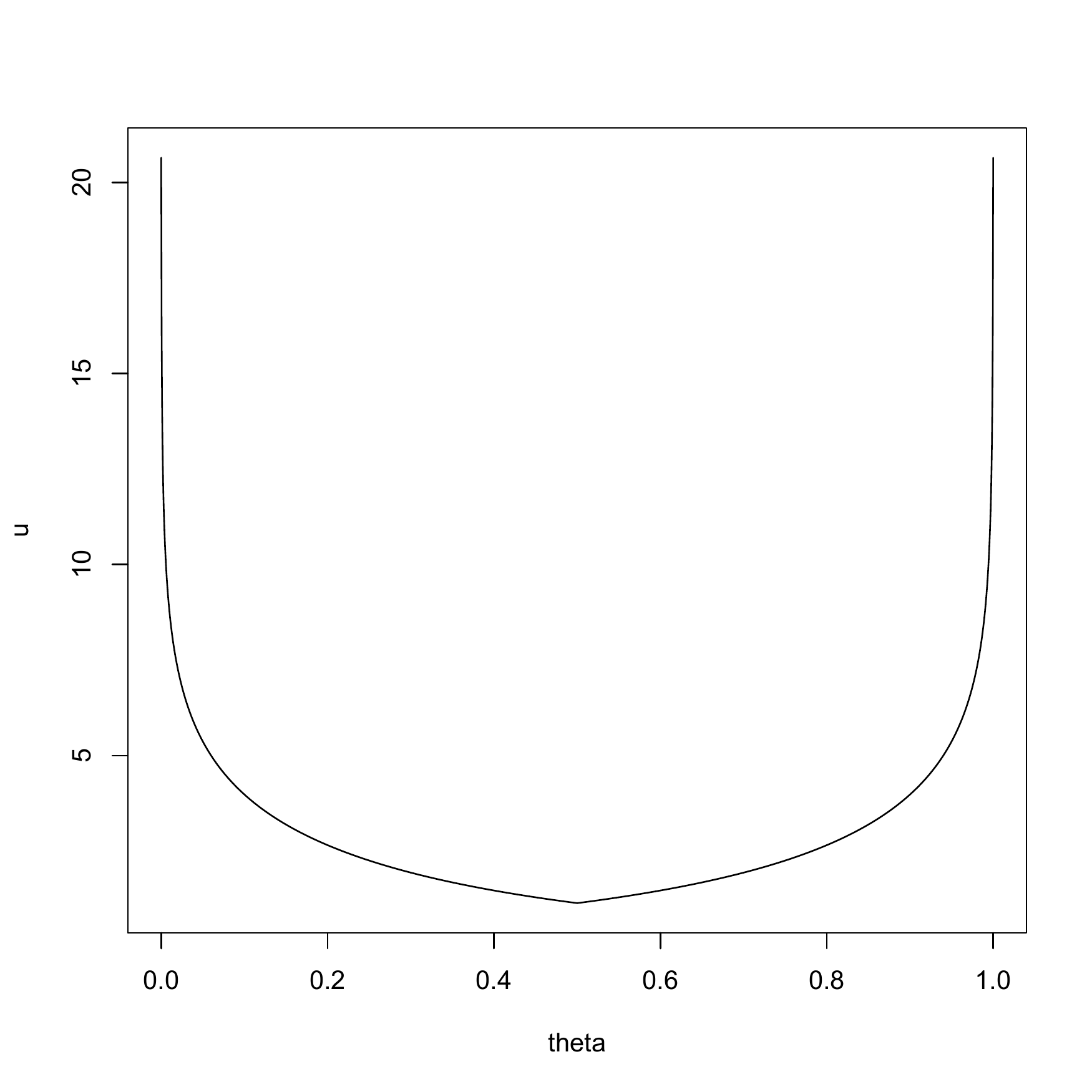}
\caption{Numerical solution for $u$ with $w=1.14$}
\label{ftwo}
\end{center}
\end{figure}

\begin{figure}[H]
\begin{center}
\includegraphics[scale=0.4]{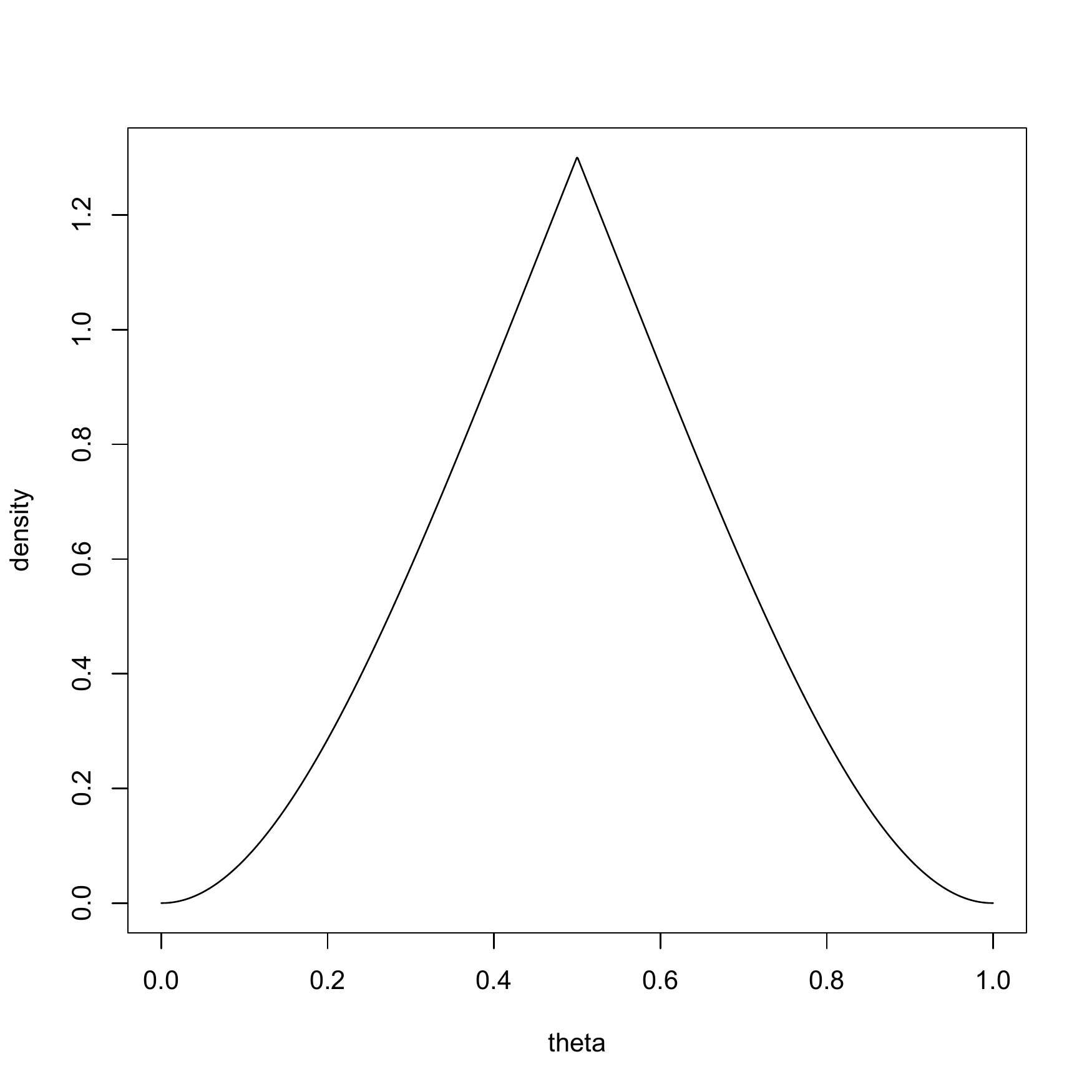}
\caption{Numerical solution for normalized $p$ with $w=1.14$}
\label{fthree}
\end{center}
\end{figure}
It also possible to obtain a prior that mimics Jeffreys'; that is, a distribution that has spikes at $\theta=0$ and $\theta=1$ with the lowest value at $\theta=\half$. This is done by simply inverting $u$: i.e. set $u=-u$ in the above prior.

\paragraph{Case $\bm{\Theta=(0,\infty)}$.} For a prior defined on the space $(0,\infty)$ we require a specific shape property for $p$ (convex and decreasing) and then take extremal values for the $c$ and $u(0)$. This property  is common to most objective priors on $(0,\infty)$. Thus, since $p'<0$ we require $u'>0$ and so
$u'=\sqrt{ce^u-2(1+u)}$, and for $u'$ to exist for all $u$ we must have $c\geq 2$. Thus, as an extremal value, we take $c=2$.

For the prior to be convex we require $p''\geq0$. Now $p'=-u'p$ implying $p''=p\left\{(u')^2-u''\right\}$. Therefore, $p$ is convex when $(u')^2\geq u''$. From \eqref{eq_diffequationequivalent} and \eqref{eq_solutiondiffeq} we have
$(u')^2=ce^u-2(1+u)\quad\mbox{and}\quad u''=\half c\,e^u-1$,
and hence we are interested in the $u(0)$ for which
$c=2\geq 2(1+2u)e^{-u}$ for all $u$; i.e. $(1+2u)e^{-u}\leq 1$ for all $u$.
Given that the function $(1+2u)e^{-u}$ is maximum, with a value of $1.31$, at $u=\half$, and $u$ is increasing, since $u'>0$, we need $u(0)>\half$ and $(1+2u(0))e^{-u(0)}=1$, again as an extremal value. Solving this gives a value for $u(0)$ of approximately 1.31.

In the next result we show that $u'$ is bounded away from 0, and this will have important consequences for the properness of $p$.

\begin{lemma}\label{lm_boundawayzero}
It is that $u'$ is bounded away from 0.
\end{lemma}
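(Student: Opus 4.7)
The plan is to use monotonicity of $u$ together with monotonicity of the integrand $e^{u}-1-u$ appearing under the square root in the differential equation.

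First I would write the ODE in the form dictated by the choice $c=2$: from \eqref{eq_solutiondiffeq} we have $u'(\theta)=\sqrt{2}\,\sqrt{e^{u(\theta)}-1-u(\theta)}$ on $(0,\infty)$, where the positive root is chosen because the decreasing/convex shape condition forces $u'>0$. In particular $u$ is strictly increasing, so $u(\theta)\ge u(0)$ for every $\theta\ge 0$, where $u(0)\approx 1.31$ is the extremal value selected so that $(1+2u(0))e^{-u(0)}=1$.

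Next I would show that the map $x\mapsto e^{x}-1-x$ is strictly increasing on $[u(0),\infty)$: its derivative is $e^{x}-1$, which is positive whenever $x>0$, and $u(0)>0$. Combining this with $u(\theta)\ge u(0)$ yields
\begin{equation*}
e^{u(\theta)}-1-u(\theta)\;\ge\;e^{u(0)}-1-u(0)\;=:\;\alpha,
\end{equation*}
and $\alpha>0$ because $u(0)>0$ makes $e^{u(0)}>1+u(0)$ strictly. Substituting into the ODE gives the uniform lower bound
\begin{equation*}
u'(\theta)\;\ge\;\sqrt{2\alpha}\;>\;0\qquad\text{for all }\theta\in(0,\infty),
\end{equation*}
which is exactly the claim that $u'$ is bounded away from $0$.

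There is no real obstacle here; the only thing to be careful about is checking that $u(0)$, as fixed earlier in the case analysis, lies strictly above the zero of $e^{x}-1-x$, so that $\alpha>0$ rather than merely $\alpha\ge 0$. The consequence alluded to in the preceding paragraph is then immediate: integrating the lower bound gives $u(\theta)\ge u(0)+\sqrt{2\alpha}\,\theta$, hence $p(\theta)\propto e^{-u(\theta)}$ decays at least exponentially at $+\infty$, which secures integrability and thus properness of the prior on $(0,\infty)$.
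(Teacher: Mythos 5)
Your proposal is correct and takes essentially the same route as the paper: both arguments exploit that $u$ is increasing with $u(0)>0$, so $e^{u(\theta)}-1-u(\theta)$ is bounded below by a positive constant, whence $u'\geq\sqrt{2\alpha}>0$. The only (immaterial) difference is the elementary fact used for the lower bound --- the paper invokes $e^u-1-u\geq\frac{1}{2}u^2\geq\frac{1}{2}u(0)^2$, while you use monotonicity of $x\mapsto e^x-1-x$ to get $\alpha=e^{u(0)}-1-u(0)$.
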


\begin{proof}
To show this we need to show that $e^u-1-u$ is bounded away from 0 for $u\geq u(0)$. This follows trivially since
$e^u-1-u\geq \half u^2\geq \half u(0)^2$.
\end{proof}
The result of Lemma \ref{lm_boundawayzero} has also the implication that $p$ is a proper density function. To show this, we require Gronwall's inequality \citep{Gronwall1919}. This inequality states that, if $f$ and $g$ are real valued functions on $\Theta=(0,\infty)$, $g$ is differentiable on $\mbox{int}(\Theta)$, and
$g'(t)\leq g(t)\,f(t)$ for all $t\in\Theta$
then
$g(t)\leq g(0)\,\exp\left\{\int_0^t f(s)\,\d s\right\}.$

\begin{lemma}\label{lm_properness}
If $p(0)<\infty$, it is that $p(\theta)\leq p(0)\,e^{-\epsilon\theta},$
and hence $p$ is proper.
\end{lemma}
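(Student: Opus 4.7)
The plan is to convert the lower bound on $u'$ coming from Lemma \ref{lm_boundawayzero} into a pointwise exponential decay bound for $p$ via Gronwall's inequality, and then deduce integrability on $(0,\infty)$.

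First, I would extract an explicit constant $\epsilon>0$ with $u'(\theta)\ge \epsilon$ for all $\theta\ge 0$. Recall that with $c=2$ we have $(u')^2 = 2(e^{u}-1-u)$, and the inequality $e^{u}-1-u\ge \tfrac{1}{2}u^{2}$ used inside Lemma \ref{lm_boundawayzero} gives $(u'(\theta))^2 \ge u(\theta)^2 \ge u(0)^2$, since $u$ is increasing (as $u'>0$). Thus one may take $\epsilon = u(0)$, which is strictly positive (approximately $1.31$ by the extremal choice made above).

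Next, from $p = e^{-u}$ one has $p'(\theta) = -u'(\theta)\, p(\theta) \le -\epsilon\, p(\theta)$ for all $\theta\ge 0$. This is exactly the differential inequality required to invoke Gronwall's inequality as stated just before the lemma: with $g=p$ and $f\equiv -\epsilon$, we obtain
\begin{equation*}
p(\theta) \le p(0)\exp\!\left\{\int_{0}^{\theta}(-\epsilon)\,\d s\right\} = p(0)\,e^{-\epsilon\theta},
\end{equation*}
which is the claimed bound. Properness then follows immediately, since
\begin{equation*}
\int_{0}^{\infty} p(\theta)\,\d\theta \le p(0)\int_{0}^{\infty} e^{-\epsilon\theta}\,\d\theta = \frac{p(0)}{\epsilon} < \infty,
\end{equation*}
using the hypothesis $p(0)<\infty$.

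There is no real obstacle here; the only mild subtlety is to be explicit that $u(0)>0$ (guaranteed by the earlier extremal choice $u(0)\approx 1.31$) so that $\epsilon$ is a strictly positive constant rather than merely nonnegative, and to note that $u$ being increasing (immediate from $u'>0$) lets us propagate the lower bound $u(\theta)\ge u(0)$ to every $\theta\ge 0$. Everything else is a direct application of Gronwall and integration of an exponential.
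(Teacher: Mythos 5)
Your proposal is correct and follows essentially the same route as the paper: invoke the lower bound $u'\geq\epsilon$ from Lemma \ref{lm_boundawayzero}, turn $p'=-u'p$ into the differential inequality $p'\leq-\epsilon p$, and apply Gronwall's inequality with $f\equiv-\epsilon$, $g=p$ to get the exponential bound. The only additions are cosmetic but welcome: you make $\epsilon=u(0)$ explicit and spell out the final integration, both of which the paper leaves implicit.
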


\begin{proof}
Since $p'=-u'\,p$ and we have $u'\geq \epsilon$ for some $\epsilon>0$, it is that $p'\leq -\epsilon\,p$. From Gronwall's lemma, with $f(t)=-\epsilon$ and $g=p$, we have that
$$p(\theta)\leq p(0)\,\exp\left\{-\int_0^\theta \epsilon\,\d s\right\},$$
and hence the proof is complete.
\end{proof}

To have a graphical image, in Figure \ref{fig1} we plot the prior using the approximation available via a numerical solution to the differential equation for $p$. Note that this is the unnormalised $p$.

\begin{figure}[H]
\begin{center}
\includegraphics[scale=0.4]{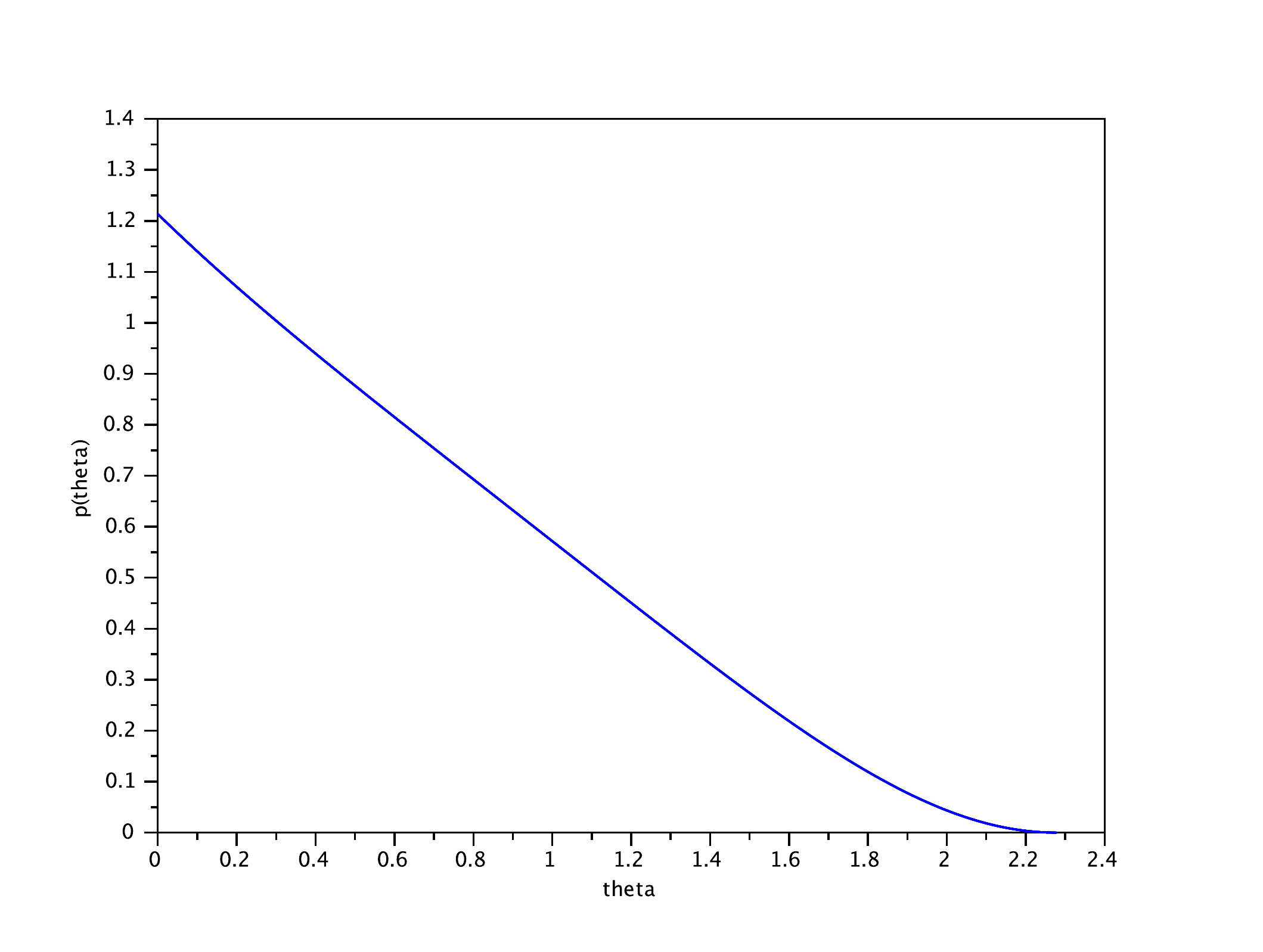}
\caption{Global prior distribution for the case $\Theta=[0,\infty)$, with $c=2$ and $u(0)=1.31$}
\label{fig1}
\end{center}
\end{figure}

\paragraph{Case $\bm{\Theta=(-\infty,+\infty)}$.} A solution here is a symmetric version of the case $\Theta=(0,+\infty)$, which will represent a proper prior.

On the other hand, if we ask that $p$ is smooth at the origin, i.e. $p'(0)=0$, then we need $u'(0)=0$ and hence we must take $(c,u(0))$ to satisfy $c\,e^{u(0)}=2+2u(0)$. If now we take $c=2$, then $u(\theta)$ is a constant, resulting in a flat (improper) prior for $p$.

For a proper prior here one could take $u(0)$ to be small, say $u(0)=0.01$ and then to take 
$c=2\{1+u(0)\}/\exp\{u(0)\}$. We computed numerically the right side; i.e. the $(0,\infty)$ side, for $p$, which is shown in Fig.~\ref{ffour}.

\begin{figure}[H]
\begin{center}
\includegraphics[scale=0.4]{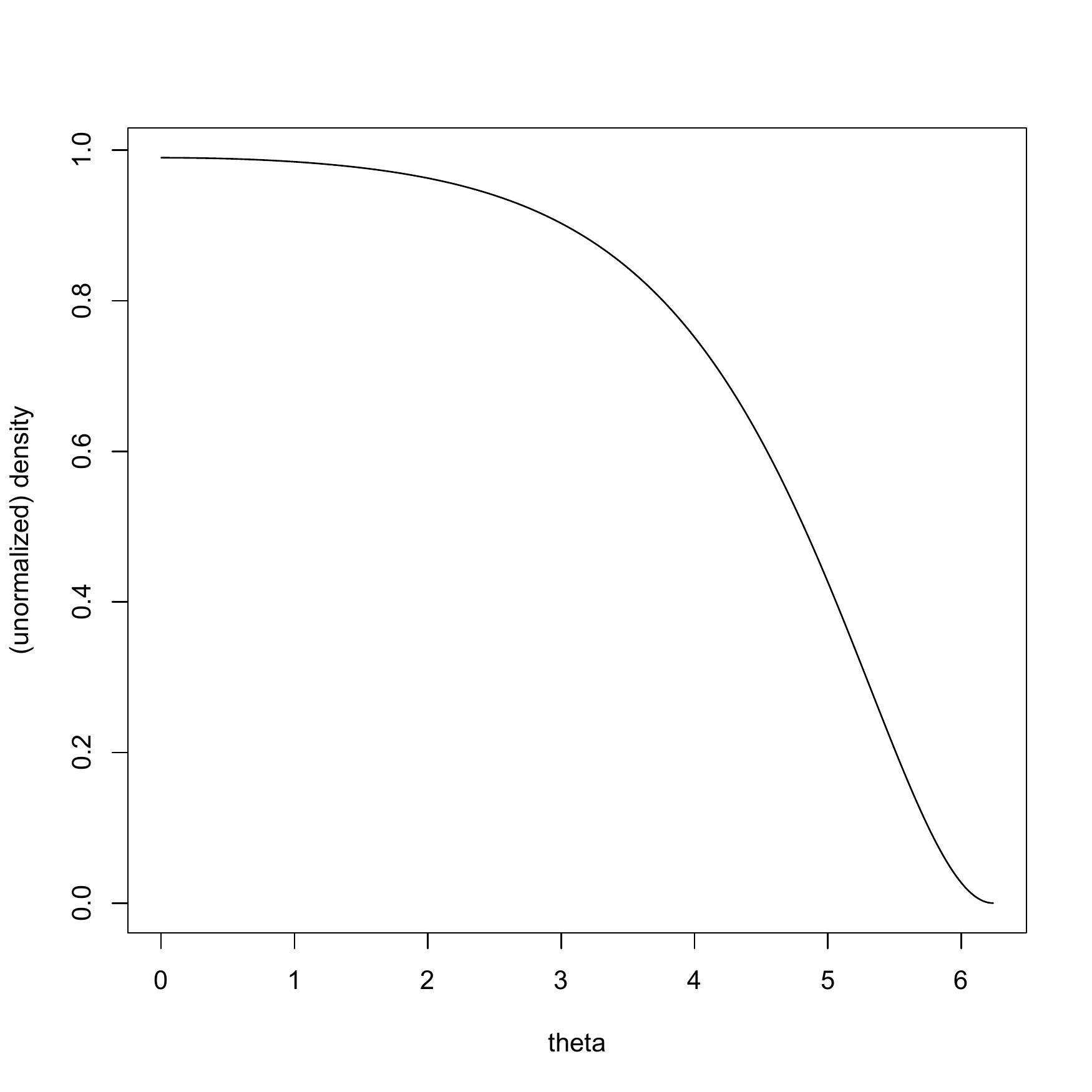}
\caption{Right side of (unnormalized) prior $p$ which is  proper and differentiable at origin}
\label{ffour}
\end{center}
\end{figure}

To make the value of $u(0)$ more diverted, we could equally set a motivated choice for $p(0)=1/\{\sigma\sqrt{2\pi}\}$, corresponding to a normal density with zero mean and variance $\sigma^2$.

\section{Applications}\label{sc_application}
Although we do not have an explicit form for $p(\theta)$, we can use \eqref{eq_solutiondiffeq} to calculate it numerically quite easily. In particular, if we know $p(\theta)$ then we calculate $p(\theta+\delta\theta)$ for small $\delta\theta$, hence setting up the possibility of a posterior estimation process via Metropolis--Hastings sampling. The algorithm employed is detailed in the Appendix.

To be specific, suppose we are currently at $\theta$ and the proposal value is $\theta'$. The acceptance probability is 
\begin{equation}\label{Eq_MH_alpha}
\alpha=\min\left\lbrace 1,\frac{l(\theta')}{l(\theta)}\frac{p(\theta')}{p(\theta)}\frac{q(\theta|\theta')}{q(\theta'|\theta)}\right\rbrace,
\end{equation}
where $l(\theta)$ is the likelihood function, and $q(\theta'|\theta)$ is the proposal density.
The evaluation of $p(\theta')/p(\theta)$ in equation \eqref{Eq_MH_alpha} does not represent any particular challenge. In fact, we have
$p(\theta')/p(\theta)=\exp\{-[u(\theta')-u(\theta)]\},$ where $u$ is the solution of the differential equation
\begin{equation}\label{Eq_diffeq}
u'=\sqrt{ce^{u}-2(1+u)}.
\end{equation}
Equation \eqref{Eq_diffeq} allows us to evaluate $u(\theta')-u(\theta)$ numerically, via 
\begin{equation*}
\begin{split}
u(\theta')&\approx u(\theta)+(\theta'-\theta)u'(\theta)+\frac{(\theta'-\theta)^2}{2}u''(\theta)+\frac{(\theta'-\theta)^3}{6}u'''(\theta),
\end{split}
\end{equation*}
where 
$u'(\theta)=\sqrt{ce^{u(\theta)}-2(1+u(\theta))}$, $u''(\theta)=\half ce^{u(\theta)}-1$, and 
$$u'''(\theta)=\half ce^{u(\theta)}\sqrt{ce^{u(\theta)}-2(1+u(\theta))},$$
and so on. Depending on how far $\theta'$ is from $\theta$ we can either use the direct approximation just given or otherwise get from $\theta$ to $\theta'$ using smaller step sizes.


Before showing the results of a thorough simulation study in Section \ref{sc_simulations}, we consider the following two applications based on single i.i.d. samples:
\begin{enumerate}
\item $X_1,\ldots,X_n\sim\mbox{Poisson}(\theta)$, where we assume our prior on the unknown parameter $\theta>0$, that is a sample space $(0,\infty)$.
\item $X_1,\ldots,X_n\sim\mbox{N}(\mu,1)$, where we assume our prior on the unknown parameter $\mu\in\mathbb{R}$, that is a sample space $(-\infty,\infty)$.
\end{enumerate}
In the first illustration we sample $n=100$ observations from a Poisson distribution with mean value $\theta=2.5$. The Metropolis--Hastings has been run with $c$ and $u(0)$ set as discussed in Section \ref{sc_illustrations}, and the results for 100,000 iterations are shown in Figure \ref{Fig:lambdachain} (posterior sample) and in Figure \ref{Fig:lambdahist} (posterior histogram). Both figures show a good mixing and a posterior distribution that accumulates around the true parameter value. 
\begin{figure}[h]
\centering
\includegraphics[scale=0.4]{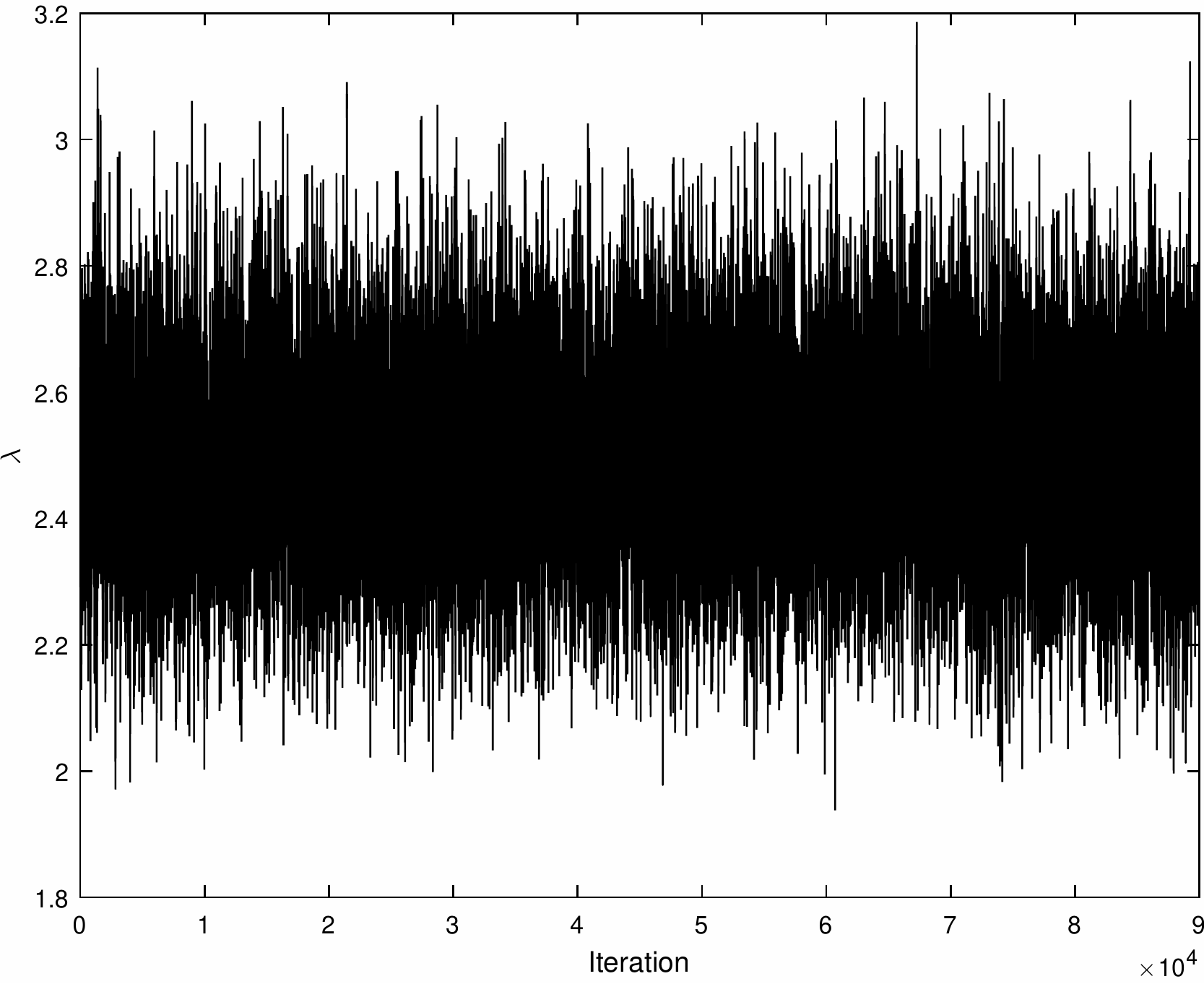}
\caption{Posterior chain for $\theta$ of a Poisson distribution with mean $\theta=2.5$.}
\label{Fig:lambdachain}
\end{figure}
\begin{figure}[h]
\centering
\includegraphics[scale=0.4]{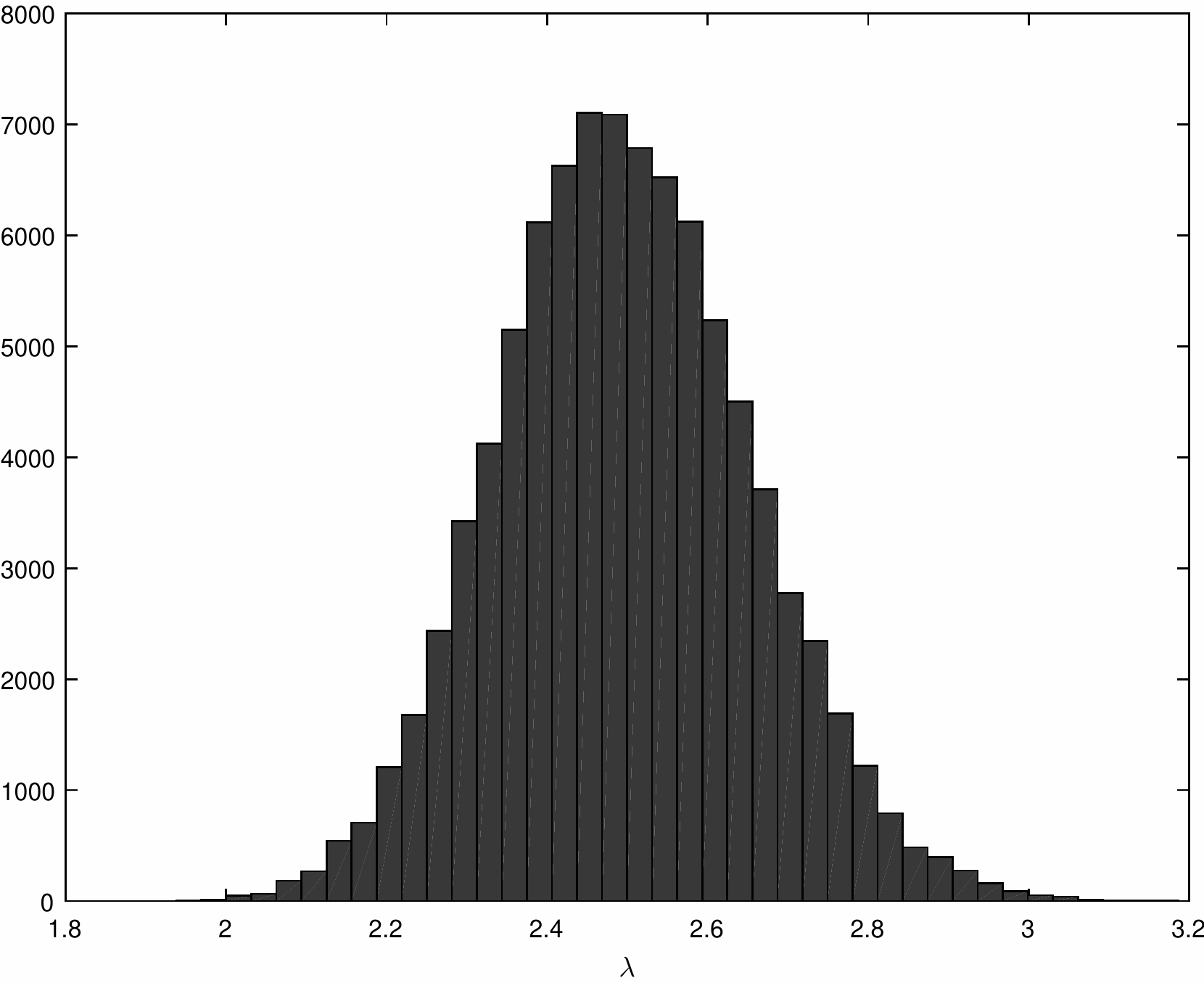}
\caption{Posterior histogram for $\theta$ of a Poisson distribution with mean $\theta=2.5$.}
\label{Fig:lambdahist}
\end{figure}

For the second illustration we have sampled $n=100$ observations from a normal density with mean $\mu=5$ and (known) variance equal to one. Here, for a total of 100,000 iterations, we had the results in Figure \ref{Fig:muchain} (posterior sample) and Figure \ref{Fig:muhist} (posterior histogram). In this case too the results are satisfactory.
\begin{figure}[h]
\centering
\includegraphics[scale=0.4]{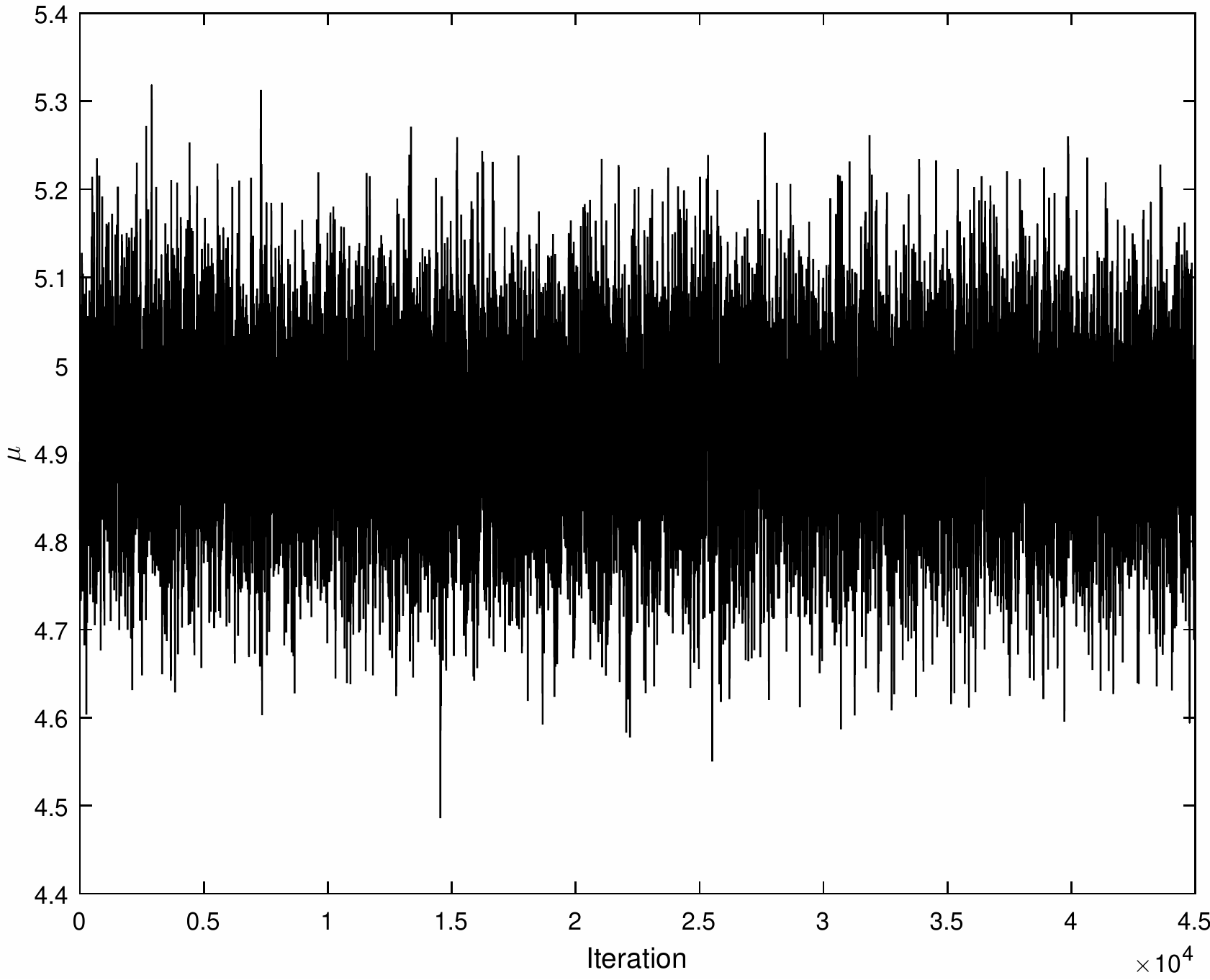}
\caption{Posterior chain for $\mu$ of a normal density with mean $\mu=5$.}
\label{Fig:muchain}
\end{figure}
\begin{figure}[h]
\centering
\includegraphics[scale=0.4]{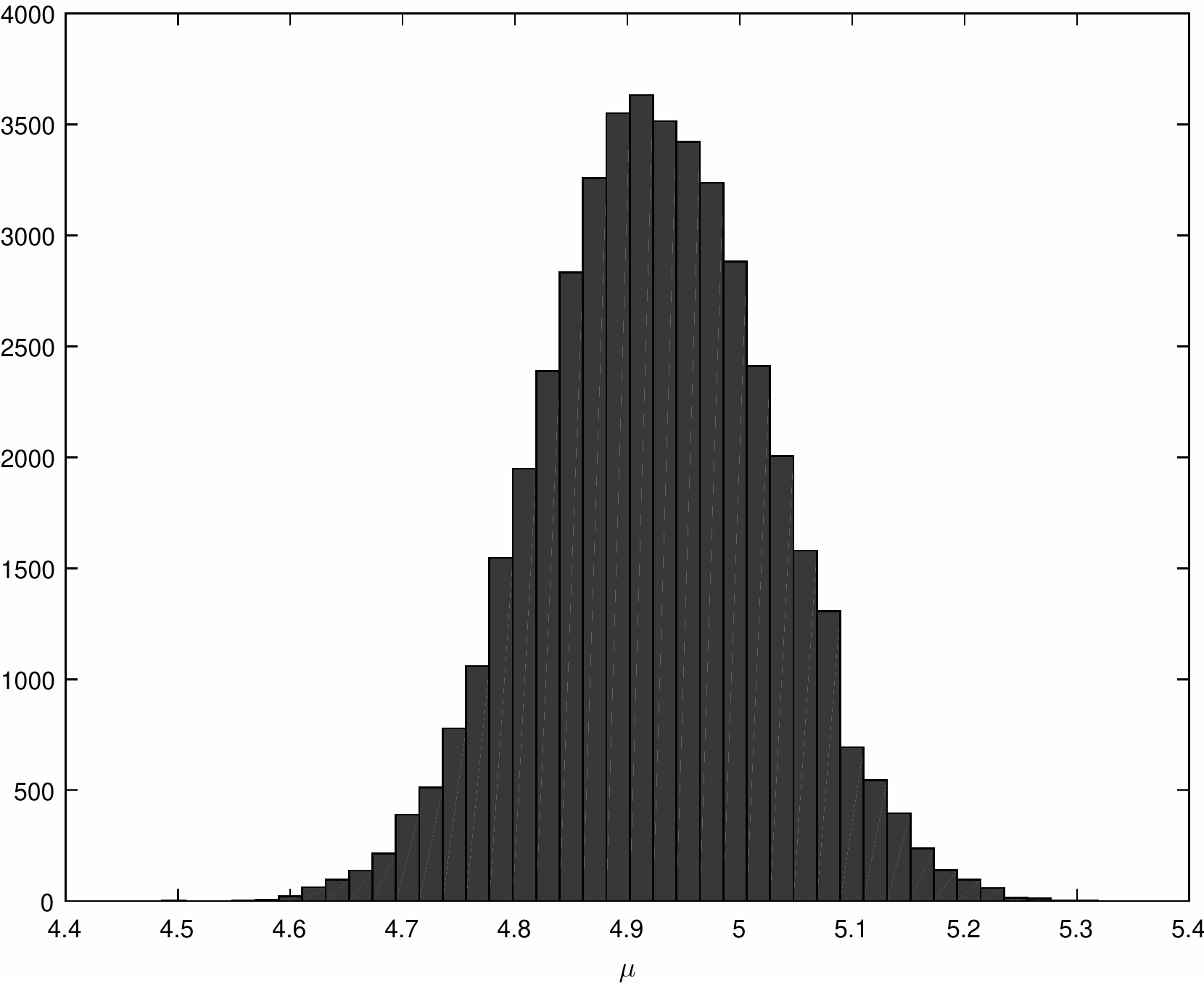}
\caption{Posterior histogram for $\mu$ of a normal density with mean $\mu=5$.}
\label{Fig:muhist}
\end{figure}

\subsection{Mixture models}\label{sc_mixture}
In this section we discuss the application of the proposed method to a scenario where objective priors have been notoriously challenging. If we consider mixture models, it is well known that the use of objective priors \citep{Grazian2015} has to be done carefully, as this type of model is subject to issues related to non-identifiability, unbounded likelihoods, etc. The fact is that improper priors may not be appropriate as we might not observe outcomes from every component of the mixture \citep{Titt1985}. For example, \cite{Grazian2015} show that Jeffreys prior is suitable for mixtures of normal densities only in certain circumstances; that is when the unknown parameters are the weights. If the unknown parameters are the means or the variances, then using Jeffreys prior leads to improper posteriors. In particular, if the unknown parameters are the means only, we need at most two components to have proper posteriors; while, if the unknown parameters are the variance or the mean and the variances, then Jeffreys priors are not suitable for inference. The authors also generalise the result to any type of mixture.

Given that the objective prior we propose is proper, it allows to make inference on the parameters of a mixture density as the yielded posteriors will be proper. As an illustration, we consider a mixture of three normal densities, where the weights and the parameters of the components are unknown. In particular, we sample from the following model
\begin{equation}\label{eq_mixture}
f(y|\omega_1,\omega_2,\omega_3,\mu_1,\mu_2,\mu_3,\sigma_1^2,\sigma_2^2,\sigma_3^2)= \sum_{i=1}^3 \omega_i N(\mu_i,\sigma_i^2),
\end{equation}
with weights $\omega_1=0.25$, $\omega_2=0.35$ and $\omega_3=0.40$, means $\mu_1=-3.5$, $\mu_2=0$ and $\mu_3=2.5$, and variances $\sigma_1^2=0.5$, $\sigma_2^2=0.1$ and $\sigma_3^2=1.2$.
Note that we have chosen mixture components that are reasonably distant, so not to be forced to impose any constraint to overcome identifiability, as the focus of the paper is not in this sense. However, the implementation of constraints in that sense is straightforward. For the parameters we have chosen prior independence, where each prior is the prior on the space $(0,1)$ for the weights, on the space $(-\infty,\infty)$ for the means and on the space $(0,\infty)$ for the variances, in agreement with Section \ref{sc_illustrations}. The prior on $(-\infty,\infty)$ is the symmetrised version from $\Theta=(0,\infty)$.

In the first illustration we sample $n=100$ observations from the above mixture. In Figure \ref{fig:mixtsamp100} we can see an histogram of the sampled data.
\begin{figure}[h]
\centering
\includegraphics[width=.45\linewidth]{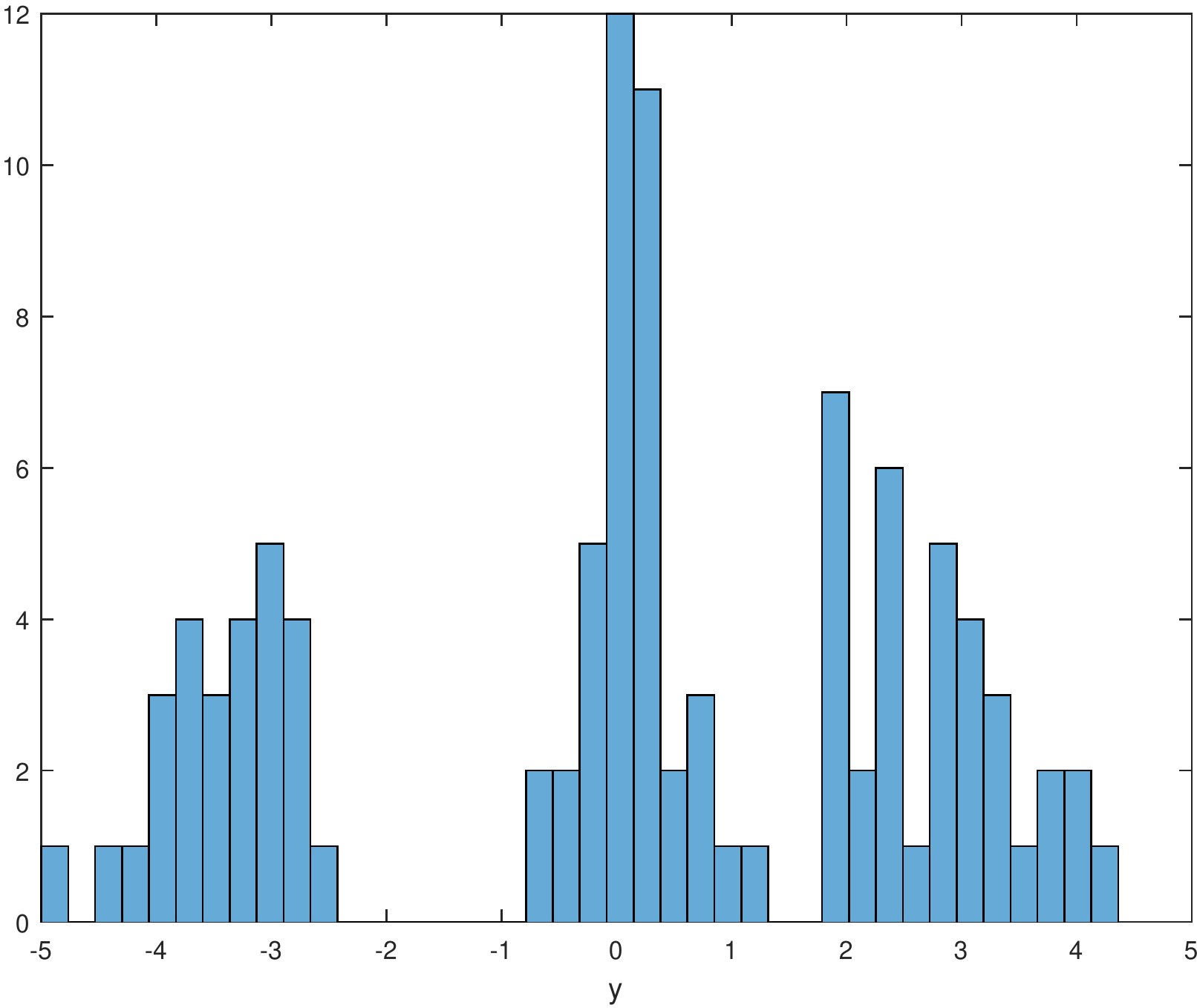}
\caption{Sample of size $n=100$ from mixture \eqref{eq_mixture}.}
\label{fig:mixtsamp100}
\end{figure}

The posterior distributions have been obtained by implementing the Metropolis sampling defined above within a Gibbs sampler. Using the same initial settings of $c$ and $u(0)$ as above, for a total of 10,000 iterations, we obtain the results in Figure \ref{fig:misture100}. The plots show clear convergence of the posterior chains and reasonable inferential results.
\begin{figure}[h]
\begin{subfigure}{.5\textwidth}
  \centering
  \includegraphics[width=.8\linewidth]{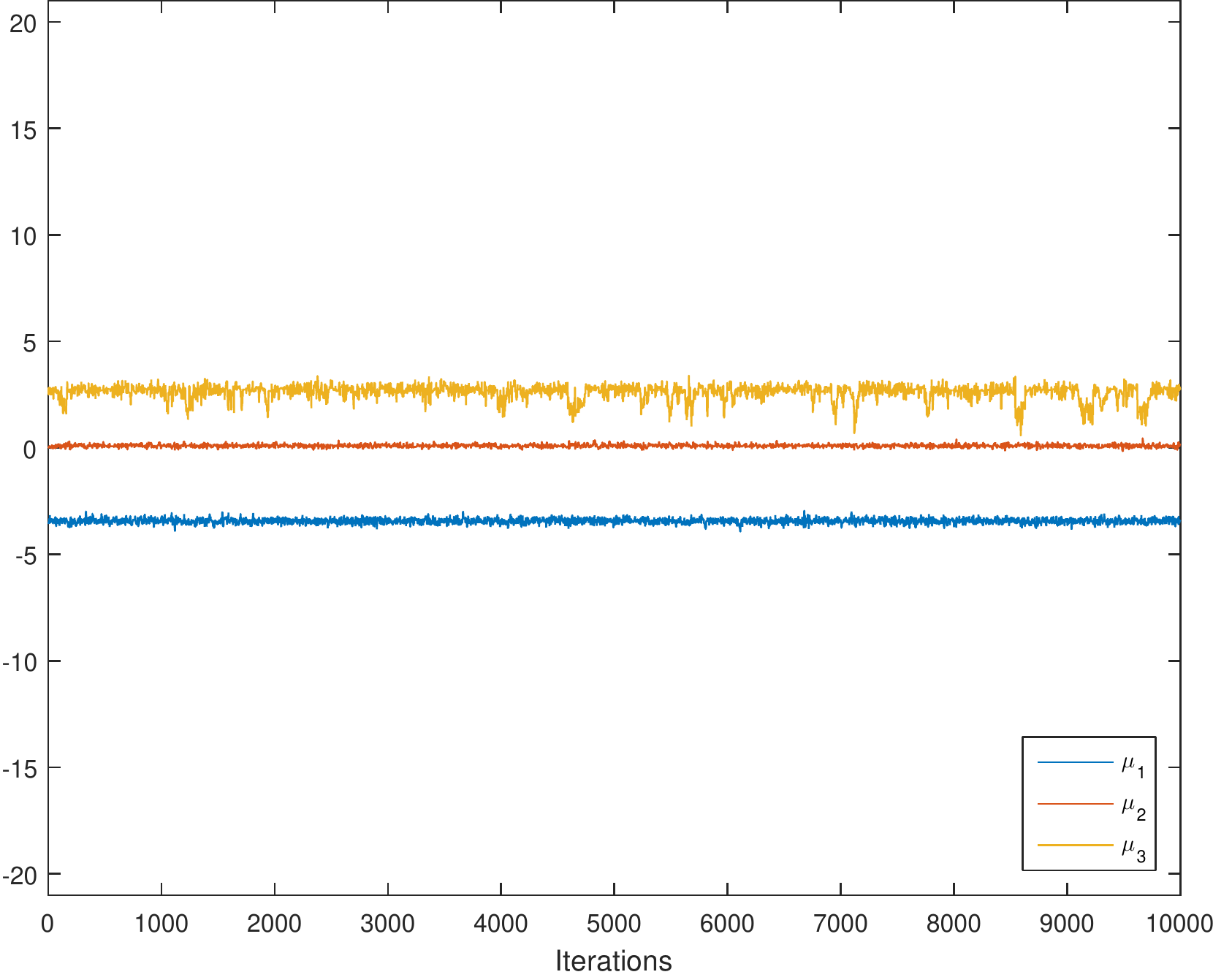}
  \caption{}
  \label{fig:mixa100}
\end{subfigure}%
\begin{subfigure}{.5\textwidth}
  \centering
  \includegraphics[width=.8\linewidth]{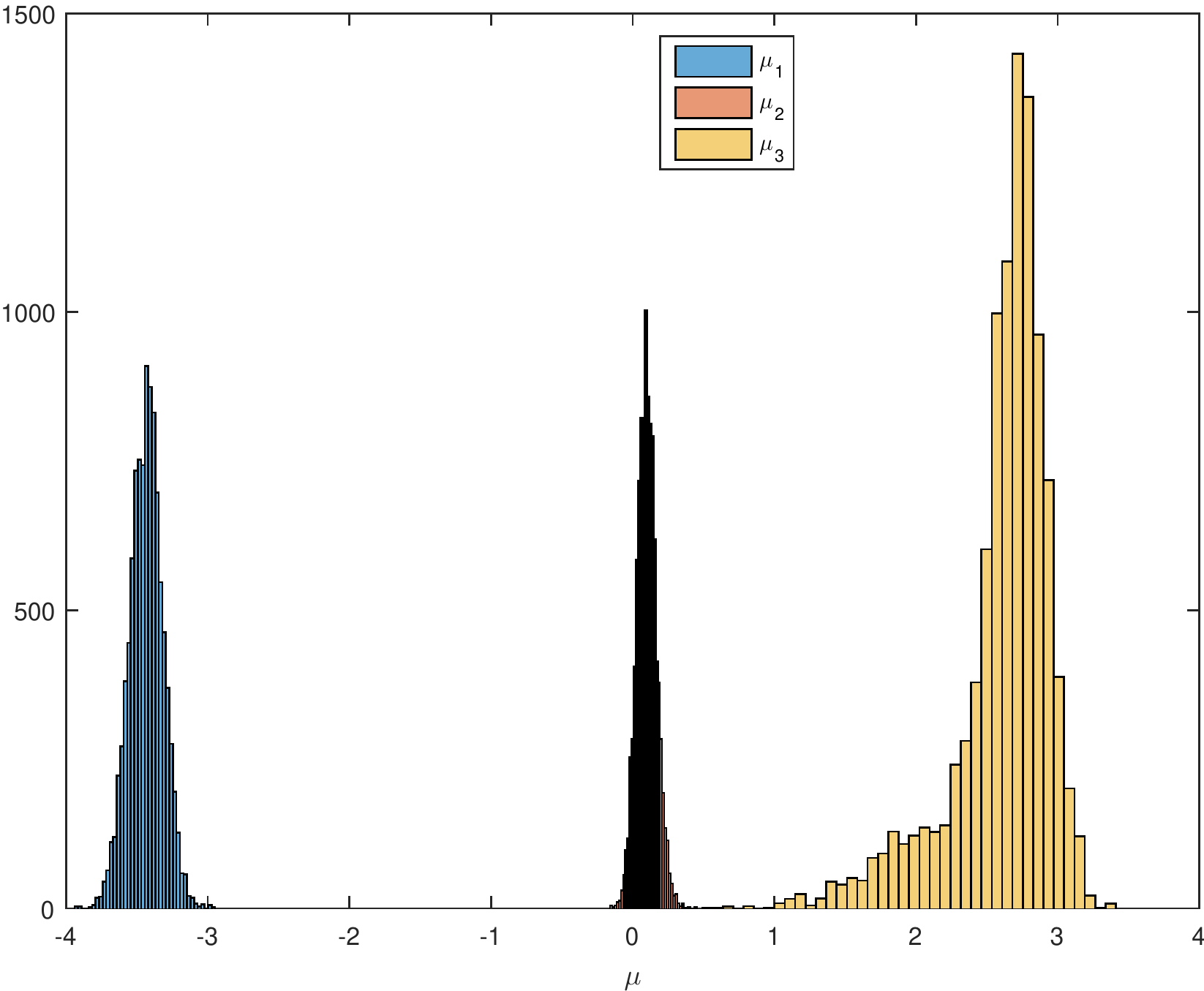}
  \caption{}
  \label{fig:mixb100}
\end{subfigure}
\begin{subfigure}{.5\textwidth}
  \centering
  \includegraphics[width=.8\linewidth]{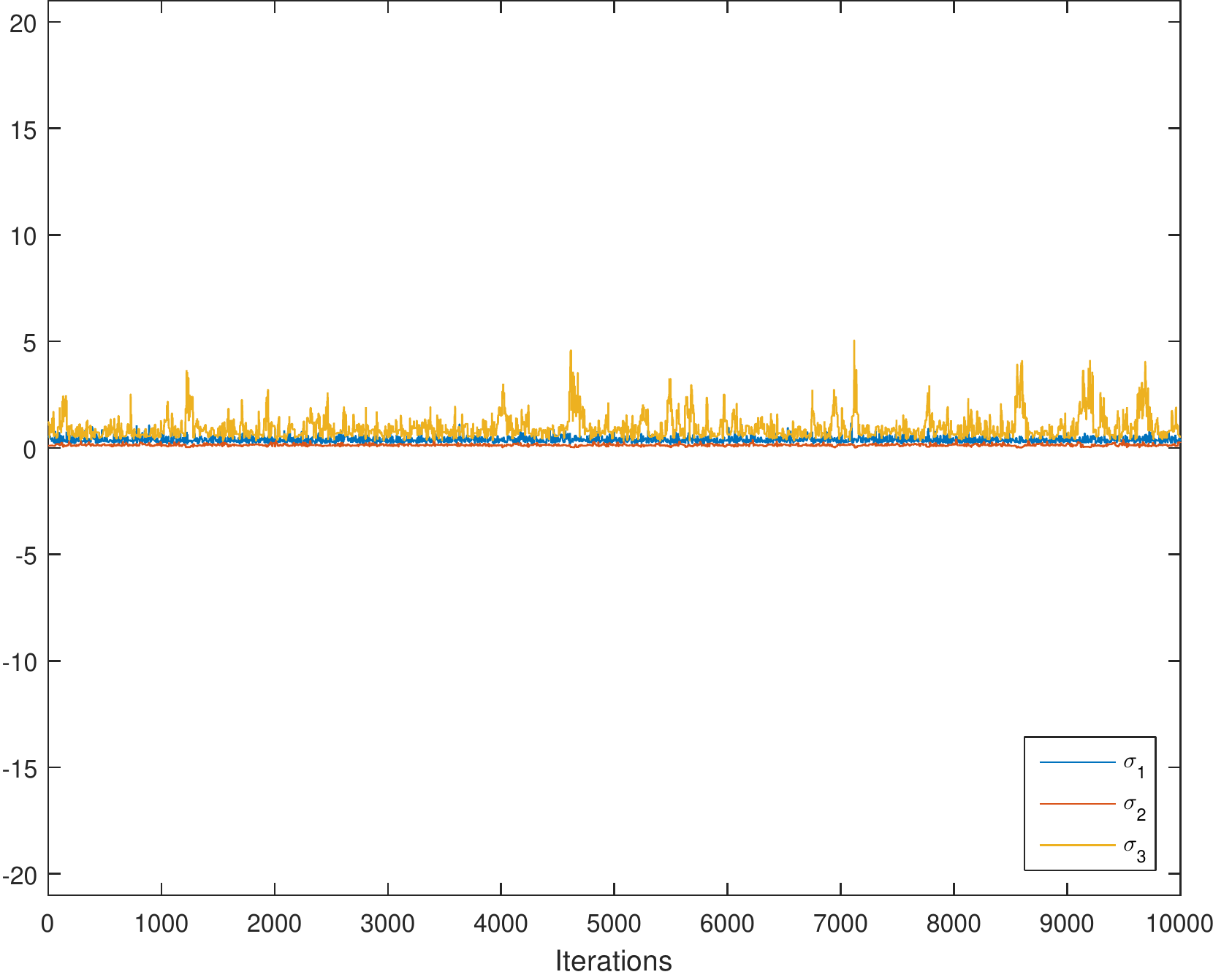}
  \caption{}
  \label{fig:mixc100}
\end{subfigure}%
\begin{subfigure}{.5\textwidth}
  \centering
  \includegraphics[width=.8\linewidth]{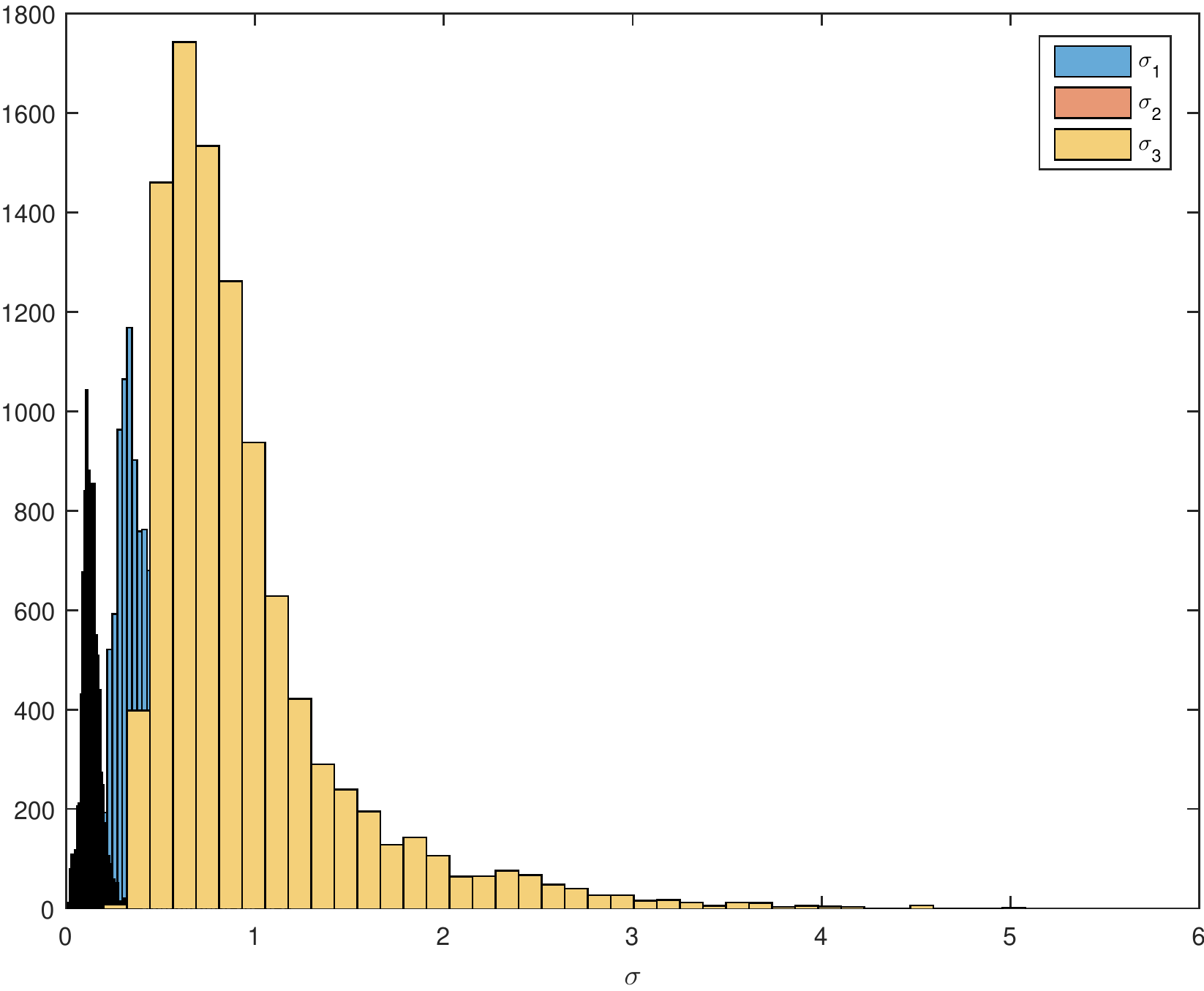}
  \caption{}
  \label{fig:mixd100}
\end{subfigure}
\begin{subfigure}{.5\textwidth}
  \centering
  \includegraphics[width=.8\linewidth]{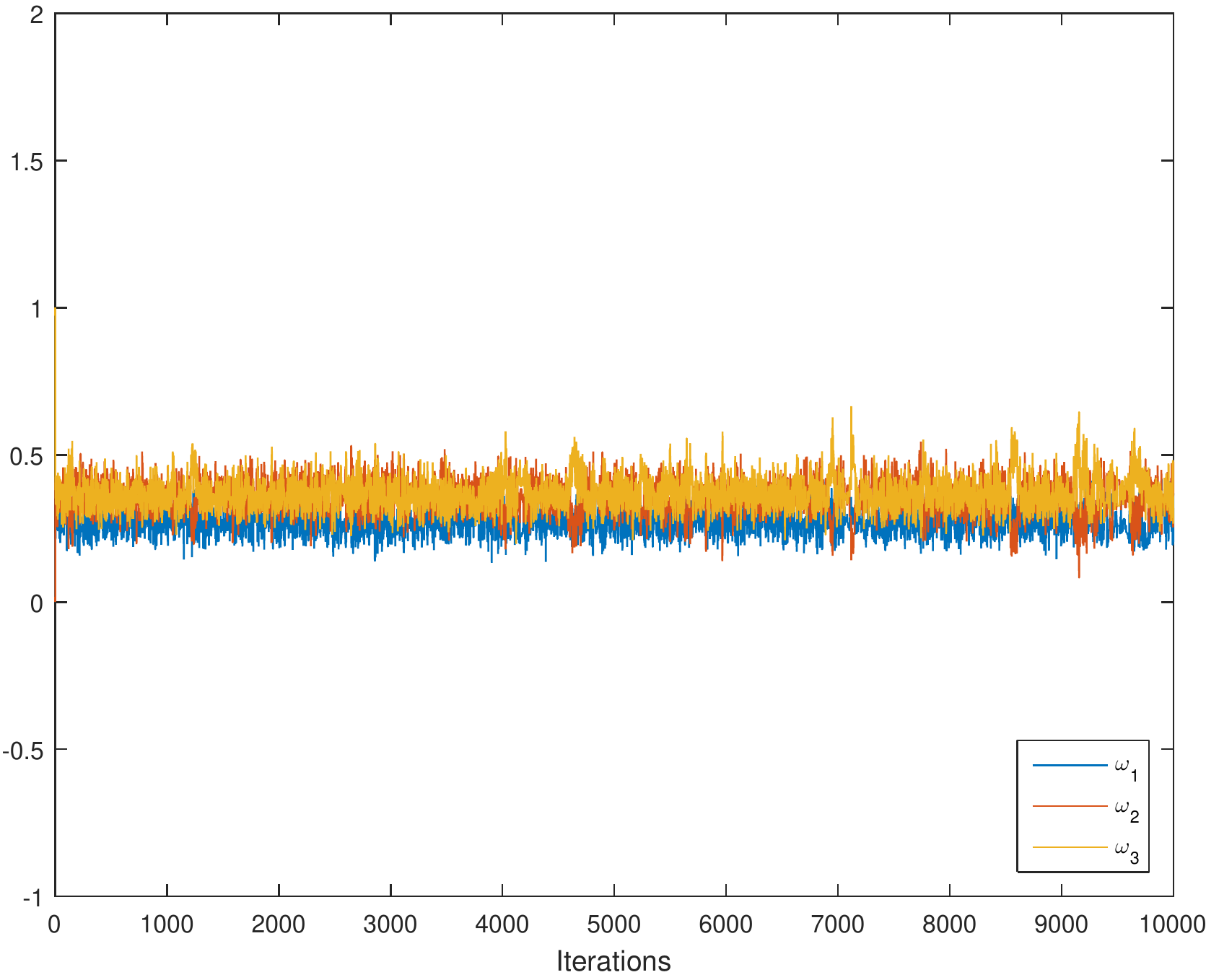}
  \caption{}
  \label{fig:mixe100}
\end{subfigure}%
\begin{subfigure}{.5\textwidth}
  \centering
  \includegraphics[width=.8\linewidth]{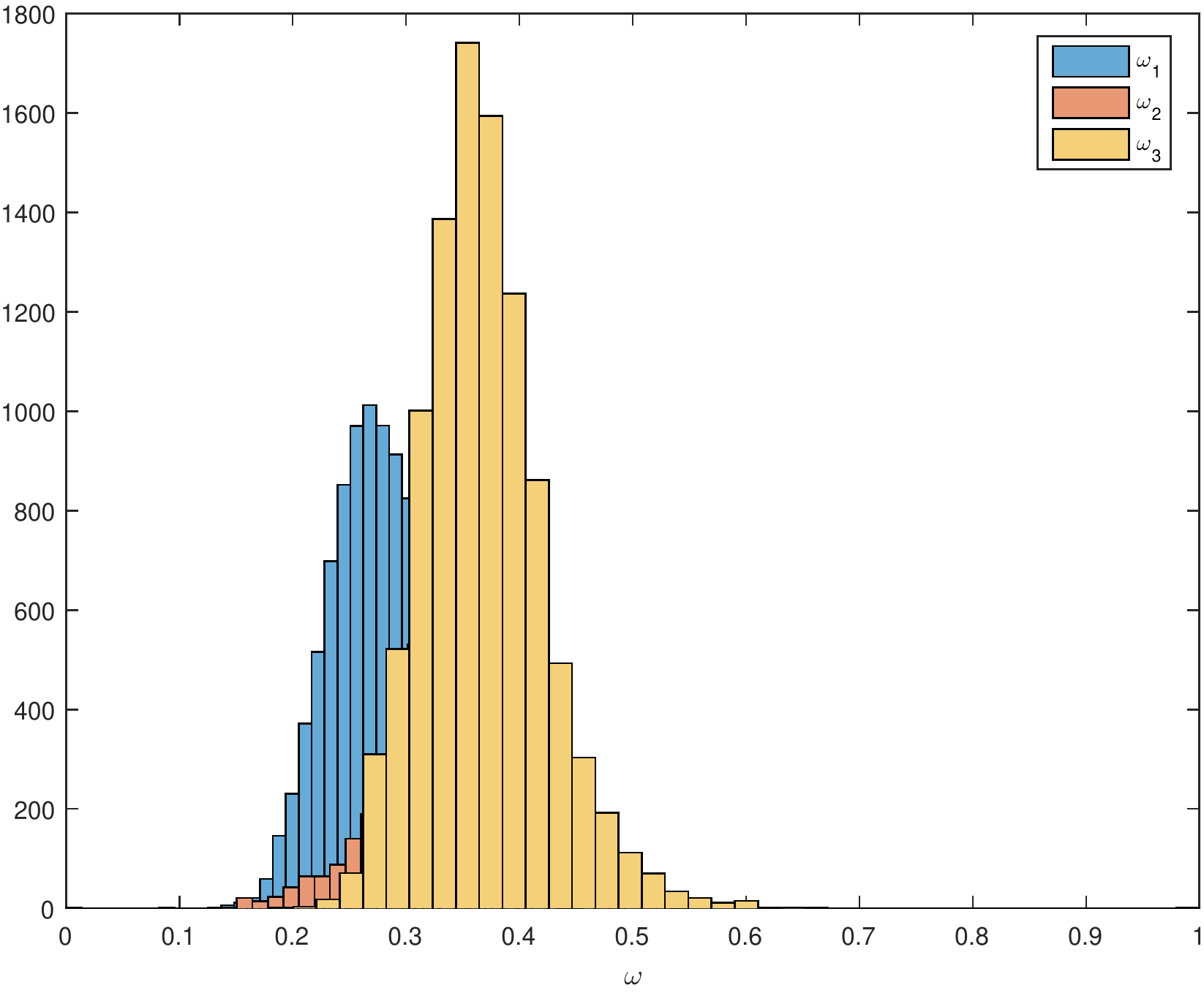}
  \caption{}
  \label{fig:mixf100}
\end{subfigure}
\caption{Posterior traces (left) and histograms (right) for the means, (a) and (b), the variances, (c) and (d), the weights, (e) and (f), for a sample of size $n=100$ from mixture \eqref{eq_mixture}.}
\label{fig:misture100}
\end{figure}

In the second illustration we have increased the sample size $(n=250)$ from the above model \eqref{eq_mixture}, performing the same procedures with the same prior distributions. The histogram of the sample data is in Figure \ref{fig:mixtsamp250}, while the posterior traces and histograms are in Figure \ref{fig:misture250}.
\begin{figure}[H]
\centering
\includegraphics[width=0.45\linewidth]{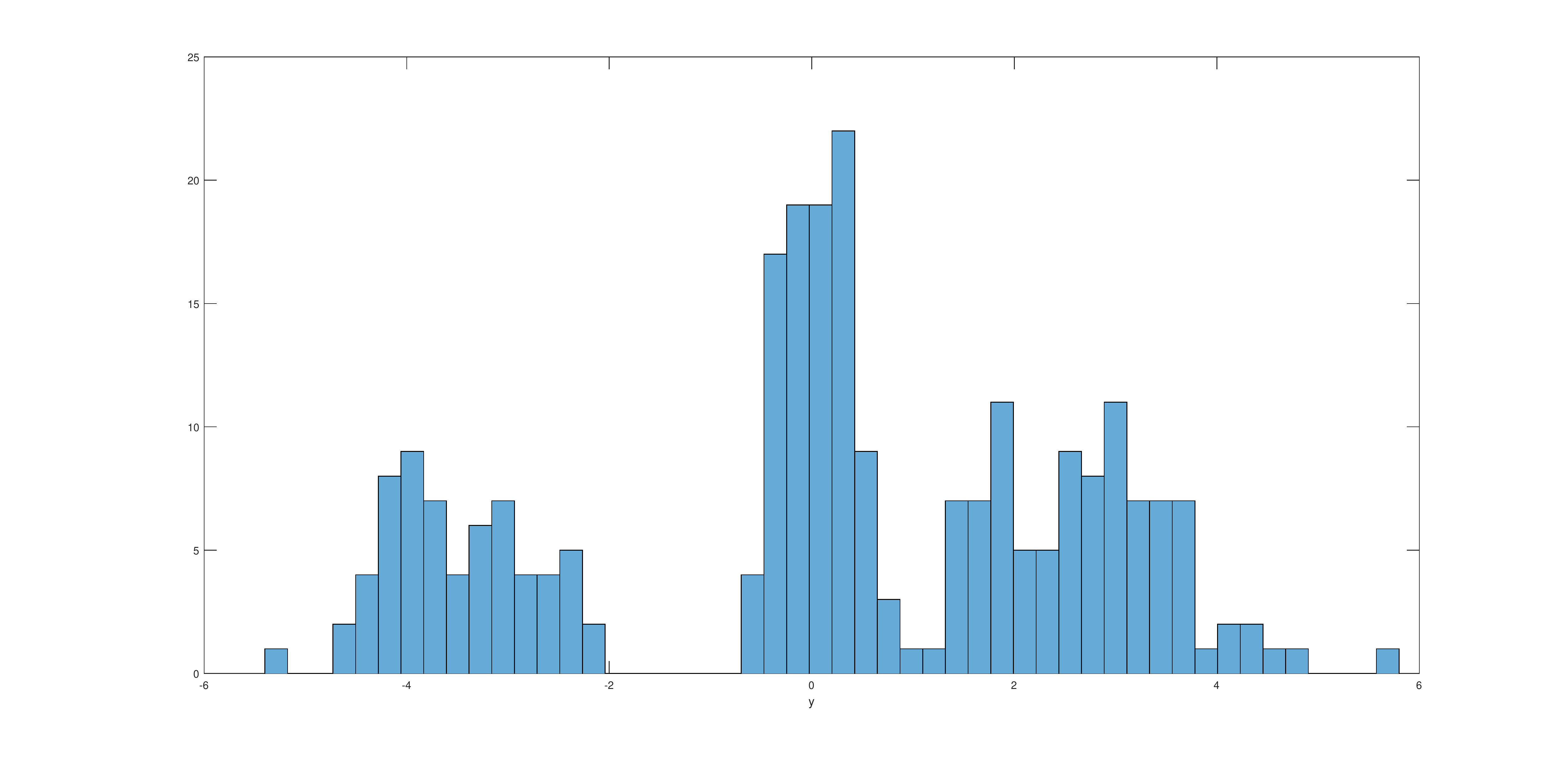}
\caption{Sample of size $n=250$ from mixture \eqref{eq_mixture}.}
\label{fig:mixtsamp250}
\end{figure}

\begin{figure}[h]
\begin{subfigure}{.5\textwidth}
  \centering
  \includegraphics[width=.8\linewidth]{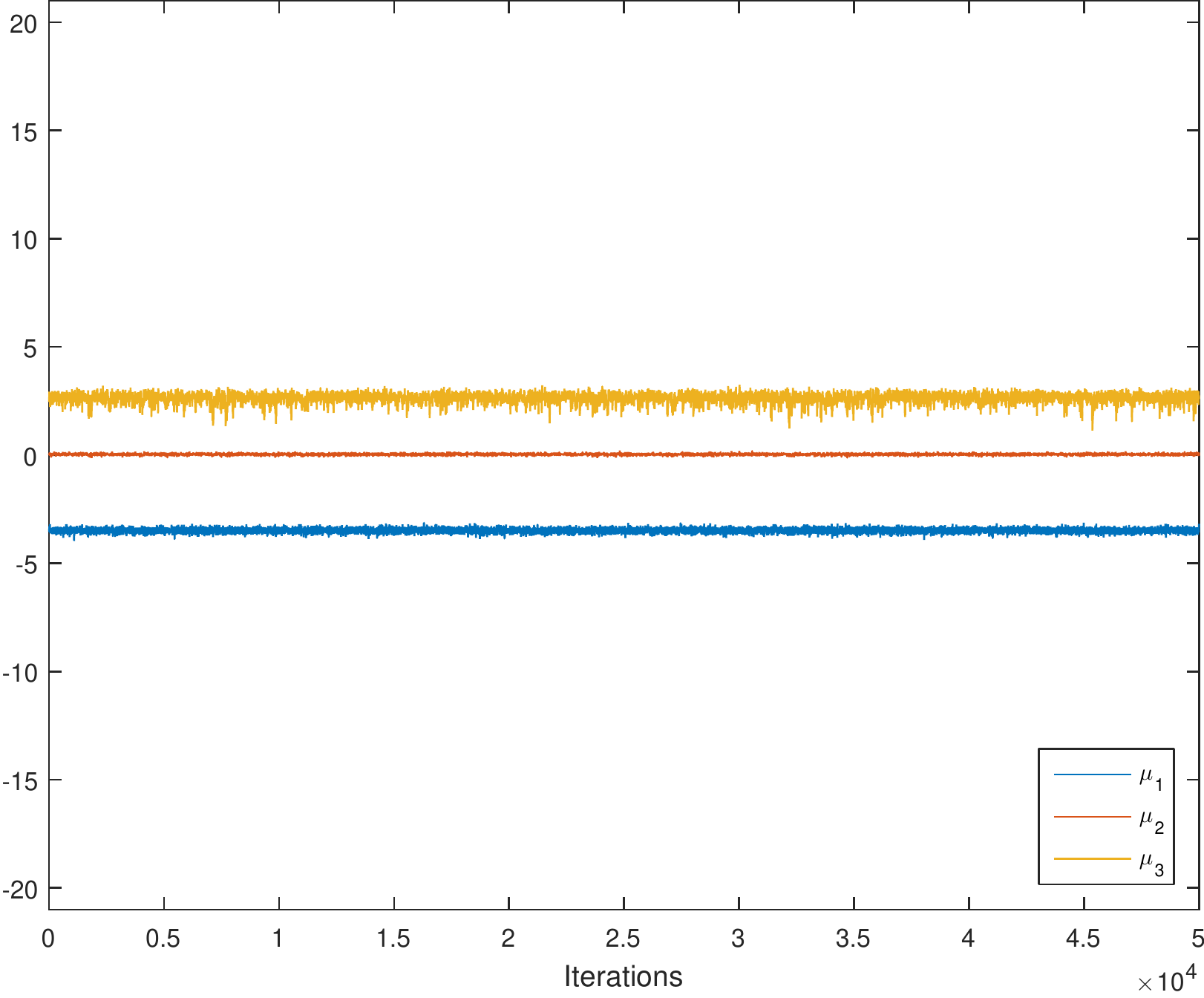}
  \caption{}
  \label{fig:mixa100}
\end{subfigure}%
\begin{subfigure}{.5\textwidth}
  \centering
  \includegraphics[width=.8\linewidth]{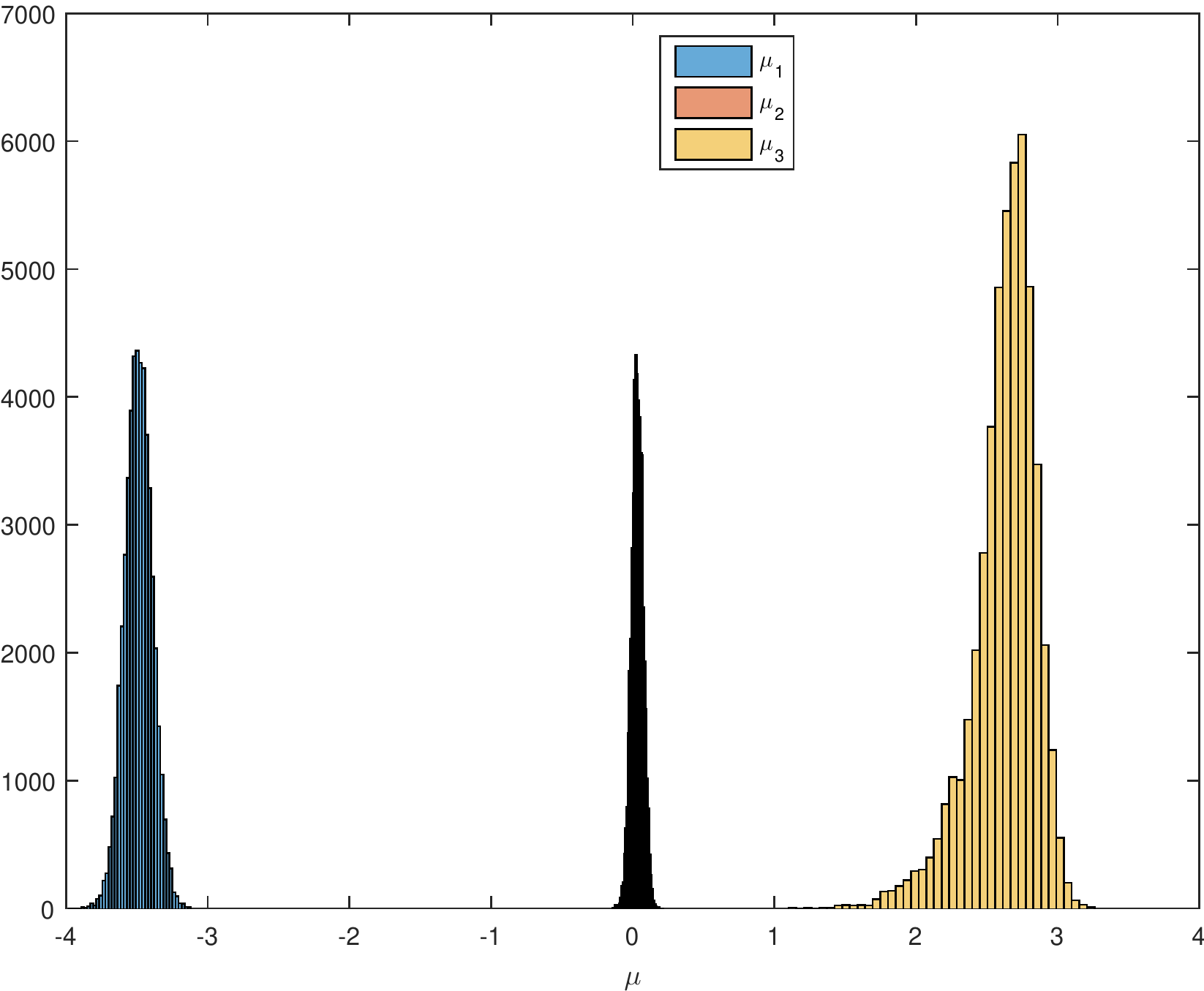}
  \caption{}
  \label{fig:mixb100}
\end{subfigure}
\begin{subfigure}{.5\textwidth}
  \centering
  \includegraphics[width=.8\linewidth]{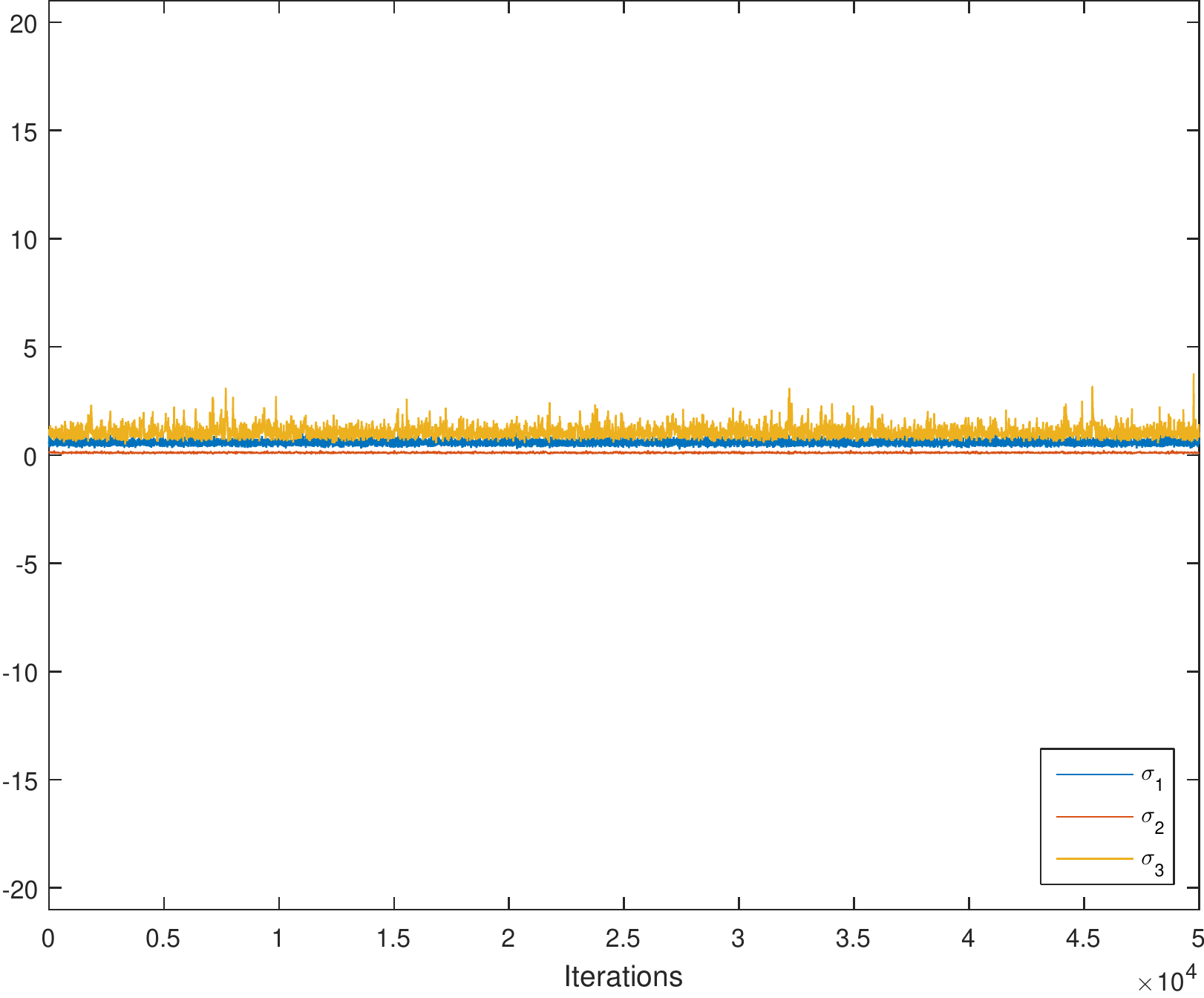}
  \caption{}
  \label{fig:mixc100}
\end{subfigure}%
\begin{subfigure}{.5\textwidth}
  \centering
  \includegraphics[width=.8\linewidth]{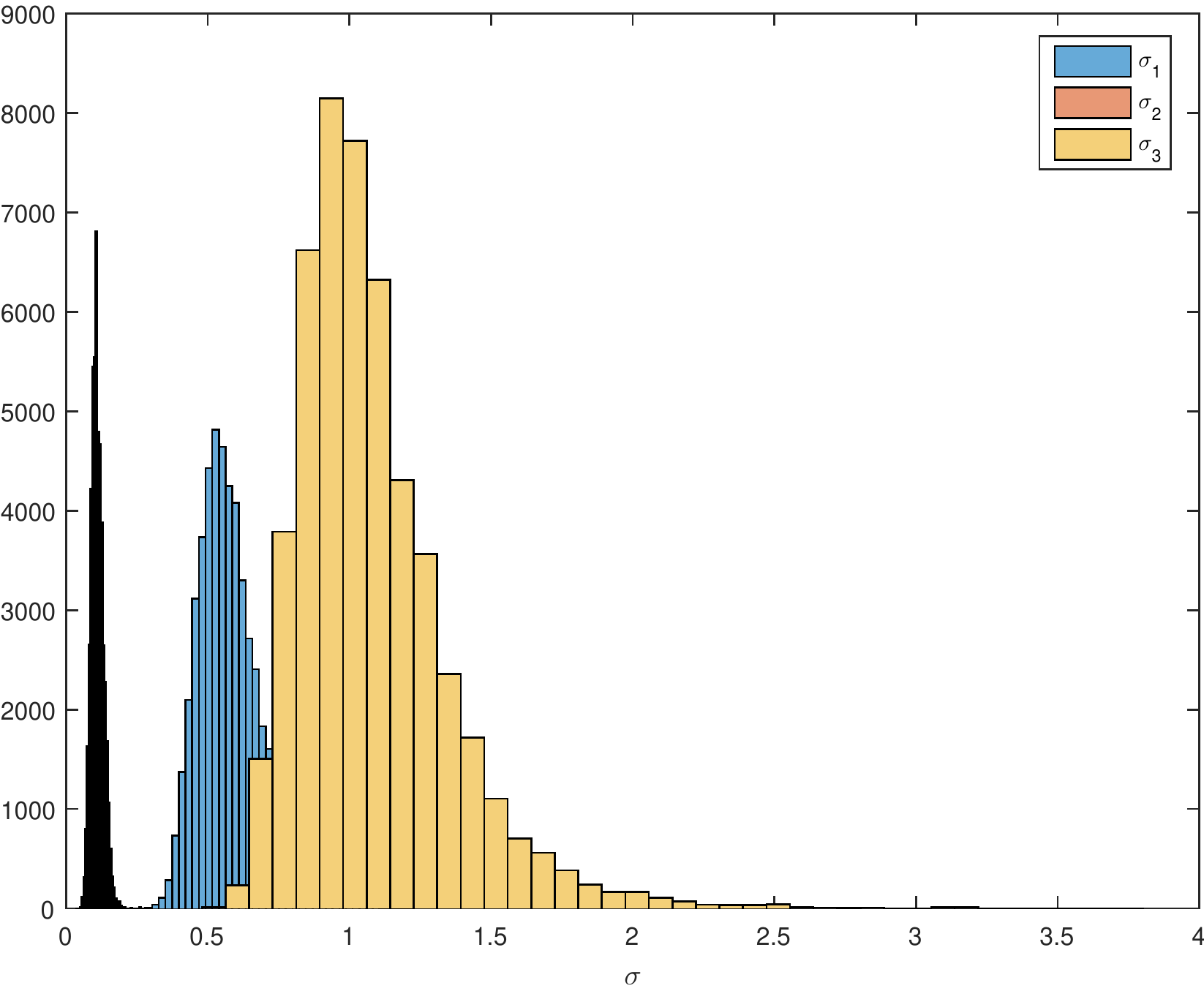}
  \caption{}
  \label{fig:mixd100}
\end{subfigure}
\begin{subfigure}{.5\textwidth}
  \centering
  \includegraphics[width=.8\linewidth]{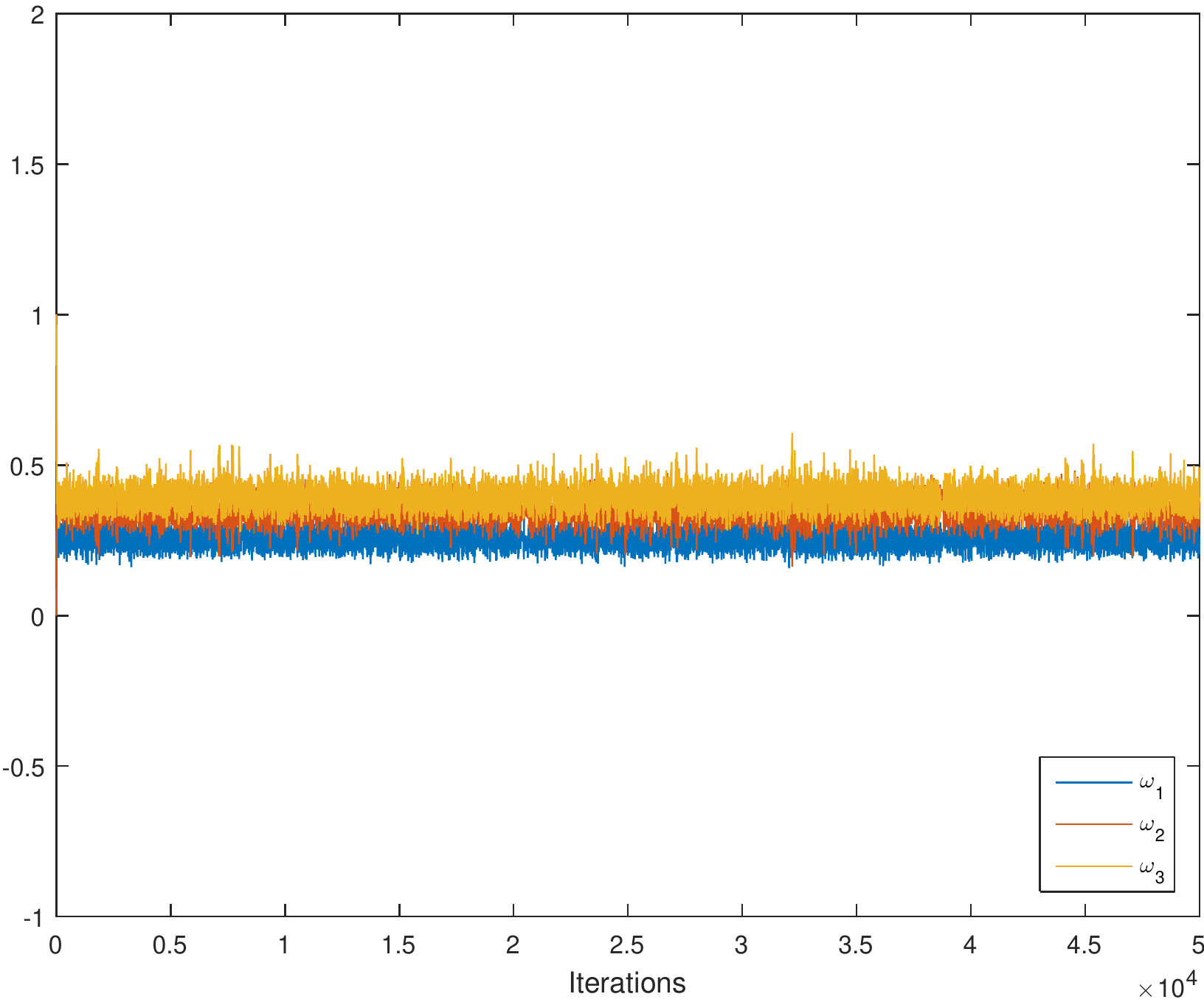}
  \caption{}
  \label{fig:mixe100}
\end{subfigure}%
\begin{subfigure}{.5\textwidth}
  \centering
  \includegraphics[width=.8\linewidth]{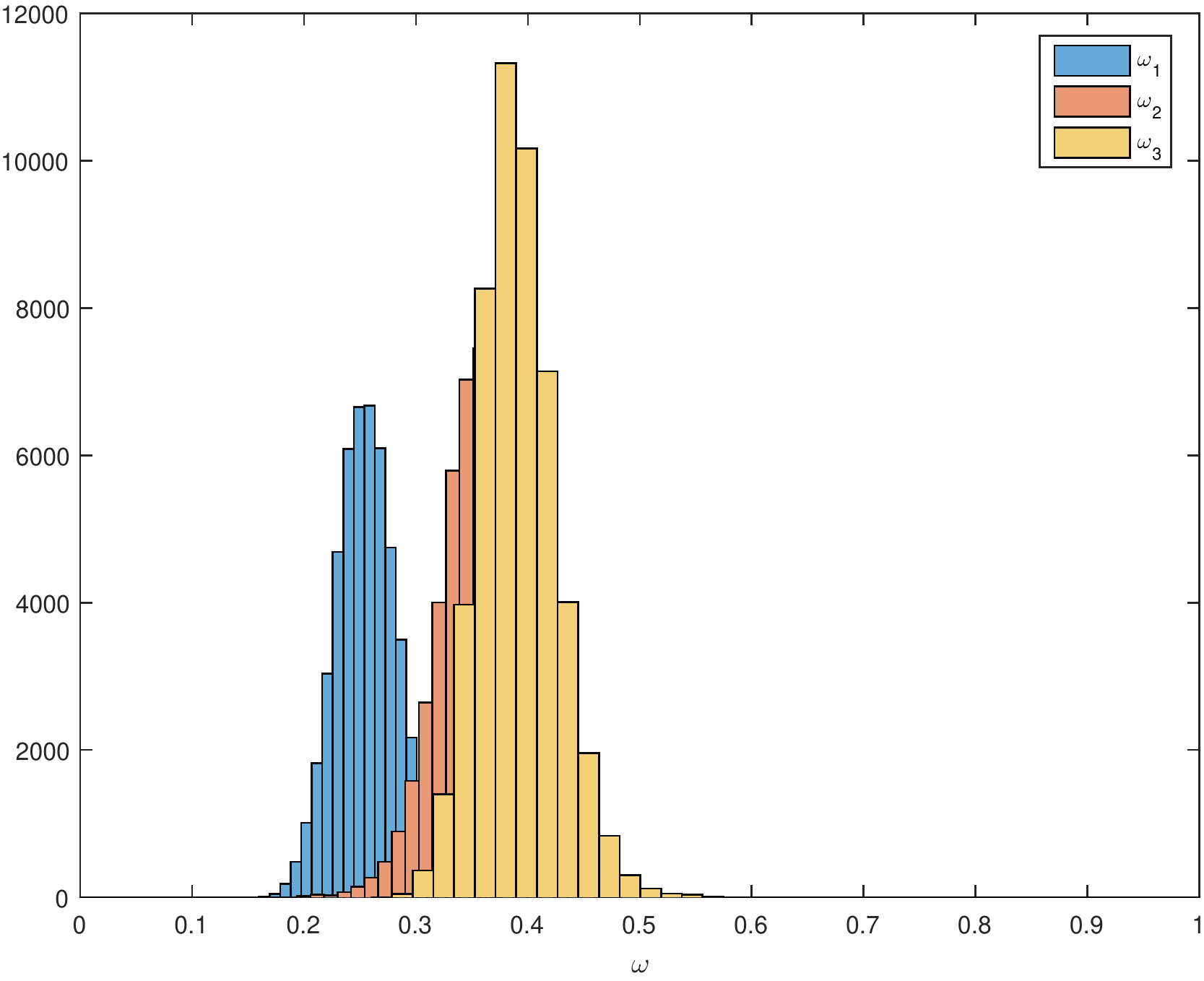}
  \caption{}
  \label{fig:mixf100}
\end{subfigure}
\caption{Posterior traces (left) and histograms (right) for the means, (a) and (b), the variances, (c) and (d), the weights, (e) and (f), for a sample of size $n=250$ from mixture \eqref{eq_mixture}.}
\label{fig:misture250}
\end{figure}

We see in this case as well good performances of the prior with good inferential results.

\subsection{Model comparison}\label{sc_modcomp}
In this section we employ the objective prior in model selection. To illustrate a more comprehensive use, we decide to compare a Poisson distribution and a geometric distribution for some values of the parameters and for two sample sizes: $n=30$ and $n=100$. The problem is then formalised by considering the following two models
$$M_1 = \left\{f_1(x|\theta)=\theta^x e^{-\theta}/x!\, ,p_1(\theta)\right\} \quad \mbox{and} \quad M_2 = \left\{f_2(x|\phi)=\phi(1-\phi)^x\, ,p_2(\phi)\right\},$$
where $p_1(\theta)$ is the prior for the case $(0,\infty)$, as defined in Section \ref{sc_illustrations}, and $p_2(\phi)$ is the uniform prior on the interval $(0,1)$. By setting $P(M_1)=P(M_2)$, we perform the model comparison by means of the Bayes Factor:
$$B_{12} = \frac{m_1(x)}{m_2(x)} = \frac{\int f_1(x|\theta)p_1(\theta)\, d\theta}{\int f_2(x|\phi)p_2(\phi)\, d\phi}.$$
The simulation exercise is performed by first drawing 100 samples from $M_1$ and counting the proportion of times the $BF_{12}>1$; then, we draw 100 samples from $M_2$ and count the proportion of times $BF_{12}<1$. ideally, in both cases, we would like to have a proportion close to $100\%$. Table \ref{tab:modcomp_30} and Table \ref{tab:modcomp_100} show the simulation results for, respectively, $n=30$ and $n=100$. Under each true model we report the minimum and the maximum Bayes Factor $BF_{12}$ and the number of exceptions, indicating the number of times the ``wrong'' model is selected. While for $n=100$ there are no exceptions, we note one exceptions in three case for $n=30$. Given the amount of information about the true model in the data is limited, the exceptions do not raise any particular concern.

\begin{table}[H]
\centering
\begin{tabular}{|c|c|c|c|c|c|c|c|}
\hline 
\multicolumn{2}{|c|}{True model} & \multicolumn{2}{c|}{$M_1$} & \multicolumn{2}{c|}{$M_2$} & \multicolumn{2}{c|}{Exceptions} \\ 
\hline 
$\theta$ & $\phi$ & $\min$ & $\max$ & $\min$ & $\max$ & $M_1$ & $M_2$ \\ 
\hline 
5 & 0.5 & $611.77$ & $1.07 \times 10^{11}$ & $2.97 \times 10^{-45}$ & $7.08 \times 10^{-21}$ & 0 & 0 \\ 
\hline 
2 & 0.5 & $0.69$ & $4.02 \times 10^{05}$ & $1.45 \times 10^{-11}$ & $0.45$ & 1 & 0 \\ 
\hline 
2 & 0.2 & $3.00$ & $4.13 \times 10^{06}$ & $2.87 \times 10^{-77}$ & $58.39$ & 0 & 1 \\ 
\hline 
2 & 0.8 & $0.22$ & $8.15 \times 10^{05}$ & $1.66 \times 10^{-23}$ & $2.28 \times 10^{-08}$ & 1 & 0 \\ 
\hline 
5 & 0.8 & $1.28 \times 10^{03}$ & $6.58 \times 10^{10}$ & $2.92 \times 10^{-60}$ & $3.69 \times 10^{-45}$ & 0 & 0 \\ 
\hline 
\end{tabular}
\caption{Model comparison for $n=30$. Minimum and maximum Bayes Factor under true model $M_1$ (Poisson) and $M_2$ (geometric) for 100 draws.}\label{tab:modcomp_30}
\end{table}

\begin{table}[H]
\centering
\begin{tabular}{|c|c|c|c|c|c|c|c|}
\hline 
\multicolumn{2}{|c|}{True model} & \multicolumn{2}{c|}{$M_1$} & \multicolumn{2}{c|}{$M_2$} & \multicolumn{2}{c|}{Exceptions} \\ 
\hline 
$\theta$ & $\phi$ & $\min$ & $\max$ & $\min$ & $\max$ & $M_1$ & $M_2$ \\ 
\hline 
5 & 0.5 & $2.47 \times 10^{13}$ & $1.68 \times 10^{30}$ & $6.62 \times 10^{-132}$ & $2.15 \times 10^{-79}$ & 0 & 0 \\ 
\hline 
2 & 0.5 & $5.13 \times 10^{03}$ & $2.90 \times 10^{14}$ & $4.10 \times 10^{-31}$ & $2.96 \times 10^{-09}$ & 0 & 0 \\ 
\hline 
2 & 0.2 & $5.87 \times 10^{03}$ & $1.76 \times 10^{16}$ & $9.43 \times 10^{-127}$ & $2.19 \times 10^{-28}$ & 0 & 0 \\ 
\hline 
2 & 0.8 & $2.80 \times 10^{04}$ & $2.40 \times 10^{15}$ & $8.97 \times 10^{-67}$ & $1.25 \times 10^{-43}$ & 0 & 0 \\ 
\hline 
5 & 0.8 & $7.57 \times 10^{15}$ & $1.39 \times 10^{29}$ & $4.96 \times 10^{-189}$ & $8.95 \times 10^{-147}$ & 0 & 0 \\ 
\hline 
\end{tabular}
\caption{Model comparison for $n=100$. Minimum and maximum Bayes Factor under true model $M_1$ (Poisson) and $M_2$ (geometric) for 100 draws.}\label{tab:modcomp_100}
\end{table}

\subsubsection{Nested models} When models under comparison are nested, there are particular considerations which are needed to be taken into account; see, for example, \cite{CFR2013}. The point is that a diffuse type prior for the larger model will end up lacking focus so that the mass assigned to the smaller model is too much. However,  our argument is that if two nested models are under comparison, it is essential, at least from a coherent point of view, to center the larger prior on the fixed part of the smaller one. Let us elaborate.

Suppose $f(y|\theta)$ for $\theta\in\Theta_1$ is the larger model and the smaller one is given by $\theta\in\Theta_0$ where $\Theta_0\subset \Theta_1$. Typically $\Theta_1$ will be a higher dimension to $\Theta_0$ and to get the latter from the former one fixes a particular value in the higher dimension. To make this concrete, let us consider Example 2.1 from \cite{CFR2013}, where $M_0: f(y|\theta_0)$ is binomial$(n,\theta_0)$, with $\theta_0=1/4$ fixed, and $M_1:f(y|\theta)$ is binomial$(n,\theta)$, for which a prior for $\theta$,  $p(\theta)$, is required. Given the nature of the comparison it is our argument that $p(\theta)$ must be centered on $\theta_0$. 

We can adapt quite easily the prior obtained in Section 4, the $\Theta=(0,1)$, to be centered on $1/4$ rather than $\half$. Without repeating the mathematics, we can take $u(1/4)=w$ and $c=2$ and then
$$
\begin{array}{ll}
u'=\sqrt{2}\sqrt{e^u-1-u} & \theta>1/4 \\
u'=-\sqrt{2}\sqrt{e^u-1-u} & \theta<1/4.
\end{array}
$$
For the illustration of the prior $p(\theta)$, obtained numerically from $u$, in Fig~\ref{fint} we took $w=1.5$.

\begin{figure}[H]
\begin{center}
\includegraphics[scale=0.4]{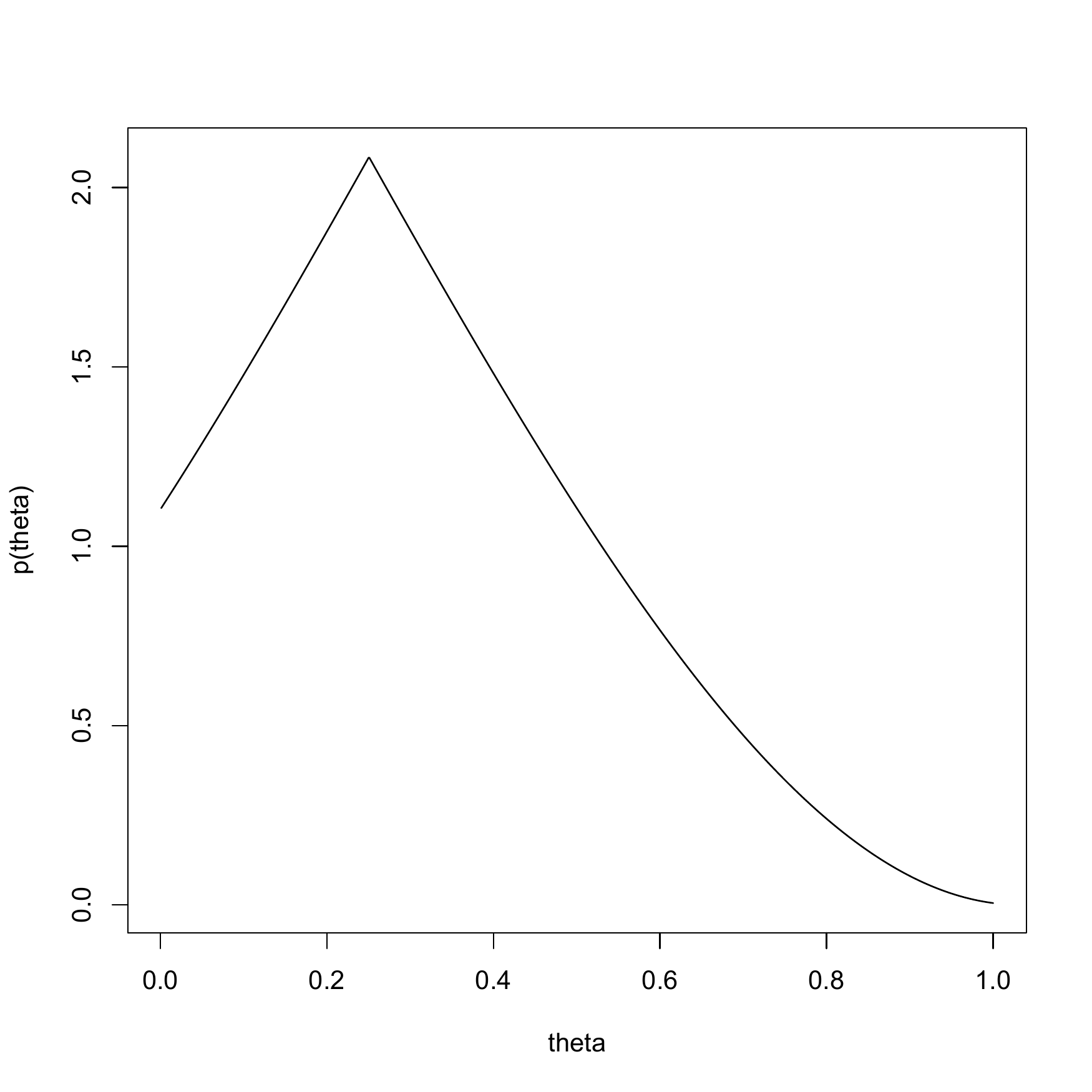}
\caption{Numerical solution for normalized $p$ with $w=1.5$}
\label{fint}
\end{center}
\end{figure}

The proposed prior, centered at $\theta_0=0.25$, is compared with the intrinsic prior in \cite{CFR2013}, that is
$$p^I(\theta|b,t) = \sum_{x=0}^t\mbox{Beta}(\theta|b+x,b+t-x)\mbox{Bin}(x|t,\theta_0),$$
where $b=1$ and $t=8$. The intrinsic prior defined above is centered at $w\theta_0+(1-w)\half$, where $w=t/(2b+t)$, and has behaviour similar to the one in Figure \ref{fint}, giving the Intrinsic Bayes Factor in favour of $M_1$
$$B^I_{10} = \sum_{x=0}^8\frac{B(1+x+y,1+t-x+n-y)}{B(1+x,1+t-x)\theta_0^y(1-\theta_0)^{n-y}},$$
where  $n=12$. A relatively small sample size allows to better capture differences in the performance of the two priors. Figure \ref{fig:example2_1} shows the posterior probability for $M_1$, i.e. $P(M_1|y)=(1+1/B^I_{10})^{-1}$.
\begin{figure}[H]
\begin{center}
\includegraphics[scale=0.4]{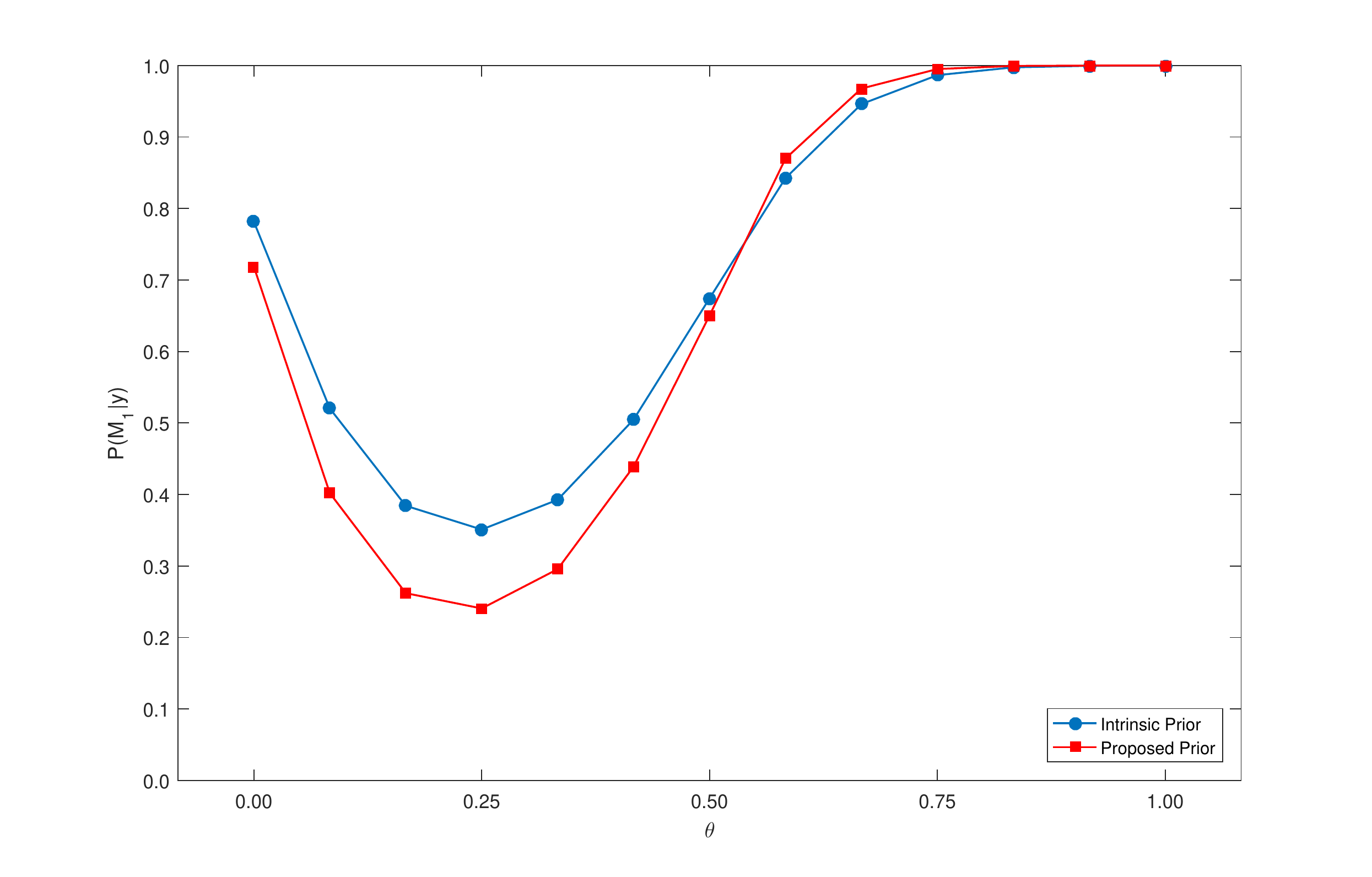}
\caption{Small sample evidence for the Binomial example. The graph shows the posterior probability for model $f(y|\theta)=\mbox{Bin}(n=12,\theta)$ using the proposed prior (squares) and the intrinsic prior (circles).}
\label{fig:example2_1}
\end{center}
\end{figure}
The priors yield model probabilities that are similar; in fact in both cases the lowest point is at $\theta=\theta_0$ and, the more $\theta$ moves away from $\theta_0$ the higher the posterior probability for $M_1$.

\subsection{Poisson regression}
In this section we show how the proposed prior can be used in estimating the parameters of a Poisson regression model. We will apply the prior to both simulated and real data.

Let us consider the following Poisson regression model
$$y_i\sim\mbox{Poisson}(\theta_i), \qquad i=1,\ldots,n$$
with
$$\theta_i = \exp\left(\sum_{j=1}^k x_{ij}\beta_j\right).$$
The covariates are the $x_{ij}$'s, with the corresponding vector of parameters $\boldsymbol\beta=\left(\beta_1,\ldots,\beta_k\right)$.
In this section we show how we make inference on the vector $\boldsymbol\beta$ by using the prior in the form $p(\boldsymbol\beta)=p(\beta_1)\times\cdots\times p(\beta_k)$, where $p(\beta_j)$ is the prior defined on the interval $(-\infty,\infty)$. Again, this prior is the symmetrised version of the prior on $(0,\infty)$.

We start by setting $k=5$, with $\boldsymbol\beta=(-0.8,-0.5,0,0.5,0.8)$, and we simulate the covariate values from a multivariate normal with vector of means $\boldsymbol\beta$ and variance matrix a diagonal matrix of dimension $k$. The vector of observations of the response variable $y$ is obtained from a Poisson distribution with mean $\exp(\mathbf{x}\boldsymbol\beta)$. Applying the algorithm in the Appendix, with 50,000 simulations and a burn-in period of 25,000, the posterior 95\% credible intervals for the $\beta$'s are represented in the caterpillar plot in Figure \ref{fig:poissregk5}.
\begin{figure}[h]
\centering
\includegraphics[scale=0.6]{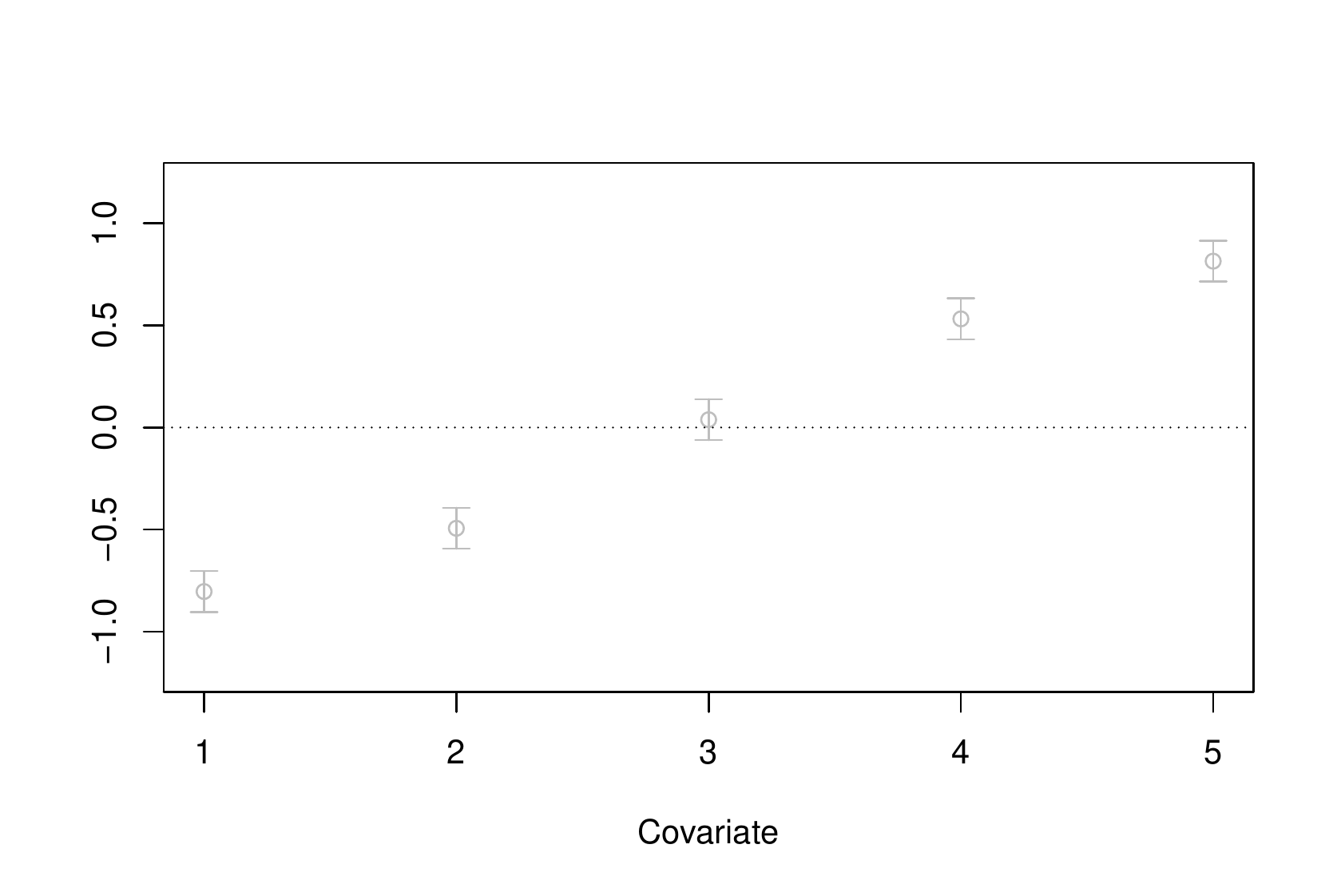}
\caption{Caterpillar plot representing the 95\% credible intervals of the $\beta$'s of a Poisson regression with $k=5$ covariates.}
\label{fig:poissregk5}
\end{figure}

We can see that the intervals contain the true parameter values, showing the effectiveness of the inferential approach using the proposed prior. To verify the robustness of the procedure, we have included random noise by adding, first 5 and then another 10, covariates with true value equal to zero. The results are in Figure \ref{fig:poissregk10} and Figure \ref{fig:poissregk20}. Again, we see that the posterior intervals comprise the true values with no evident impact on their size when noise is added.
\begin{figure}[h]
\centering
\includegraphics[scale=0.6]{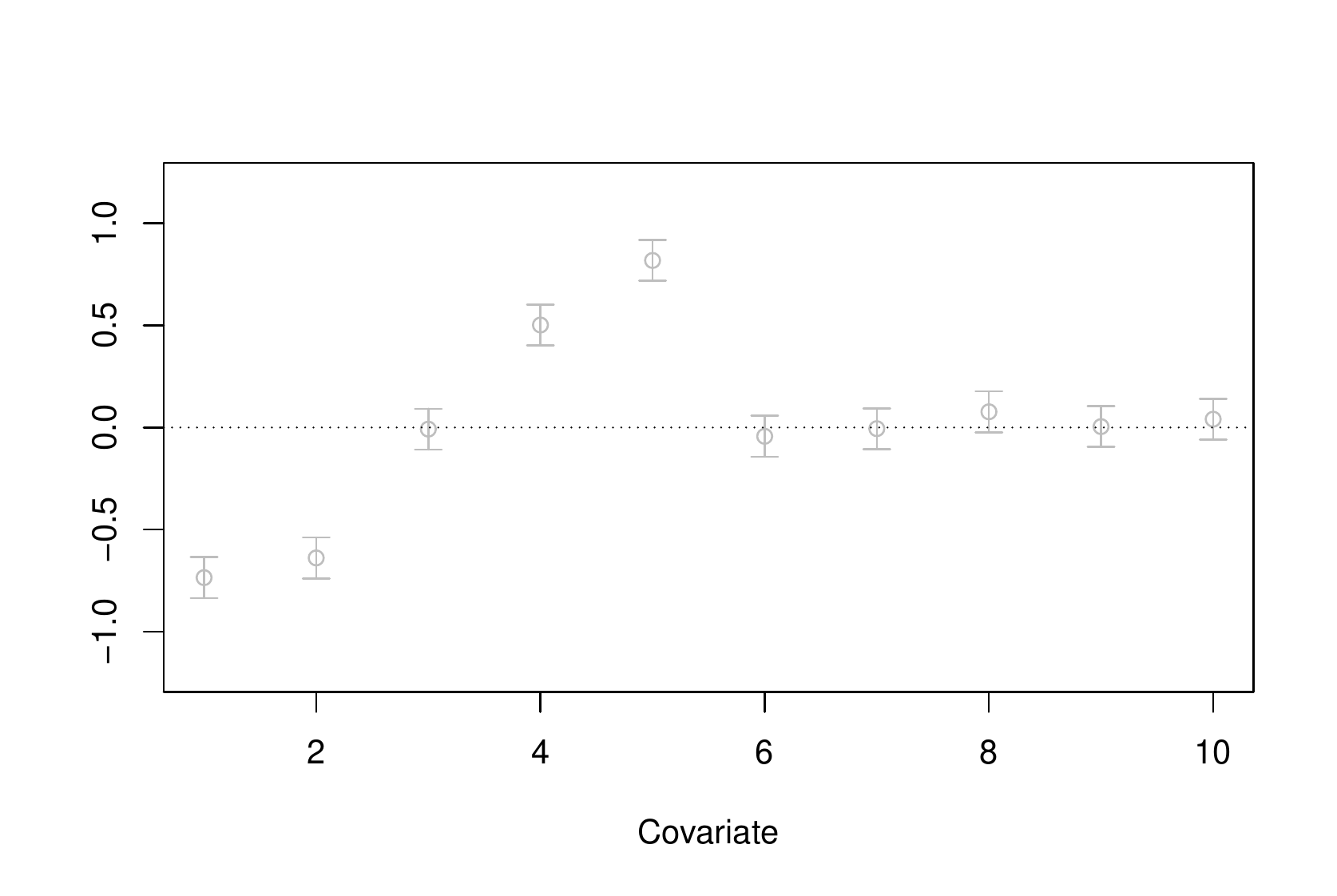}
\caption{Caterpillar plot representing the 95\% credible intervals of the $\beta$'s of a Poisson regression with $k=10$ covariates.}\label{fig:poissregk10}
\end{figure}
\begin{figure}[h]
\centering
\includegraphics[scale=0.6]{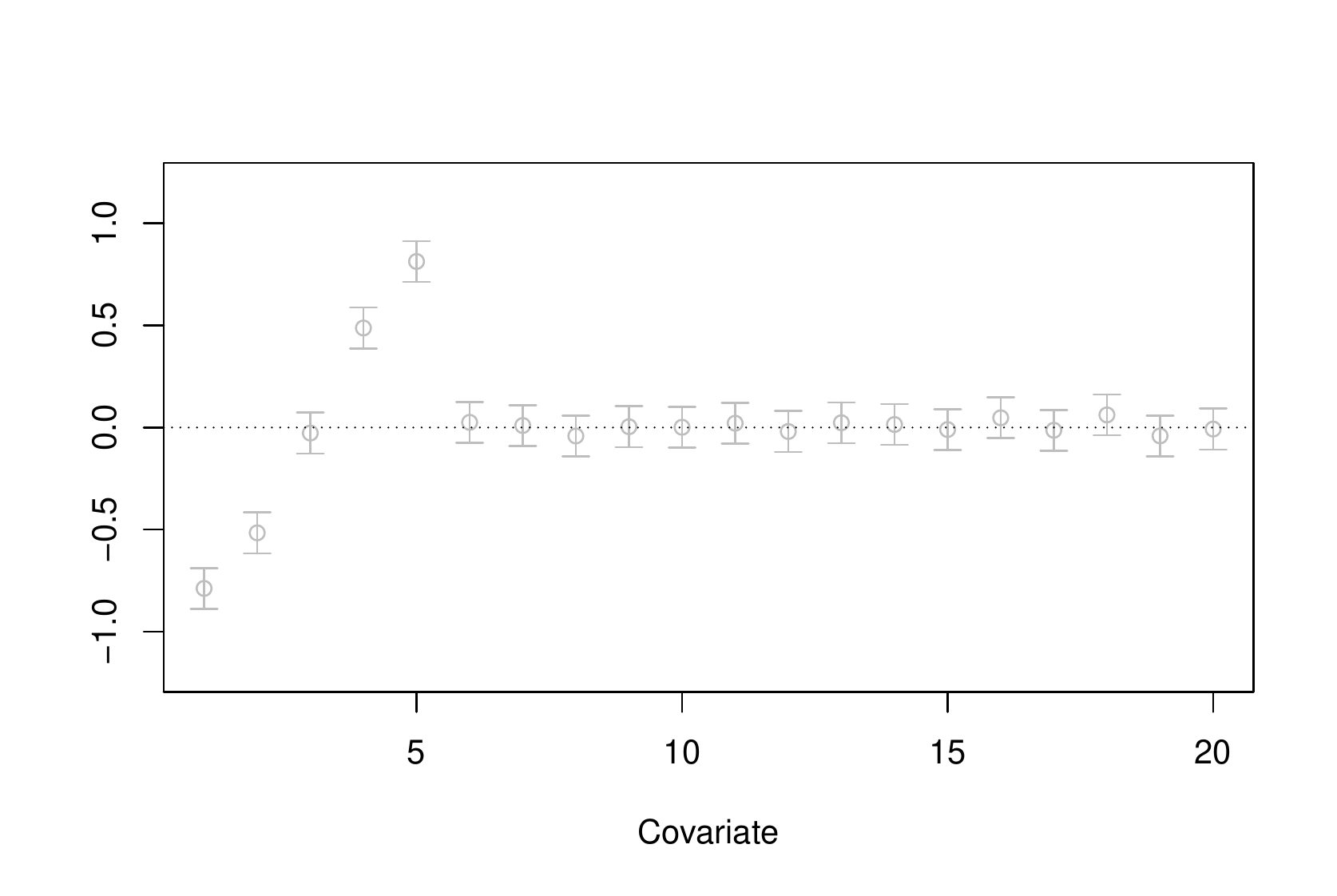}
\caption{Caterpillar plot representing the 95\% credible intervals of the $\beta$'s of a Poisson regression with $k=20$ covariates.}
\label{fig:poissregk20}
\end{figure}

To show the application of the prior on a real data set, we consider the $n=78$ observations on bilateral sanctions behaviour, for selected years, in the period 1939--1983 \citep{Martin1992}. In particular, we replicate the analysis in \cite{GoodLu2013}, where the number of sanctions (`num') is regressed on the political stability and economic health of the country targeted by the sanction ('target') and the level of international cooperation (`coop'). For the covariates, the coding is described in Tabl \ref{tab:regcoding}.

\begin{table}[H]
\centering
\begin{tabular}{|c|c|c|}
\hline 
 & \multicolumn{2}{c|}{Meaning} \\ 
\hline 
Value & `target' & `coop' \\ 
\hline 
1 & Severe stress & No cooperation \\ 
\hline 
2 & Mildy stable & Minor cooperation \\ 
\hline 
3 & Relatively stable & Modest cooperation \\ 
\hline 
4 & NA & Significant cooperation \\ 
\hline 
\end{tabular}
\caption{Coding of the covariates representing the political stability and economic wealth of the nation targeted by the sanction (values from 1 to 3) and of the level of international cooperation (values from 1 to 4).}\label{tab:regcoding}
\end{table}

Therefore, the Poisson regression model will be
$$y_i \sim \mbox{Poisson}\left(x_{i1}\beta_1 + x_{i2}\beta_2 + x_{i3}\beta_3\right), \qquad i=1,\ldots,78,$$
where $\beta_1$ is the intercept, $x_{i1}=1$, for $i=1,\ldots,78$, $x_{i2}$ is the political and wealth level of the target and $x_{i3}$ the level of international cooperation; $\beta_2$ and $\beta_3$ the corresponding coefficients. Assuming independent prior knowledge, we have $p(\beta_1,\beta_2,\beta_3)=p(\beta_1)p(\beta_2)p(\beta_3)$ and, as for the simulation above, we assign a prior defined on $(-\infty,\infty)$ to each coefficient.

In the first step of the analysis we have run 50,000 iterations for 4 chains per parameter, where the starting points have been randomly chosen. Figure \ref{fig:regmultichain} shows the first 10,000 iterations for each parameter where a fast convergence is clear. To obtain the inferential results, we have therefore focused on a chain and, considering a burn-in period of 25,000 iterations, we have obtained the trace plots and histograms in Figure \ref{fig:realreg_posttracehist} and the corresponding posterior statistics in Table \ref{tab:realdatacomp}. In Table \ref{tab:realdatacomp} we have included the posterior summaries obtained by using default priors, that is normal densities centered at 0 with large variance \citep{GoodLu2013}. We can see that the results are very similar, in terms of point estimate, standard deviation and size of the credible interval.
\begin{figure}[h!]
\begin{subfigure}{0.5\textwidth}
  \centering
  \includegraphics[scale=0.35]{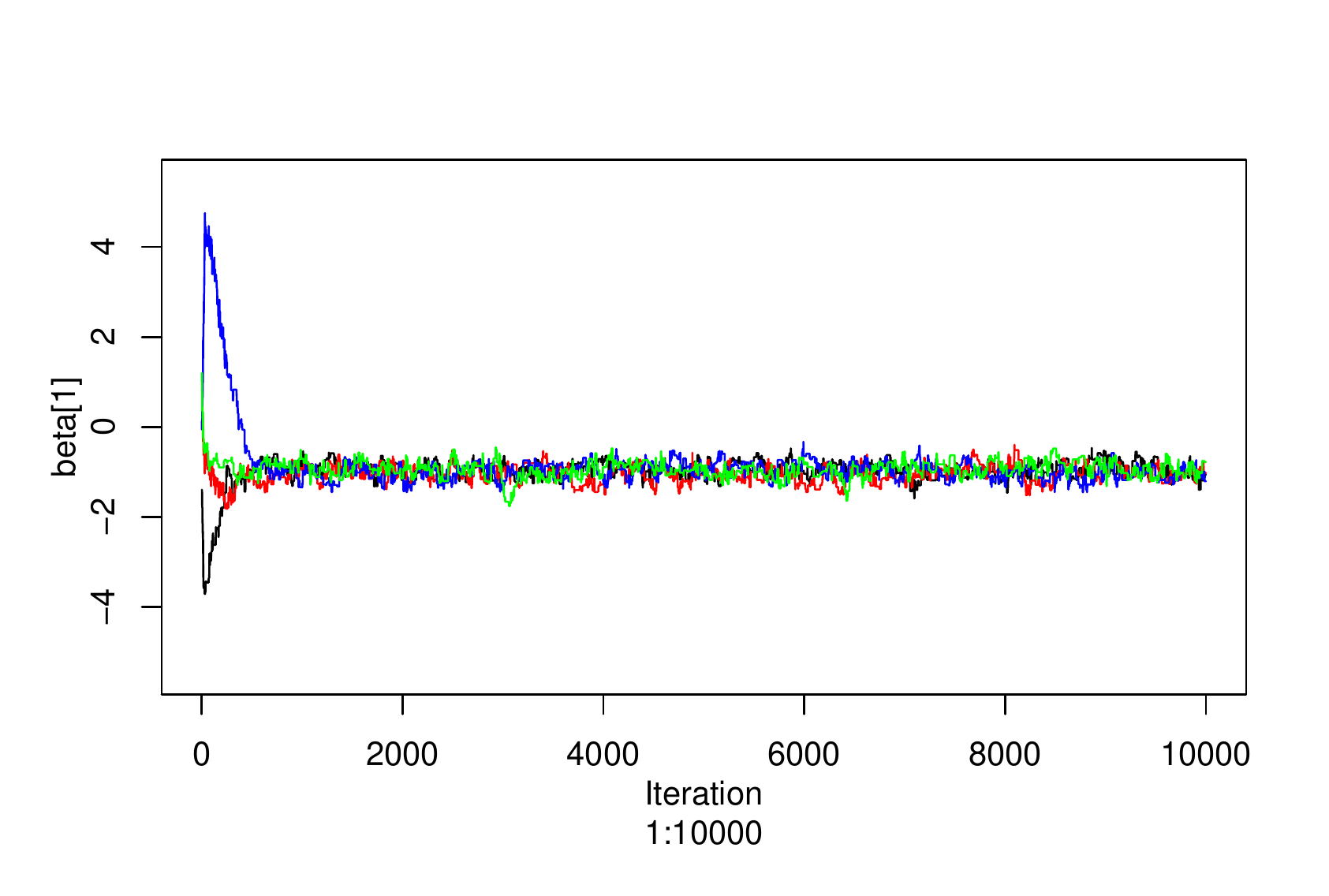}
  \caption{}
  \label{fig:multi1}
\end{subfigure}%
\begin{subfigure}{0.5\textwidth}
  \centering
  \includegraphics[scale=0.35]{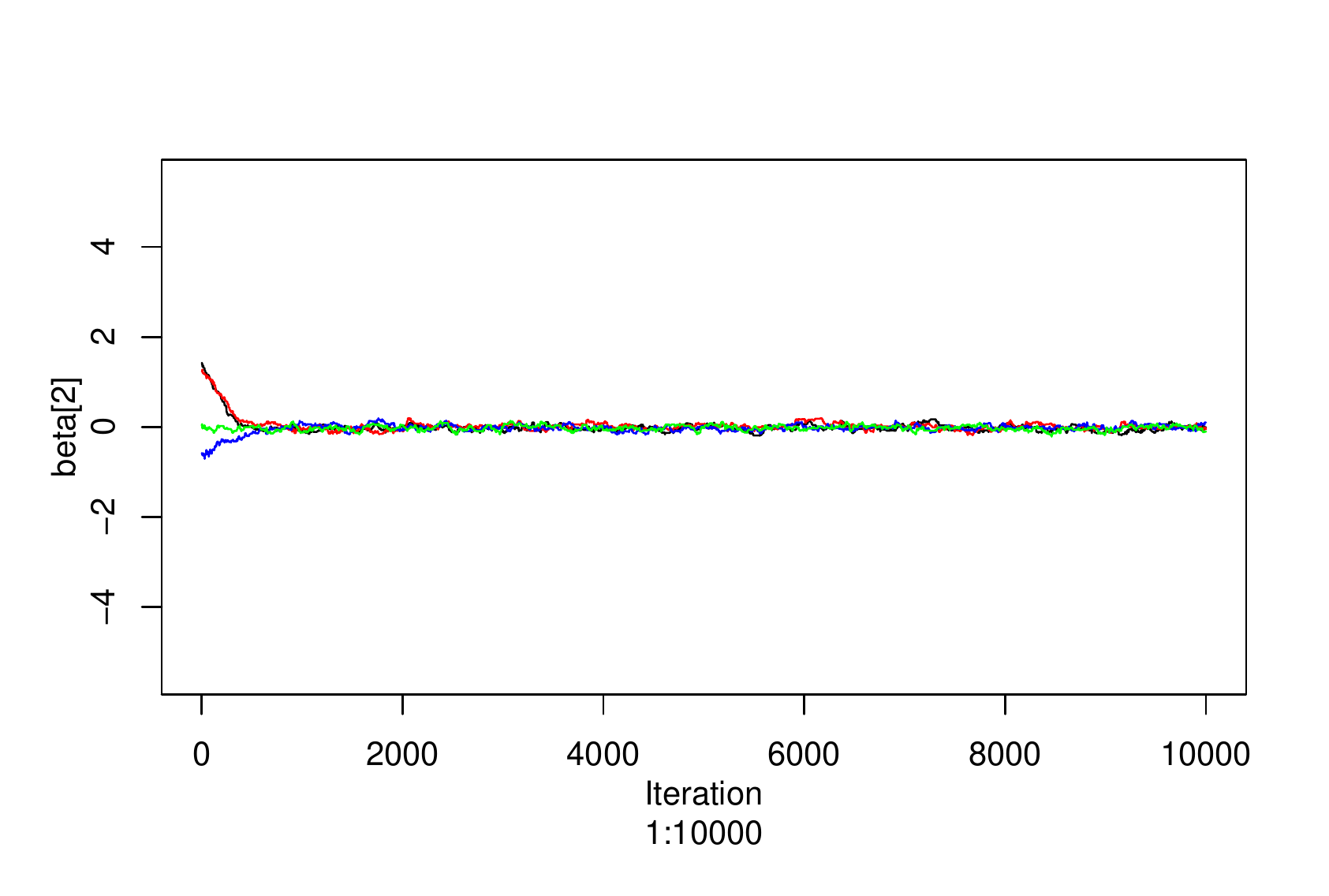}
  \caption{}
  \label{fig:multi2}
\end{subfigure}%
\vspace{0.1cm}
\begin{subfigure}{0.5\textwidth}
  \centering
  \includegraphics[scale=0.35]{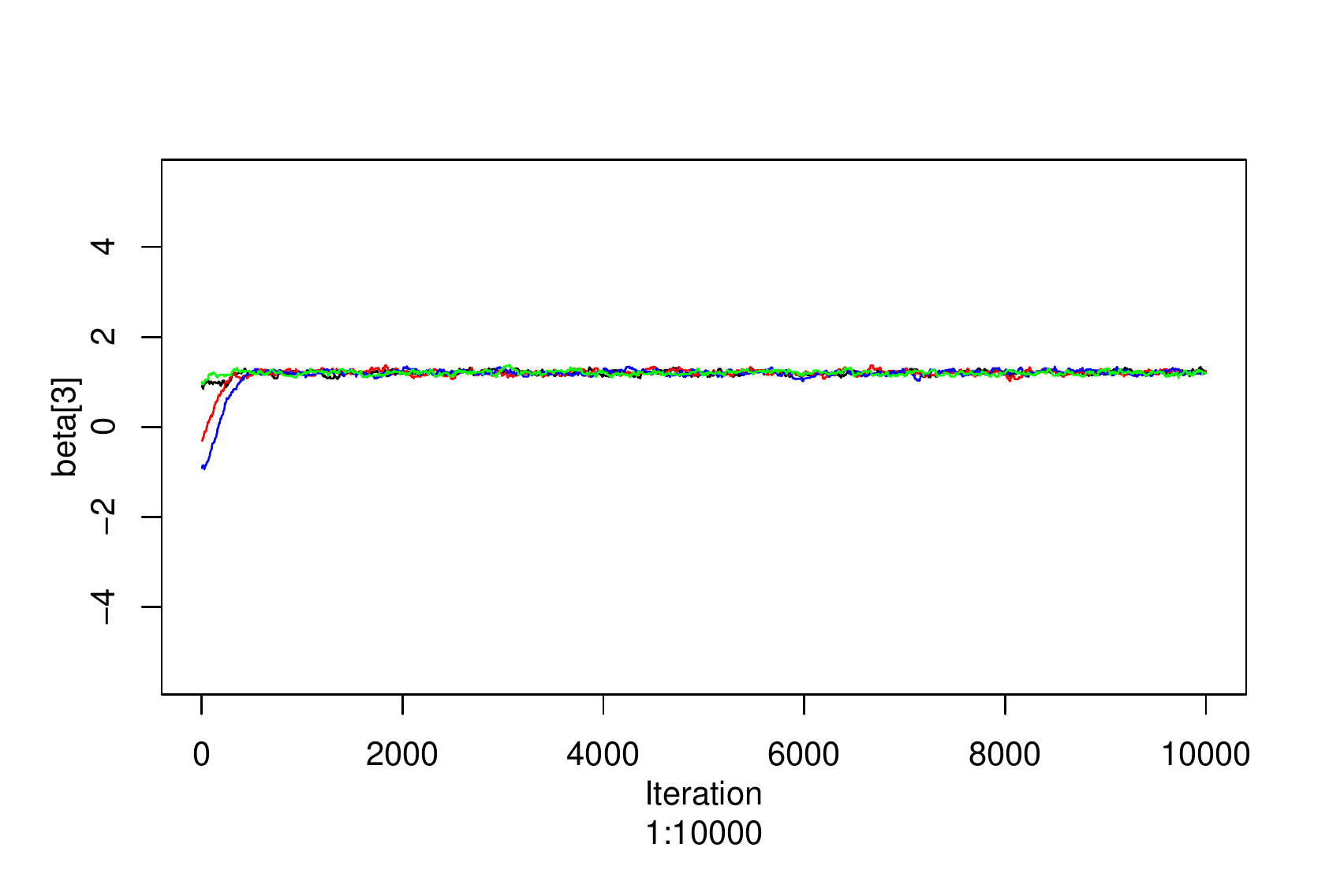}
  \caption{}
  \label{fig:multi3}
\end{subfigure}%
\caption{Multiple-trace plots of posterior chains for (a) $\beta_1$, (b) $\beta_2$ and (c) $\beta_3$, for the first 10,000 iterations.} 
\label{fig:regmultichain}
\end{figure}
\begin{figure}
\centering
\includegraphics[scale=0.8]{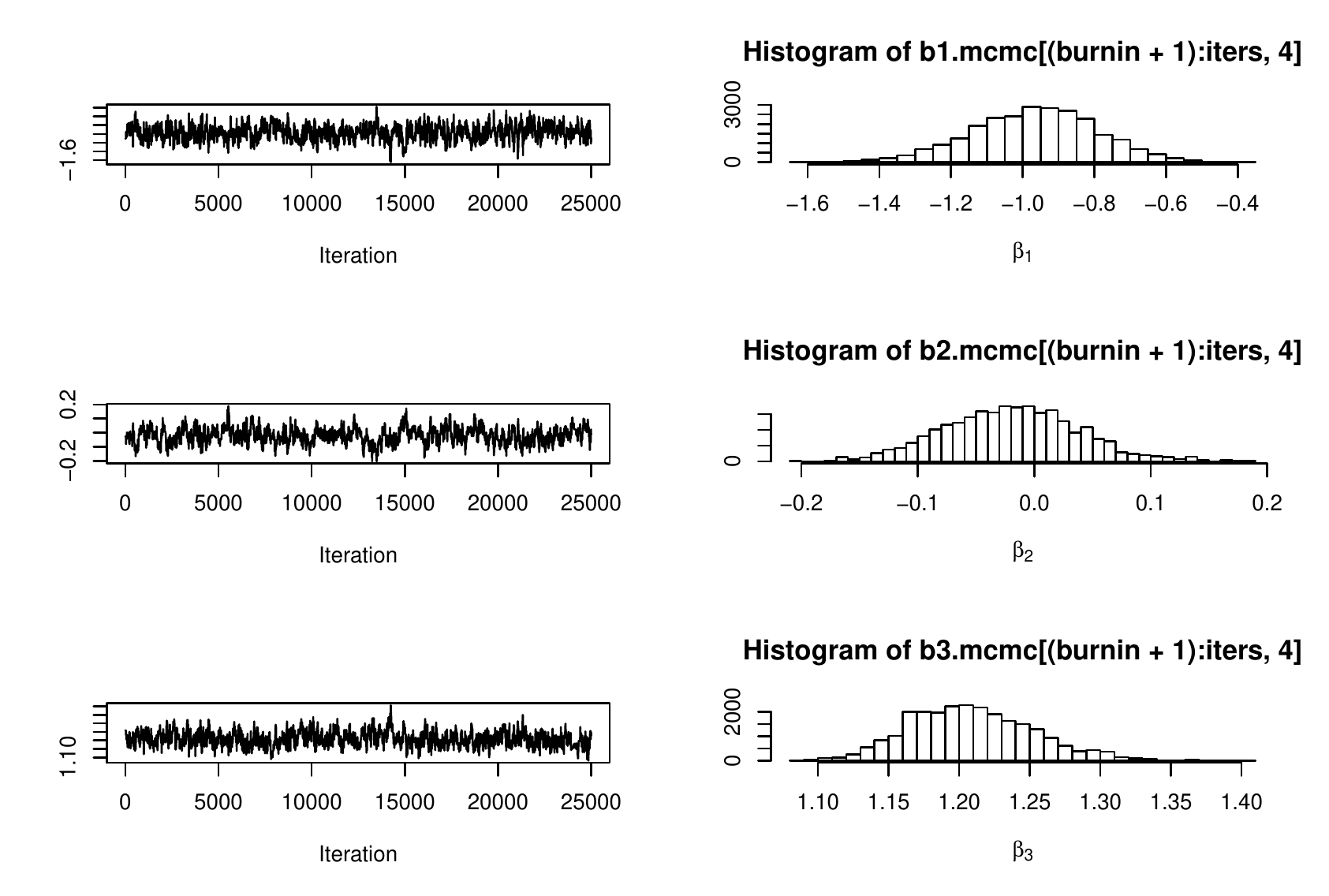}
\caption{Posterior trace plots and histograms for $\beta_1$ (top), $\beta_2$ (middle) and $\beta_3$ (bottom).}\label{fig:realreg_posttracehist}
\end{figure}

\begin{table}[H]
\centering
\begin{tabular}{|c|c|c|c|}
\hline 
 & \multicolumn{3}{c|}{Proposed Prior} \\ 
\hline 
 & Mean & sd & 95\% C.I. \\ 
\hline 
$\beta_1$ & -0.96 & 0.18 & (-1.35,-0.62) \\ 
\hline 
$\beta_2$ & -0.02 & 0.06 & (-0.13,0.09) \\ 
\hline 
$\beta_3$ & 1.21 & 0.05 & (1.13,1.30) \\ 
\hline 
\end{tabular} 
\quad
\begin{tabular}{|c|c|c|c|}
\hline 
 & \multicolumn{3}{c|}{Default Prior} \\ 
\hline 
 & Mean & sd & 95\% C.I. \\ 
\hline 
$\beta_1$ & -0.98 & 0.17 & (-1.33,-0.65) \\ 
\hline 
$\beta_2$ & -0.02 & 0.06 & (-0.13,0.09) \\ 
\hline 
$\beta_3$ & 1.21 & 0.05 & (1.12,1.31) \\ 
\hline 
\end{tabular} 
\caption{Posterior mean, standard deviation and 95\% credible interval for the real data analysis, under the proposed priors and the default priors.}\label{tab:realdatacomp}
\end{table}

\subsection{Simulation study}\label{sc_simulations}
For objective priors, it is common practice to analyse some frequentist properties of the yielded posteriors. Furthermore, when available, the performance is compared to the one of other objective priors. We consider two frequentist indexes: the coverage (COVE) of the $95\%$ credible interval of the posterior distribution and the square root of the mean squared error (MSE) from the posterior mean. Note that, for a scale parameter, say $\theta$, the MSE is computed relatively to the parameter value; that is $\sqrt{MSE(\theta)}/\theta$ (RMSE). This is necessary to have a more realistic quantification of the estimation accuracy, as one would expect the uncertainty to increase as the value of the parameter increases. The study compares the performances of the proposed prior with the ones of the appropriate Jeffreys prior, which can be considered as commonly accepted benchmark of reference.

It has to be noted that one would expect a larger RMSE when the proposed prior is used, in comparison to any model based prior. In fact, while the former used as information only the sample space, the latter includes the sampling distribution as well. Hence, the informational content of the posterior yielded by the proposed prior has to be less than the one of the posterior yielded by, say, the Jeffreys prior.

The first study considers the Poisson as sampling distribution. We repeat the procedure introduced in Section \ref{sc_application} on 250 samples drawn from a Poisson with $\theta=1,10,100,500$. The sample sizes considered are $n=3$, $n=10$, $n=30$ and $n=100$. The small sample sizes, although not very realistic, allow to better discern the performance of the proposed prior in comparison to the Jeffreys prior. Recalling that, if $x\sim\mbox{Poisson}(\theta)$, then the Jeffreys prior is $\pi_J(\theta)\propto\theta^{-1/2}$, we present the comparison in Tables \ref{tab:poiss3} to \ref{tab:poiss100}. 

\begin{table}[H]
\centering
\begin{tabular}{c|cc|cc}
\hline 
 & \multicolumn{2}{c|}{Proposed Prior} & \multicolumn{2}{c}{Jeffreys Prior} \\ 
\hline 
$\theta$ & RMSE & COVE & RMSE & COVE \\ 
\hline 
1 & 0.679 & 0.95 & 0.616 & 0.92 \\ 
10 & 0.184 & 0.94 & 0.180 & 0.96 \\ 
100 & 0.053 & 0.94 & 0.052 & 0.94 \\ 
500 & 0.026 & 0.92 & 0.025 & 0.94 \\ 
\hline 
\end{tabular} 
\caption{Frequentist analysis for $\theta$ for $n=3$, for the proposed objective prior and Jeffreys prior.}\label{tab:poiss3}
\end{table}

\begin{table}[H]
\centering
\begin{tabular}{c|cc|cc}
\hline 
 & \multicolumn{2}{c|}{Proposed Prior} & \multicolumn{2}{c}{Jeffreys Prior} \\ 
\hline 
$\theta$ & RMSE & COVE & RMSE & COVE \\ 
\hline 
1 & 0.316 & 0.97 & 0.300 & 0.96 \\ 
10 & 0.100 & 0.93 & 0.094 & 0.94 \\ 
100 & 0.034 & 0.92 & 0.033 & 0.93 \\ 
500 & 0.014 & 0.94 & 0.014 & 0.96 \\ 
\hline 
\end{tabular} 
\caption{Frequentist analysis for $\theta$ for $n=10$, for the proposed objective prior and Jeffreys prior.}\label{tab:poiss10}
\end{table}

\begin{table}[H]
\centering
\begin{tabular}{c|cc|cc}
\hline 
 & \multicolumn{2}{c|}{Proposed Prior} & \multicolumn{2}{c}{Jeffreys Prior} \\ 
\hline 
$\theta$ & RMSE & COVE & RMSE & COVE \\ 
\hline 
1 & 0.188 & 0.94 & 0.177 & 0.94 \\ 
10 & 0.056 & 0.98 & 0.054 & 0.95 \\ 
100 & 0.019 & 0.93 & 0.016 & 0.94 \\ 
500 & 0.008 & 0.94 & 0.007 & 0.97 \\ 
\hline 
\end{tabular} 
\caption{Frequentist analysis for $\theta$ for $n=30$, for the proposed objective prior and Jeffreys prior.}\label{tab:poiss30}
\end{table}

\begin{table}[H]
\centering
\begin{tabular}{c|cc|cc}
\hline 
 & \multicolumn{2}{c|}{Proposed Prior} & \multicolumn{2}{c}{Jeffreys Prior} \\ 
\hline 
$\theta$ & RMSE & COVE & RMSE & COVE \\ 
\hline 
1 & 0.093 & 0.95 & 0.091 & 0.97 \\ 
10 & 0.030 & 0.96 & 0.030 & 0.95 \\ 
100 & 0.010 & 0.96 & 0.009 & 0.94 \\ 
500 & 0.004 & 0.93 & 0.003 & 0.95 \\ 
\hline 
\end{tabular} 
\caption{Frequentist analysis for $\theta$ for $n=100$, for the proposed objective prior and Jeffreys prior.}\label{tab:poiss100}
\end{table}

As expected, the RMSE associated to the proposed prior is always larger than the RMSE yielded by the Jeffreys prior (the two exceptions are a mere consequence of rounding). However, the difference tends to be smaller the larger the sample size, given that the information in the data becomes stronger, so the prior information becomes less important. The coverage is in line with the behaviour one would expect from a minimally informative prior, and the MSE increases as the sample size decreases as there is less information contained in the data. A similar conclusion can be drawn for the second simulation study, where a normal with known variance is considered. In this case, with observations from $N(\mu,\sigma^2)$ (where $\sigma^2$ is assumed to be known), where $\mu=\{-5,-4,\ldots,0,\ldots,4,5\}$, Jeffreys prior is $\pi_J(\mu)\propto1$. The reported results have been obtained for $\sigma^2=1$ as there is no impact of the variance on the prior performances. 

By inspecting Figure \ref{fig:normal_freq}, we do not notice any sensible difference, in terms of frequentist performance, between the two priors, with the exception of a slightly larger MSE for the proposed prior (as expected).
\begin{figure}[h]
\begin{subfigure}{.5\textwidth}
  \centering
  \includegraphics[width=.8\linewidth]{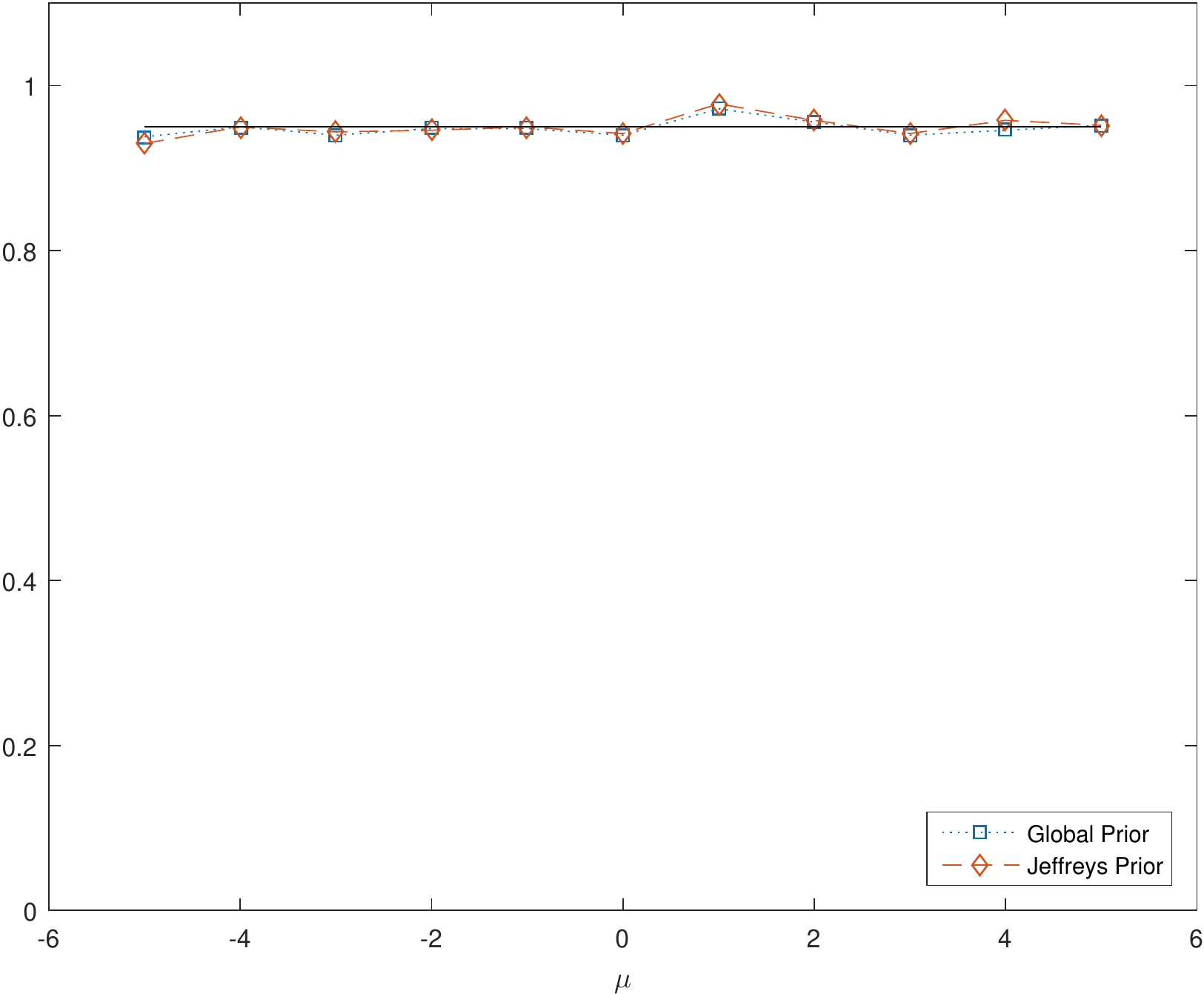}
  \caption{}
  \label{fig:norsfig11}
\end{subfigure}%
\begin{subfigure}{.5\textwidth}
  \centering
  \includegraphics[width=.8\linewidth]{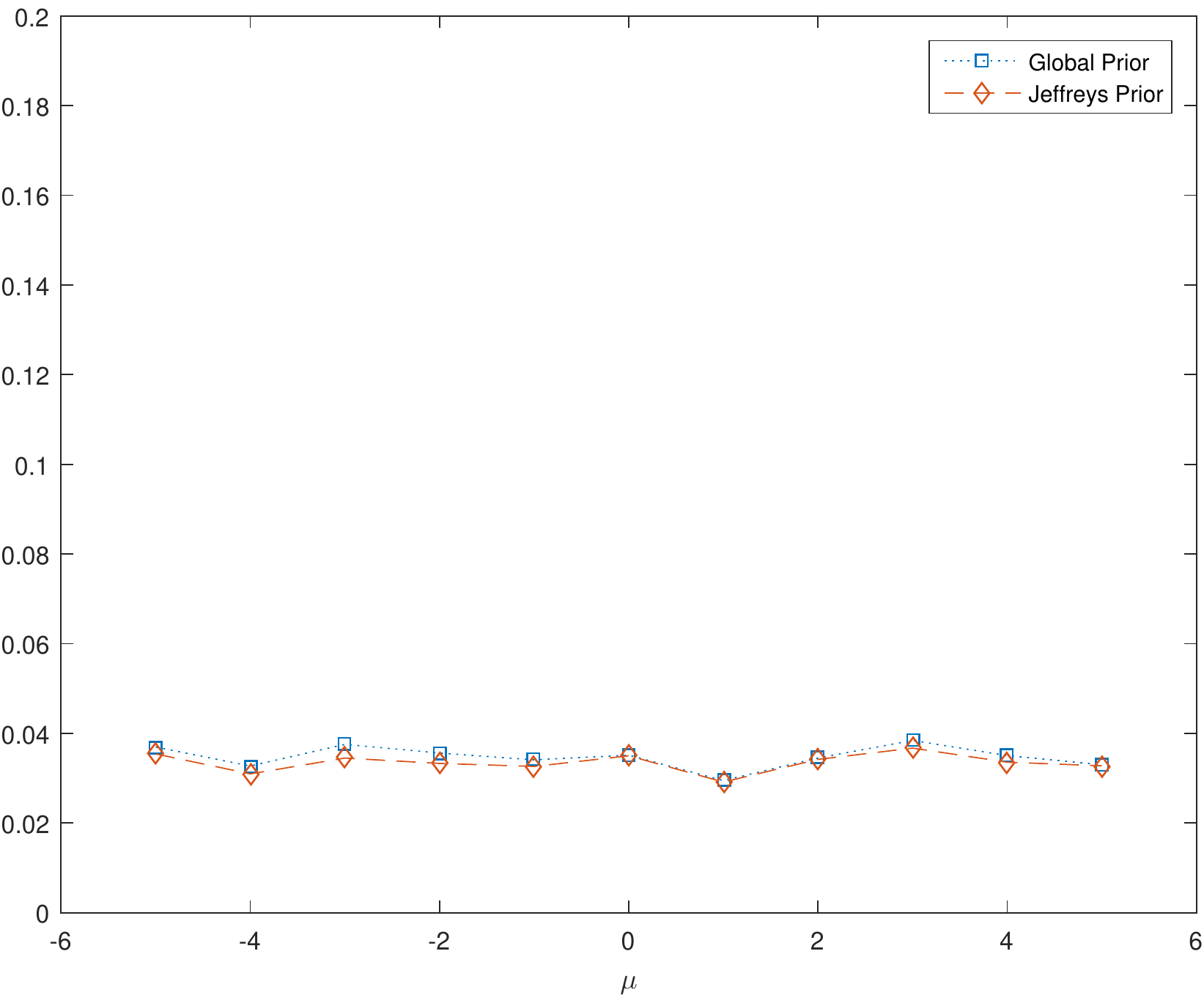}
  \caption{}
  \label{fig:norsfig12}
\end{subfigure}
\begin{subfigure}{.5\textwidth}
  \centering
  \includegraphics[width=.8\linewidth]{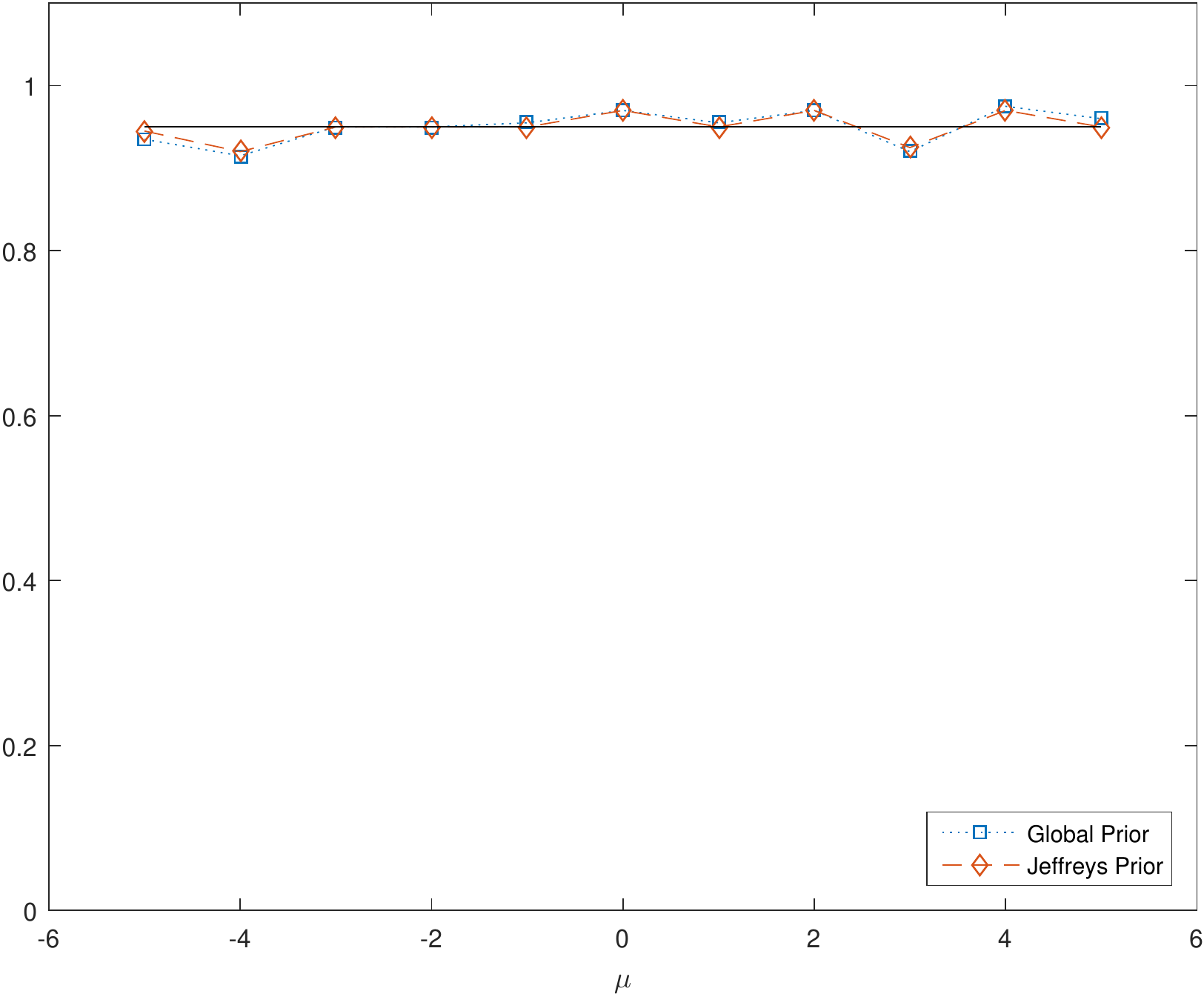}
  \caption{}
  \label{fig:norsfig21}
\end{subfigure}%
\begin{subfigure}{.5\textwidth}
  \centering
  \includegraphics[width=.8\linewidth]{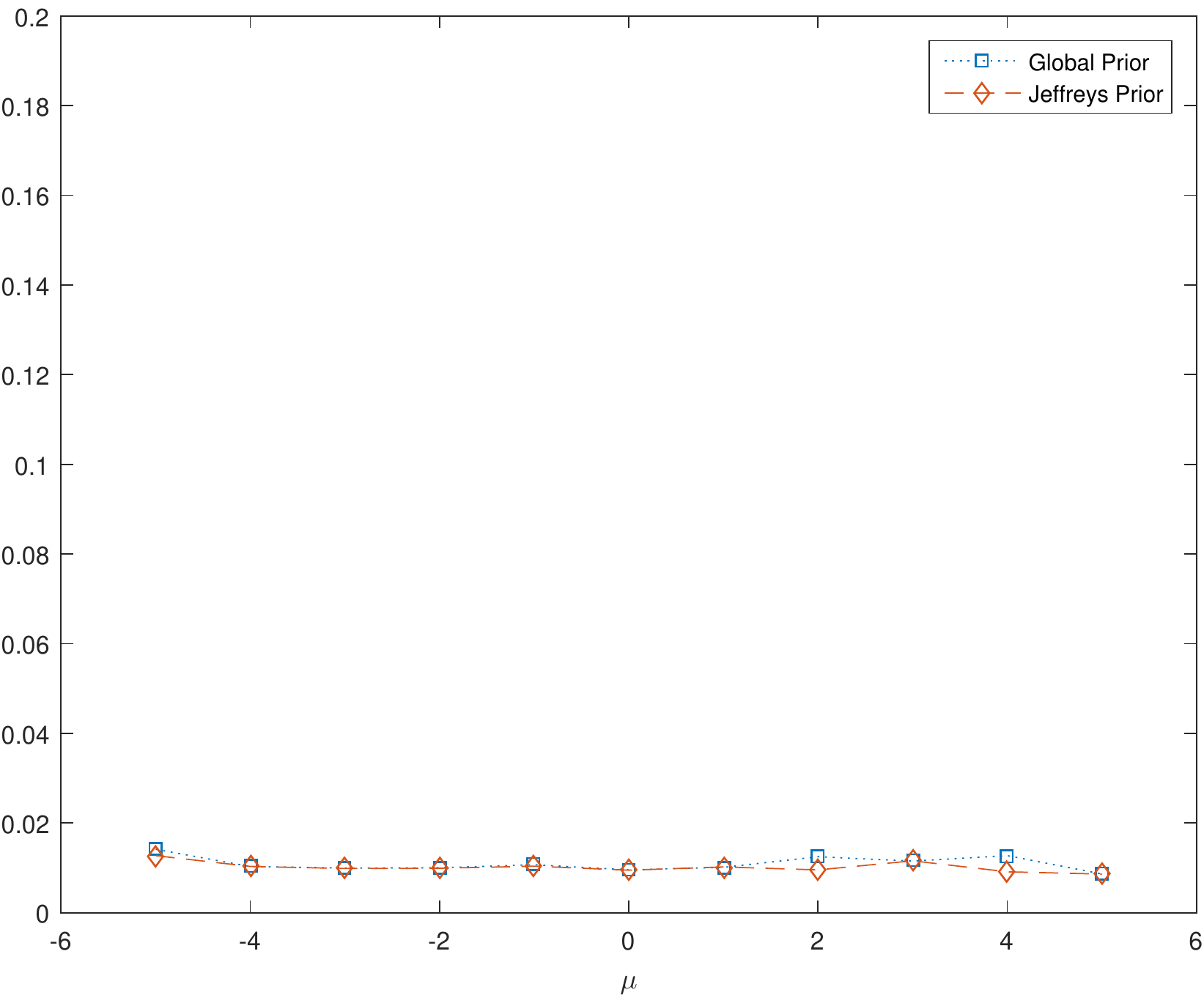}
  \caption{}
  \label{fig:norsfig22}
\end{subfigure}
\caption{Normal model. Coverage of the $95\%$ credible interval of the posterior for $\theta=1,2,\ldots,10$ for $n=30$ (a) and for $n=100$ (c), and MSE from the mean for $n=30$ (b) and for $n=100$ (d). Each graph shows the results for the proposed prior (blue square) and the Jeffreys prior (red diamond).}
\label{fig:normal_freq}
\end{figure}

\section{Discussion}\label{sc_discussion}
In this paper we have introduced a new class of objective priors derived from scoring rules. 
A remarkable aspect is that we have been able to show that the same result can be achieved via the rigour of calculus of variations, by finding objective priors which solve the Euler--Lagrange equation for finding extremum to integrals of the type
$$\int L(\theta,p,p')\,\d\theta.$$
If we can establish suitable choices of $L(\theta,p,p')$ which can be motivated and satisfy conditions for the existence of extremum, then new classes of objective prior can be sought. The case we have considered, which we can consider as a first step, is to use a combination of two well known measures of information in a prior density function; i.e.
$$L(\theta,p,p')=\half \frac{p'(\theta)^2}{p(\theta)}+p(\theta)\,\log p(\theta).$$
The objective priors here defined have two desirable properties. The first is that they are somewhat detached from the choice of the sampling distribution and are dependent on the parameter space only. In other words, the information required to derive the prior is limited to the range of values that the quantity of interest can take. 

The second property is that the prior can be proper. Besides the advantage of not having to check properness of the posterior, it allows to exploit the prior in scenarios where improper objective priors have been challenging. For example, as illustrated in Section \ref{sc_mixture}, the proposed prior is used to estimate the means of a mixture of normal densities with three components. Another potential application, discussed in Section \ref{sc_modcomp}, is in model selection. In particular, the objective prior may be used to represent minimal information on the parameters that are not common to two models. In fact, the Bayes factor used to compare two models is, in general, sensitive to the proportionality constant of improper priors. While for common parameters the constant will cancel out, this is not the case if the parameter is either at the numerator or at the denominator of the ratio only. Hence, the necessity of having a proper prior assigned to this kind of parameters.

The simulation study, aimed to compare the frequentist performances of the proposed prior with the ones of the Jeffreys prior, has shown no appreciable differences, with the exception of a slightly larger MSE for the proposed prior; the last result is expected as it is a consequence of the smaller information used to define the proposed prior in comparison with any model based objective prior.

\section*{Acknowledgements} The authors would like to thank Guido Consonni for feedback on an earlier draft of the paper. Fabrizio Leisen was supported by the European Community's Seventh Framework Programme [FP7/2007-2013] under grant agreement no: 630677.The third author is partially supported by NSF grant DMS 1612089.

\bibliographystyle{Natbib}

\begin{thebibliography}{00}

\bibitem[Berger(2006)]{Berger2006}
Berger, J.~O. (2006). The case for objective Bayesian analysis. {\em Bayesian Analysis}, {\bf 1}, 1--17.

\bibitem[Berger et al.(2009)]{BBS2009}
Berger, J.~O., Bernardo, J.~M. and Sun, D. (2009). The formal definition of reference priors. {\em Annals of Statistics}, {\bf 37}, 905--938.

\bibitem[Berger et al.(2012)]{BBS2012}
Berger, J.~O., Bernardo, J.~M. and Sun, D. (2012). Objective priors for discrete parameter spaces. {\em Journal of the American Statistical Association}, {\bf 107}, 636--648.

\bibitem[Berger et al.(2015)]{BBS2015}
Berger, J.~O., Bernardo, J.~M. and Sun, D. (2015). Overall objective priors (with discussion). {\em Bayesian Analysis}, {\bf 10}, 189--221.

\bibitem[Berger and Pericchi(1996)]{BergerPericchi1996}
Berger, J.~O. and Pericchi, L.~R. (1996). The intrinsic Bayes factor for model selection and prediction. {\em Journal of the American Statistical Association}, {\bf 91}, 109--122.

\bibitem[Berger et al.(1998)]{Bergeretal1998}
Berger, J.~O., Pericchi, L.~R., and Varshavsky, J. (1998). Bayes factors and marginal distributions in invariant situations. {\em Sankhya}, {\bf 60}, 109--122.

\bibitem[Berger and Strawderman(1993)]{BergerStraw1993}
Berger, J.~O. and Strawderman, W. (1993). Choice of hierarchical priors: admissibility in estimation of normal means. {\em Technical report}, 93-34C, Purdue University, Dept. of Statistics.

\bibitem[Bernardo(1979)]{Bernardo1979}
Bernardo, J.~M. (1979). Reference posterior distributions for Bayesian inference (with discussion). \emph{Journal of the Royal Statistical Society, Series B}, {\bf 41}, 113--147.

\bibitem[Bernardo(1997)]{Bernardo1997}
Bernardo, J.~M. (1979). Noninformative priors do not exist: a discussion with Jos\'{e} M. Bernardo. \emph{Journal of Statistical Planning and Inference}, {\bf 65}, 159--189.

\bibitem[Bernardo and Smith(1994)]{BS1994}
Bernardo, J.~M. and Smith, A.~F.~M. (1994). {\em Bayesian Theory}. John Wiley \& Sons, Inc., Hoboken, NJ, USA. doi: 10.1002/9780470316870.ch1. 

\bibitem[Bhattacharyya(1943)]{Bhatta1943}
Bhattacharyya, A.~K. (1943). On a measure of divergence between two statistical populations defined by their probability distribution. \emph{Bulletin of the Calcutta Mathematical Society}, {\bf 35}, 99--109.

\bibitem[Bobkov et al(2014)]{Bobkov2014}
Bobkov, S.~G., Gozlan, N., Roberto, C. and Samson, P.~M. (2014). Bounds on the deficit in the logarithmic Sobolev inequality. {\em Journal of Functional Analysis}, {\bf 267}, 4110--4138.

\bibitem[Box and Tiao(1973)]{BT1973}
Box, G.~E.~P. and Tiao, G.~C. (1973). {\em Bayesian Inference in Statistical Analysis}. Reading, MA: Addison-Wesley.

\bibitem[Clarke and Sun(1997)]{CS1997}
Clarke, B. and Sun, D. (1997). Reference priors under the chi-square distance. \emph{Sankhy\={a} A}, {\bf 59}, 215--231.

\bibitem[Clarke and Sun(1999)]{CS1999}
Clarke, B. and Sun, D. (1999). Asymptotics of the expected posterior. \emph{Annals of the Institute of Statistical Mathematics}, {\bf 51}, 163--185.

\bibitem[Clarke and Wasserman(1993)]{CW1993}
Clarke, B. and Wasserman, L. (1993). Noninformative priors and nuisance parameters. \emph{Journal of the American Statistical Association}, {\bf 88}, 1472--1432.

\bibitem[Clarke and Wasserman(1995)]{CW1995}
Clarke, B. and Wasserman, L. (1995). Information trade-off. \emph{TEST}, {\bf 4}, 19--38.

\bibitem[Consonni et al.(2013)]{CFR2013}
Consonni, G., Forster, J.~J. and La Rocca, L. (2013). The Whetstone and the  alum block: Balanced objective Bayesian comparison of nested models for discrete data.
{\em Statistical Science}, {\bf 28}, 398--423. 

\bibitem[Consonni et al.(2018)]{Consonni2018}
Consonni, G., Fouskakis, D., Liseo, B. and Ntzoufras, I. (2018). Prior distributions for objective Bayesian analysis. {\em Bayesian Analysis}, {\bf 13}, 627--679.

\bibitem[Dawid(1983)]{Dawid1983}
Dawid, A.~P. (1983). Invariant prior distributions. In \emph{Encyclopedia of Statistical Sciences}, eds. S. Kotz and N.~L. Johnson, New York: John Wiley.

\bibitem[Dey et al.(1993)]{Deyetal1993}
Dey, D.~K., Gelfand, A.~E. and Peng, F. (1993). Overdispersed generalized linear models. {\em Technical report}, University of Connecticut, Dept. of Statistics.

\bibitem[de Finetti(1974)]{Definetti1974}
de Finetti, B. (1974). \emph{Theory of Probability}, Wiley, New York.

\bibitem[Fonseca et al.(2008)]{Fonseca2008}
Fonseca, T.~C.~O. , Ferreira, M.~A.~R. and Migon, H.~S. (2008). Objective Bayesian analysis for the Student-T regression model. {\em Biometrika}, {\bf 95}, 325--333.

\bibitem[Ghosh(2011)]{Ghosh2011}
Ghosh, M. (2011). Objective priors: an introduction for frequentists. \emph{Statistical Science}, {\bf 26}, 187--202.

\bibitem[Ghosh et al.(2006)]{GDS2006}
Ghosh, J.~K., Delampady, M. and Samanta, T. (2006). \emph{An Introduction to Bayesian Analysis}. Springer, New York.

\bibitem[Ghosh et. al.(2011)]{Ghoshetal2011}
Ghosh, M., Mergel, V. and Liu, R. (2011). A general divergence criterion for prior selection. {\em Annals of the Institute of Statistical Mathematics}, {\bf 63}, 43--58.

\bibitem[Ghosh and Mukerjee(1992)]{GhoshMuke1992}
Ghosh, M. and Mukerjee, R. (1992). Non-informative priors. {\em Bayesian Statistics 4}, eds. J.~M. Bernardo, J.~O. Berger, A.~P. Dawid and A.~F.~M. Smith, Oxford, U.K.: Clarendon Press.

\bibitem[Goodrich and Lu(2013)]{GoodLu2013}
Goodrich, B. and Lu, Y. (2013). Poisson-Bayes: Bayesian Poisson regression
in Christine Choirat, Christopher Gandrud, James Honaker, Kosuke Imai, Gary King, and Olivia Lau, \emph{Zelig: Everyone's Statistical Software}, http://zeligproject.org/

\bibitem[Grazian and Robert(2018)]{Grazian2015}
Grazian, C. and Robert, C.~P. (2018). Jeffreys priors for mixture estimation: properties and alternatives. {\em Computational Statistics and Data Analysis}, {\bf 121}, 149--163.

\bibitem[Gronwall(1919)]{Gronwall1919}
Gronwall, T.~H. (1919). Note on the derivatives with respect to a parameter of the solutions of a system of differential equations. {\em Annals of Mathematics}, {\bf 20}, 292--296. 

\bibitem[Hartigan(1964)]{Harti1964}
Hartigan, J.~A. (1964). Invariant prior distributions. {\em Annals of Mathematical Statistics}, {\bf 35}, 836--845. 

\bibitem[Hyv\"{a}rinen(2005)]{Hyva2005}
Hyv\"{a}rinen, A. (2005). Estimation of non-normalized statistical models by score matching. {\em Journal of Machine Learning Research}, {\bf 6}, 695--709.

\bibitem[Ibrahim and Laud(1991)]{IbraLaud1991}
Ibrahim, J.~G. and Laud, P.~W. (1991). On Bayesian analysis of generalized linear models using Jeffreys's prior. {\em Journal of the American Statistical Association}, {\bf 86}, 981--986.

\bibitem[Kass(1990)]{Kass1990}
Kass, R.~E. (1990). Data-translate likelihood and and Jeffreys's rule. {\em Biometrika}, {\bf 77}, 107--114.

\bibitem[Kass and Wasserman(1996)]{KassWass1996}
Kass, R.~E. and Wasserman, L. (1996). The selection of prior distributions by formal rules. {\em Journal of the American Statistical Association}, {\bf 91}, 1343--1370.

\bibitem[Jaynes(1957)]{Jaynes1957}
Jaynes, E.~T. (1957). Information theory and statistical mechanics I, II. {\em Physical Review}, {\bf 106}, 620--630; {\bf 108}, 171--190.

\bibitem[Jaynes(1968)]{Jaynes1968}
Jaynes, E.~T. (1968). Prior probabilities. {\em IEEE Transactions on Systems Science and Cybernetics}, {\bf SSC-4}, 227--241.

\bibitem[Jeffreys(1946)]{Jeff1946}
Jeffreys, H. (1946). An invariant form for the prior probability in estimation problems. {\em Proceedings of the Royal Society of London}, Ser. A, {\bf 186}, 453--461.

\bibitem[Jeffreys(1961)]{Jeff1961}
Jeffreys, H. (1961). \emph{Theory of Probability and Inference}, 3rd ed. Cambridge University Press, London.

\bibitem[Laplace(1820)]{Lap1820}
Laplace, P.~S. (1820). {\em Essai Philosophique sue les Probbailit\'{e}s}.

\bibitem[Martin(1992)]{Martin1992}
Martin, L. (1992). \emph{Coercive Cooperation: Explaining Multilateral Economic Sanctions}. Princeton: Princeton University Press.

\bibitem[Merhav and Feder(1998)]{Merhav:Feder:1998}
Merhav, N. and Feder, M. (1998). Universal prediction. {\em IEEE Transactions on Information Theory}, {\bf 44}, 2124--2147.

\bibitem[Natarajan and McCulloch(1995)]{NataMcC1995}
Natarajan, R. and McCulloch, C.~E. (1995). A note on the existence of the posterior distribution for a class of mixed models for binomial responses. {\em Biometrika}, {\bf 82}, 639--643.

\bibitem[O'Hagan(1995)]{OHagan1995}
O'Hagan, A. (1995). Fractional Bayes factors for model comparison. {\em Journal of the Royal Statistical Society, Series B}, {\bf 57}, 99--138.

\bibitem[Parry et al.(2012)]{Parry2012}
Parry, M., Dawid, A. P. and Lauritzen S. (2012). Proper local scoring rules. {\em Annals of Statistics}, {\bf 40}, 561--592.

\bibitem[Rissanen(1983)]{Rissan1983}
Rissanen, J. (1983). A universal prior for integers and estimation by minimum description length. {\em Annals of Statistics}. {\bf 11}, 416--431.

\bibitem[Rubio and Liseo(2014)]{RB2014}
Rubio, F.~J. and Liseo, B. (2014). On the independence Jeffreys prior for skew-symmetric models. {\em Statistics \& Probability Letters}, {\bf 85}, 91--97.

\bibitem[Rubio and Steel(2018)]{RubioSteel2018}
Rubio, F.~J. and Steel, M.~F.~J. (2018). Flexible linear mixed models with improper priors for longitudinal and survival data. {\em Electronic Journal of Statistics}, {\bf 12}, 572--598.

\bibitem[Rustagi(1976)]{Rustagi1976}
Rustagi, J.S. (1976). \emph{Variational Methods in Statistics}. Academic Press.

\bibitem[Shannon(1948)]{Shannon1948}
Shannon, C.~E. (1948). A mathematical theory of communication. {\em Bell System Technical Journal}, {\bf 27}, 379--423.

\bibitem[Stone and Dawid(1972)]{StoneDavid72}
Stone, M. and Dawid, A. (1972). Un-Bayesian implications of improper Bayes inference in
routine statistical problems. {\em Biometrika}, {\bf 59}, 369--375.

\bibitem[Syversveen(1998)]{Syver1998}
Syversveen, A.~R. (1998). Noninformative Bayesian priors. Interpretation and problems with construction and applications. {\em Technical Report}.

\bibitem[Simpson et al.(2017)]{Simpson2017}
Simpson, D., Rue, H., Riebler, A., Martins, T.~G. and S\o rbye, S.~H. (2017). Penalising model component complexity: a principled, practical approach to constructing priors. {\em Statistical Science}, {\bf 32}, 1--28.

\bibitem[Sweeting(2008)]{Swee2008}
Sweeting, T.~J. (2008). On predictive probability matching priors. {\em IMS Collections: Pushing the Limits of Contemporary Statistics: Contributions in Honor of Jayanta K. Ghosh}, eds. B. Clarke and S. Ghosal. {\bf 3}, 46-59.

\bibitem[Sweeting(2011)]{Swee2011}
Sweeting, T.~J. (2011). Invited discussion of M. Ghosh : Objective priors: an introduction for frequentists. {\em Statistical Science} {\bf 26}, 206-209.

\bibitem[Sweeting et al.(2006)]{Sweetetal2006}
Sweeting, T.~J., Datta, G.~S. and Ghosh, M. (2006). Nonsubjective priors via predictive relative entropy loss. {\em Annals of Statistics}. {\bf 34}, 441-468.

\bibitem[Titterington et al.(1985)]{Titt1985}
Titterington, D., Smith, A. and Makov, U. (1985). {\em Statistical Analysis of Finite Mixture Distributions}. John Wiley, New York.

\bibitem[Vallejos and Steel(2013)]{ValleSteel2013}
Vallejos, C.~A. and Steel, M.~J.~F. (2013). {\em On posterior propriety for the Student-$t$ linear regression model under Jeffreys priors}. arxiv:1311.1454.

\bibitem[Villa and Walker(2015)]{VW2015}
Villa, C. and Walker, S.~G. (2015). An objective approach to prior mass functions for discrete parameter spaces. \emph{Journal of the American Statistical Association}, {\bf 110}, 1072--1082.

\bibitem[Walker(2016)]{W16}
Walker, S.~G. (2016). Bayesian information in an experiment and the Fisher information distance. \emph{Statistics and Probability Letters}, {\bf 112}, 5--9. 

\bibitem[Welch and Peers(1963)]{WP1963}
Welch, B.~L. and Peers, H.~W. (1963). On formulae for confidence points based on integrals of weighted likelihoods. \emph{Journal of the Royal Statistical Society, Series B}, {\bf 35}, 318--329.

\bibitem[Yand and Chen(1995)]{YC1995}
Yang, R. and Chen, M.~H. (1995). Bayesian analysis for random coefficient regression models using noninformative priors. {\em Journal of Multivariate Analysis}, {\bf 55}, 283--311

\bibitem[Zellner and Min(1993)]{ZellMin1993}
Zellner, A. and Min, C. (1993). Bayesian analysis model selection and prediction. In {\em Physics and Probability: Essays in honour of Edwin T. Jaynes}, eds. W.~T. Grandy, Jr. and P.~W. Milonni, Cambridge, U.K.: Cambridge University Press.

\end{thebibliography}

\newpage
\appendix
\section*{Appendix - MCMC algorithm}
The following algorithm produces a MCMC sample $\theta^{(0)}, \theta^{(1)}, ...,\theta^{(T)}$ from the posterior distribution. Suppose to fix the initial value $\theta^{(0)}$ and repeat the following procedure for $t=1,\dots,T$. 
\begin{center}
\begin{algorithm}{Metropolis-Hastings for the prior}
\label{alg1}
\par\vspace{5pt}\hrule
\bigskip
Suppose that $\theta^{(t)}$ is the current state of the chain:
\vspace{0.3cm}
\begin{enumerate}
\item Draw $\theta'$ from a proposal distribution $q(\cdot|\theta^{(t)})$
\item Evaluate $u(\theta')-u(\theta^{(t)})$ numerically, via 
\begin{equation*}
\begin{split}
u(\theta')&=u(\theta^{(t)})+(\theta'-\theta^{(t)})u'(\theta^{(t)})+\frac{(\theta'-\theta^{(t)})^2}{2}u''(\theta^{(t)})\\&+\frac{(\theta'-\theta^{(t)})^3}{6}u'''(\theta^{(t)})+R,
\end{split}
\end{equation*}
where
\begin{eqnarray}
u'(\theta^{(t)}) &=& \sqrt{ce^{u(\theta^{(t)})}-2(1+u(\theta^{(t)}))}\nonumber\\
u''(\theta^{(t)}) &=& \half ce^{u(\theta^{(t)})}-1\nonumber\\
u'''(\theta^{(t)}) &=& \half ce^{u(\theta^{(t)})}\sqrt{ce^{u(\theta^{(t)})}-2(1+u(\theta^{(t)}))}\nonumber
\end{eqnarray}
\item Compute $p(\theta')/p(\theta^{(t)})=\exp\{-[u(\theta')-u(\theta^{(t)})]\}$
\item Set $\theta^{(t+1)}=\theta'$ with probability
\begin{equation}
\alpha=\min\left\lbrace 1,\frac{l(\theta')}{l(\theta^{(t)})}\frac{p(\theta')}{p(\theta^{(t)})}\frac{q(\theta^{(t)}|\theta')}{q(\theta'|\theta^{(t)})}\right\rbrace,\nonumber
\end{equation}
and $\theta^{(t+1)}=\theta^{(t)}$ with probability $1-\alpha$
\end{enumerate}
\end{algorithm}
\hrule\vspace{5pt}
\end{center}
Depending on how far $\theta'$ is from $\theta$ we can either use the direct approximation in step 3 with higher order derivatives or otherwise get from $\theta$ to $\theta'$ using smaller step sizes.

\end{document}